\newtheorem{theorem}{Theorem}
\newtheorem{lemma}[theorem]{Lemma}
\newtheorem{example}[theorem]{Example}
\newtheorem{remark}[theorem]{Remark}
\newtheorem{assumption}[theorem]{Assumption}
\newtheorem{corollary}[theorem]{Corollary}
\newcommand{{\Prob}}{\mathbb P}
\newcommand{\mypar}[1]{\vspace{0.03in}\noindent{\bf #1.}}
\begin{document}

\title{Distributed inference over directed networks: Performance limits and optimal design}

\author{Dragana Bajovi\'c, Jos\'e M. F. Moura, Jo\~{a}o Xavier, Bruno Sinopoli
\thanks{D. Bajovi\'c is with University of Novi Sad, BioSense Center, Novi Sad, Serbia.
 Jos\'e M. F. Moura and Bruno Sinopoli are with Department of Electrical and Computer Engineering, Carnegie Mellon University, Pitssburgh, PA, USA.
 Jo\~{a}o Xavier is with the Institute for Systems and Robotics, Instituto Superior T\'ecnico, University of Lisbon, Lisbon, Portugal. Authors' e-mails: dbajovic@uns.ac.rs, [moura, brunos]@ece.cmu.edu, jxavier@isr.ist.utl.pt.}}

\maketitle

\begin{abstract}
We find large deviations rates for consensus-based distributed inference for directed networks. When the topology is deterministic, we establish the large deviations principle and find exactly the corresponding rate function, equal at all nodes. We show that the dependence of the rate function on the stochastic weight matrix associated with the network is fully captured by its left eigenvector corresponding to the unit eigenvalue. Further, when the sensors' observations are Gaussian, the rate function admits a closed form expression. Motivated by these observations, we formulate the optimal network design problem of finding the left eigenvector which achieves the highest value of the rate function, for a given target accuracy. This eigenvector therefore minimizes the time that the inference algorithm needs to reach the desired accuracy. For Gaussian observations, we show that the network design problem can be formulated as a semidefinite (convex) program, and hence can be solved efficiently. When observations are identically distributed across agents, the system exhibits an interesting property: the graph of the rate function always lies between the graphs of the rate function of an isolated node and the rate function of a fusion center that has access to all observations. We prove that this fundamental property holds even when the topology and the associated system matrices change randomly over time, with arbitrary distribution. Due to generality of its assumptions, the latter result requires more subtle techniques than the standard large deviations tools, contributing to the general theory of large deviations.
\end{abstract}
%
\begin{IEEEkeywords}
Distributed inference, large deviations analysis, rate function, large deviations principle, directed topologies, random networks, time-correlated networks, consensus algorithms.
\end{IEEEkeywords}

\maketitle \thispagestyle{empty} \maketitle

\section{Introduction}
\label{sec-Intro}
The field of wireless sensor networks (WSN) has significantly evolved since its beginnings about two decades ago. Starting from wildlife monitoring, smart housing, and building and infrastructure surveillance~\cite{Aki02-SensorNetsSurvey}, the applications of WSNs have grown both in diversity and in scale. They now include monitoring and control of some highly complex large scale systems, such as vehicular networks and electric power grids. One of the important emerging trends in this field are also networks consisting of thousands of very small and simple sensing devices, such as microrobots~\cite{Abott07-MicroRobots} and nano-networks~\cite{Aki10-NanoNetworks}.

Due to the increased complexity and scale of WSNs, there has been significant interest recently in algorithms that process network information using local communications only~\cite{Barbarossa2013,Leonard2015,Cetin06distributedfusion}. A representative of this class of algorithms is the consensus algorithm~\cite{Stankovic-Estimation,running-consensus,GaussianDD}. With consensus algorithms, each agent maintains over iterations an estimate of the quantity of interest and over time it communicates the estimate to its immediate neighbors. In addition, intertwined with local communications are local agents' innovations, where agents collect new measurements and incorporate them in an iterative fashion in their current estimates. Algorithms of this form referred to as consensus+innovations~\cite{SoummyaIT12} possess several desirable features, including scalability and simplicity of implementation. Further, they are robust to structural changes in the system, such as node failures and intermittent communications, which are typical for complex systems consisting of many structurally simple devices. In terms of applications, consensus algorithms have been applied in various different contexts: distributed Kalman filtering~\cite{Olfati-Saber-KalmanEmbeddedConsensus05,CarliDKF08}, distributed detection~\cite{GaussianDD,Non-Gaussian-DD,running-consensus,stankovic-change-detection} and parameter estimation~\cite{Stankovic-Estimation,MateosConsensusDistributedLMS2009,SoummyaIT12}, distributed learning~\cite{SayedSparseLearning13}, and tracking~\cite{RahmanDistributedTracking07}.

In this paper, we study large deviations performance of consensus algorithms when the underlying network is directed. This complements the existing work that usually studies asymptotic variance or asymptotic normality~\cite{SoummyaIT12,SoummyaIT14}. Our goal is to compute (or characterize--when exact computation is not possible) the rates at which the local nodes' estimates converge to the desirable values (e.g., the vector of true parameters that are being estimated). To explain the relevance of large deviations performance, consider, for example, a binary hypothesis testing problem in a WSN. In this context, the rates of large deviations correspond to error exponents, i.e., they provide answers to how fast the error probabilities -- false alarm, missed detection, or total error probability decay with time. In the context of estimation,  large deviation rates provide estimates of times to reach a desired accuracy region around the true parameter that the local estimates converge to. Naturally, the higher the rate of a node, the better is the decision or estimation produced by that node at a given time. One particular goal of this paper is to provide answers to questions such as: ``How much faster a node in a network filters out the estimation noise compared to a node that operates alone?''

\mypar{Contributions} We consider both cases when the local nodes' interactions are deterministic and when they are random, where the local interactions are captured by associated stochastic system matrices\footnote{With a stochastic matrix, rows sum to one, and all the entries are nonnegative.}. For the deterministic case, we prove the large deviation principle at each node, and we find the corresponding rate function, equal at all nodes. We prove that its dependence on the (stochastic) system matrix $A$ is fully captured by the left eigenvector $a$ of $A$ associated with the eigenvalue one, i.e., the left Perron vector of~$A$. When the observations are Gaussian-- independent, but non-identically distributed, we find a closed form expression for the rate function.
Motivated by the fact that the rate function strongly depends on the eigenvector $a$, we formulate the following network design problem. For a given accuracy region, find the optimal vector $a$ that maximizes the value of the rate function on this fixed region. We further show that for Gaussian observations with equal means (but different covariance matrices), this problem can be formulated as a semidefinite program (SDP) and thus can be solved efficiently. Simulation examples demonstrate that the optimized system significantly outperforms the system with the uniform left eigenvector $a$ that, in a sense, equally ``weighs'' all of the nodes' estimates. Finally, considering the special case when the observations are independent and identically distributed (i.i.d.), we reveal a very interesting property: the rate function, independently of the choice of $A$, always lies between the rate function of an isolated node and the rate function of a fusion center. Intuitively, this means that the distributed system is always better than an isolated node, and that, on the other hand, can never beat the performance of a fusion center.  Moreover, we prove that this fundamental property holds with random system matrices of arbitrary distribution (including, e.g., temporal dependencies), as long as they are independent from the observations. Due to the generality of the assumptions, the proof of this result requires much more sophisticated techniques than the deterministic case, which improve over the state of the art large deviations techniques and hence constitute a contribution of its own.

\mypar{Related work} Large deviations asymptotic performance of consensus+innovations algorithms has been previously studied in~\cite{GaussianDD},\cite{Non-Gaussian-DD},\cite{Soummya-LDRiccati14},\cite{BMMSayed14}, and~\cite{TaraISIT2014}. Reference~\cite{Soummya-LDRiccati14} studies large deviations of the stochastic Riccatti equation for the distributed Kalman filter, and it provides an upper and a lower bound for the large deviations rate function. Reference~\cite{BMMSayed14} considers a consensus based distributed detection with constant learning step. They show that the local decision statistics satisfy the large deviations principle and characterize the corresponding rate function. Reference~\cite{TaraISIT2014} studies belief formations in social networks and characterizes error exponents (Kullback-Leibler divergences) for the distributed multiple hypothesis testing problem. In our previous work~\cite{GaussianDD},\cite{Non-Gaussian-DD}, we considered the case of i.i.d. networks, where each topology realization is symmetric. Under this model, reference~\cite{GaussianDD} finds an upper and a lower bound for the rate function when the observations are Gaussian, and reference~\cite{Non-Gaussian-DD} extends the results of~\cite{GaussianDD} to arbitrary distributions of sensor observations. In this work, we go beyond these results in several important directions. First, we study here \emph{directed} random networks, and, furthermore, we make no restrictions on the distribution of the system matrices; in particular, we allow for their arbitrary \emph{time correlations}. Second, when the system matrices are deterministic, asymmetric, we fully characterize the rate function and show that it is amenable to optimization.

\mypar{Notation} For arbitrary $d\in \mathbb N=\left\{1,2,...\right\}$, we denote by $0_d$ the $d$-dimensional vector of all zeros; by $1_d$ the $d$-dimensional vector of all ones; by $e_i$ the $i$-th canonical vector of $\mathbb R^d$ (that has value one on the $i$-th entry and the remaining entries are zero); by $I_d$ the $d$-dimensional identity matrix; by $J_d$ the $d\times d$ matrix whose all entries equal $1/d$. For a matrix $A$, we let $[A]_{ij}$ and $A_{ij}$ denote its $i,j$ entry and  for a vector $a\in \mathbb R^d$, we denote its $i$-th entry by $a_i$, $i,j=1,...,d$. For a function $f:\mathbb R^d\mapsto \mathbb R$, we denote its domain by $\mathcal D_f=\left\{ x\in \mathbb R^d: -\infty <f(x)<+\infty \right\}$; the subdifferential (gradient, when $f$ is differentiable) of $f$ at a point $x$ by $\partial f(x)$ ($\nabla f(x)$); $\log$ denotes the natural logarithm; for two sequences $f_t$ and $g_t$ that are asymptotically equal at the logarithmic scale, $\lim_{t\rightarrow +\infty} \log f_t/\log g_t=1$, we shortly write $f_t \stackrel {\star}{\sim} g_t$. For $N\in \mathbb N$, we denote by $\Delta^{N-1}$ the probability simplex in $\mathbb R^N$ and by $\alpha$ the generic element of this set: $\Delta^{N-1}=\left\{\alpha \in \mathbb R^N: \alpha_i \geq 0, \sum_{i=1}^N \alpha_i=1\right\}$. We let $\lambda_{\max}$ and $\lambda_2$, respectively, denote the maximal and the second largest (in modulus) eigenvalue of a square matrix; $\dagger$ denotes the pseudoinverse of a square matrix; and $\|\cdot\|$ denotes the spectral norm. For a matrix $S\in \mathbb R^{N\times N}$, we let $\mathcal R(S)$ denote the range of $S$, $\mathcal R(S)=\left\{Sx: x\in \mathbb R^N\right\}$, and for $N$ square matrices $S_1,...,S_N$, we let $\mathrm{diag}\left\{S_1,...,S_N\right\}$ denote the block-diagonal matrix whose $i$th block is $S_i$, for $i=1,...,N$. An open Euclidean ball in $\mathbb R^d$ of radius $\rho$ and centered at $x$ is denoted by $B_{x}(\rho)$; the closure, the interior, the boundary, and the complement of an arbitrary set $D\subseteq \mathbb R^d$ are respectively denoted by $\overline D$, $D^{\mathrm{o}}$, $\partial D$, and $D^{\mathrm{c}}$; $\mathcal B(\mathbb R^d)$ denotes the Borel sigma algebra on $\mathbb R^d$; $\Omega$ denotes the probability space and $\omega$ denotes an element of $\Omega$; $\mathbb P$ and $\mathbb E$ denote the probability and the expectation operator; $\mathcal N(m,S)$ denotes Gaussian distribution with mean vector $m$ and covariance matrix $S$.

\mypar{Paper organization} In Section~\ref{sec-Setup} we present the system model and formulate the problem that we study. In Section~\ref{sec-Pre} we give preliminaries. Section~\ref{sec-rate-fcns-deterministic} presents our results for the deterministic case. Using the results of Section~\ref{sec-rate-fcns-deterministic}, Section~\ref{sec-network-design} formulates the network design problem and solves it for the case of Gaussian observations with equal means. Section~\ref{sec-Generic} presents the fundamental bounds on the rate function for the generic case, when system matrices are random; proofs of this result are given in Subsections~\ref{sec-Upper-Proof} and~\ref{sec-Lower-Proof}. Simulation results are presented in Section~\ref{sec-Simul}, and the conclusion is given in Section~\ref{sec-Concl}.

\section{Problem setup}
\label{sec-Setup}
This section explains the system model and the distributed inference algorithm that we study.

\mypar{Network observations} Suppose that we have $N$ geographically distributed agents (e.g., sensors, robots, humans) that monitor and collect observations about their environment. We denote the set of agents by $V=\{1,2,\ldots,N\}$ such that $i\in V$ denotes the $i$-th agent. At each new time instant $t=1,2,...$, each agent produces a $d$-dimensional observation vector. We denote by $Z_{i,t}\in \mathbb R^d$ the observation vector of agent $i$ at time $t$, where we assume that the measurements are made synchronously across all agents. We denote by $m_i$ the expected value of observations at node $i$, $m_i=\mathbb E\left[Z_{i,t}\right]$ (constant for all $t$). 

\mypar{Inter-agent communication} We assume that a direct communication is possible only between a subset of agents' pairs, e.g., the agents that are close enough to each other. (For instance, in a WSN, communication links are established only between sensors that lie within a certain, predefined distance $r$ from each other.) We model the possible inter-agent communications via a directed graph $\widehat G=(V,\widehat E)$, where set $\widehat E\subseteq V\times V$ collects all possible (directed) communication links, i.e., all pairs $(j,i)$ such that agent $i$ can receive messages from agent $j$ in a single hop manner. The links in $\widehat E$ should be understood only as potential communication channels. In other words, at a certain time $t$, agent $j$ may decide whether to send or not send a message to agent $i$. Also, in the case a message from $j$ to $i$ was sent, its reception at $i$ could be unsuccessful due to imperfect channel effects (e.g., fading). For any link $(j,i)\in \widehat E$, we say that $(j,i)$ is active at time $t$ if at time $t$ a message is sent from $j$ and successfully received at $i$. We let $E_t$ denote the set of all active links at time $t$. Accordingly, the neighborhood of node $i$ at time $t$ is $O_{i,t} = \left\{ j: (j,i)\in E_t \right\}$, that is, $O_{i,t}$ is the set of all active links at time $t$ that are pointing to $i$; for any $j\in O_{i,t}$, we say that $j$ is an active neighbor of $i$. Finally, we denote by $G_t = (V, E_t)$ the graph realization at time $t$.

\mypar{Consensus+innovations based distributed inference} The distributed inference algorithm that we study operates as follows. Each node, over time, maintains a $d$-dimensional vector that serves as the node's estimate on the state of nature. The estimate of node $i$ at time $t$ is denoted by $X_{i,t}$, and we also refer to it as the state of node $i$. The estimates (states) are continuously improved over time twofold. First, each agent $i$ incorporates its new observation $Z_{i,t}$ into its current state with the weight $1/t$ and forms an intermediate state update; subsequently, it transmits the intermediate state to (a subset of) its neighbors. Finally, agent $i$ forms a convex combination (weighted average) of its own and its active neighbors' intermediate states, with the coefficients $\left\{W_{ij,t}:j\in O_{i,t}\right\}$, $i\in V$. Mathematically, the state update of agent $i$ is:
\begin{align}
\label{alg-D-INF}
X_{i,t} =\sum_{ j \in O_{i,t}} W_{ij,t} \left(\frac{t-1}{t} X_{j,t-1} +\frac{1}{t} Z_{j,t}\right),
\end{align}
with the initialization $X_{i,0}=0_d$. To derive a more compact representation, collect for each $t$ the agents' weights $W_{ij,t}$ in an $N \times N$ matrix $W_t$ as follows: for any pair $(j,i)\in \widehat E$ that satisfies $j\in O_{i,t}$, $[W_t]_{ij}$ is assigned the value $W_{ij,t}$, and equals zero otherwise, and for any $i\in V$, $[W_t]_{ii}=1-\sum_{j\in O_{i,t}}[W_t]_{ij}$. We refer to matrix $W_t$ as the weight matrix. Due to the fact that $\{W_{ij,t}:j\in O_{ij,t}\}$ form a convex combination, $W_t$ is stochastic for any $t$. 
Further, let $\Phi(t,s)$, for $t\geq 1$  and $t \geq  s\geq  1$ be defined as $\Phi(t,s)=W_t\cdots W_{s}$, for $1 \leq s\leq t$. From~\eqref{alg-D-INF}, we obtain:
\begin{equation}
\label{alg-D-INF-compact}
X_{i,t}=\frac{1}{t}\sum_{s=1}^t\sum_{j=1}^N \left[\Phi(t,s)\right]_{ij} Z_{j,s}.
\end{equation}
Algorithms of form~\eqref{alg-D-INF} and~\eqref{alg-D-INF-compact} have been previously studied, e.g., in~\cite{Stankovic-Estimation},\cite{running-consensus}, and~\cite{GaussianDD}.

We now state our assumptions on the weight matrices and the agents' observations.
\begin{assumption}[Network and observation model]
\label{ass-network-and-observation-model}
$\phantom{enforce}$
\begin{enumerate}
\item Observations $Z_{i,t}$, $i=1,\ldots,N$, $t=1,2,\ldots$ are independent both across nodes and over time;
\item For each agent $i$, $Z_{i,t}$, $t=1,2,...$ are identically distributed;
\item Quantities $W_t$ and $Z_{i,s}$ are independent for all $i$, $s$, $t$.
\end{enumerate}
\end{assumption}
The model above is very general. In particular, in terms of the agents' interactions, it allows for directed topologies and asymmetric weight matrices, and it also allows for time dependencies between the weight matrices; directed topologies and temporal dependencies are cases that are much less studied in the literature. 
In terms of observations, we remark that the model above allows for non-identically distributed observations.

We next introduce the rates of large deviations and motivate their use for performance characterization of algorithm~\eqref{alg-D-INF}.

\mypar{Rates of large deviations at individual agents}
Suppose that, for some $i$, $X_{i,t}$ converges almost surely (a.s.) to a deterministic vector $\theta\in \mathbb R^d$, e.g., the vector of $d$ parameters that the system wishes to estimate. 
 In many scenarios, it is of interest to determine at what rate this convergence occurs. To explain why this is important, suppose that we wish to determine $\theta$ up to a certain accuracy defined by the accuracy region $C\subseteq \mathbb R^d$, where $\theta\in C$. Let $T_i$ denote the time interval after which $X_{i,t}$ belongs to $C$ with a prescribed, high probability, say $0.97$. For convenience, define also the complement of $C$, $D=\mathbb R^d\setminus C$, usually called the deviation set. Since $X_{i,t}$ converges a.s. to $\theta,$ we know that the probability that $X_{i,t}$ remains outside of $C$, $\mathbb P\left(X_{i,t}\in D\right)$, vanishes as $t\rightarrow +\infty$. The question that we ask then is how fast this probability vanishes with time. It turns out that in many scenarios this convergence is exponential (see~\cite{Non-Gaussian-DD} for the scalar, $d=1$ case). That is:
\begin{equation}
\mathbb P\left(X_{i,t}\in D\right)\stackrel {\star}{\sim} e^{-t I_{i}(D)},
\end{equation}
for a certain function $I_{i}$, where, we recall, $\stackrel {\star}{\sim}$ means that the two functions are asymptotically equal at the logarithmic scale. Function $I_i:\mathcal B\left(\mathbb R^d\right)\,\mapsto\,\mathbb R^{+}$ is usually called the rate function. Relating $I_i$ with time $T_i$, we see that $T_i$ can be approximately computed as
\begin{equation}
\label{eq-estimate-T-via-I}
T_i\approx - \frac{\log (1-0,97)}{I_i(D)}.
\end{equation}
The quality of the approximation in~\eqref{eq-estimate-T-via-I} improves for higher accuracies (i.e., smaller region $C$ around $\theta$). In the context of, e.g., Neyman-Pearson hypothesis testing, rates $I_i$ directly correspond to error exponents: taking, for example $D$ to be the false alarm region $[0,+\infty)$ under $H_0$, $I_i(D)$ gives the error exponent of the false alarm probability at sensor $i$. The problem that we address in this paper is finding the rate functions $I_i$, $i\in V$:
\begin{equation}
\lim_{t\rightarrow +\infty} \,-\frac{1}{t}\,\log\,\mathbb P\left(X_{i,t}\in D\right) = I_i(D),
\end{equation}
whenever the limit above exists for any set $D\in \mathcal B(\mathbb R^d)$. For further details on the use of large deviations rate functions in probabilistic inference, we refer the reader to~\cite{Cover91},\cite{Chernoff52},\cite{LD-for-M-estimators-06}.


\section{Preliminaries}
\label{sec-Pre}
Before we start our analysis, we first review in Subsection~\ref{subsec-LD-preliminaries} basic large deviations concepts and tools. Subsection~\ref{subsec-isolation-and-fusion} then provides our intermediate results on the large deviations principle and the corresponding rate functions of an isolated agent and a fusion node.
\subsection{Large deviations preliminaries}
\label{subsec-LD-preliminaries}
We define the large deviations principle and introduce, for each $i$, the logarithmic moment generating function of observations~$Z_{i,t}$. We then define the conjugate of a function and state some important properties of log-moment generating functions and their conjugates in general, and in our particular setup as well.

\mypar{Large deviations principle} A rate function is any function that is lower semi-continuous, or equivalently, that has closed sublevel sets.  A sequence of random variables $\widehat Z_t\in \mathbb R^d$ is said to satisfy the large deviations principle (LDP) with rate function $\widehat I$ if for any measurable set $D\in \mathcal B(\mathbb R^d)$ it holds that
\begin{align}
\label{def-LDP}
- \inf_{x \in D^{\mathrm{o}}} \widehat I(x)\, \leq \liminf_{t\rightarrow +\infty}\, \frac{1}{t}\,\mathbb P\left( \widehat Z_t \in D \right)
 \leq\, \limsup_{t\rightarrow +\infty}\, \frac{1}{t}\,\mathbb P\left( \widehat Z_t \in D \right) \leq - \inf_{x\in \overline D} \widehat I(x).
\end{align}
Essentially, what the large deviations principle tells is that, for any (nice enough) set $D$, probabilities that $\widehat Z_t$ belongs to $D$ decay with $t$ exponentially, with the rate equal to $\widehat I(D)=\inf_{x\in D} \widehat I(x)$. One of the key objects in proving the large deviations principle and computing the rate function in general (see Cram\'er's and G\"{a}rtner-Ellis theorem~\cite{Cramers-article},\cite{DemboZeitouni}) are the log-moment generating function and its conjugate, which we introduce next.

\mypar{Log-moment generating function of observations~$Z_{i,t}$}
The log-moment generating function $\Lambda_i:\,{\mathbb R}^d \rightarrow \mathbb R \cup \{+\infty\}$ corresponding to $Z_{i,t}$ is given by:
\begin{equation}
\label{def-Lmgf}
\Lambda_i(\lambda)=\log \mathbb E\left[ e^{\lambda^\top Z_{i,t}}\right],\:\:\mathrm{for\:\:}\lambda \in \mathbb R^d.
\end{equation}
%
For the special case when all the agents' observations are identically distributed, we let $\Lambda$ denote the corresponding log-moment generating function, $\Lambda\equiv \Lambda_i$, for any $i$.

The second key object of interest in our analysis is the conjugate of a log-moment generating function. Let $\widehat \Lambda$ be the log-moment generating function of a $d$-dimensional random vector $\widehat Z$. Then, the conjugate, or the Fenchel-Legendre transform, of $\widehat \Lambda$ is given by
\begin{equation}
\label{def-Conjugate}
\widehat I(x)=\sup_{\lambda\in \mathbb R^d} x^\top \lambda - \widehat \Lambda(\lambda),\:\:\mathrm{for\:\:}x\in \mathbb R^d.
\end{equation}

When $Z_{i,t}$ are i.i.d., we will denote by $I$ the conjugate of $\Lambda$. To illustrate how to compute~$\Lambda$ and $I$, we consider the case when $Z_{i,t}$ is a discrete random vector, i.e., when the agents' measurements are quantized.
\begin{example}[Quantized observations]
\label{ex-lmgf-discrete}
Suppose that the agents' observations $Z_{i,t}$ are i.i.d., discrete random vectors, taking values in the set $\mathbb A= \left\{a_1,...,a_L\right\}$, according to the probability mass function $p=(p_1,...,p_L)$, where $a_l\in \mathbb R^d$ for $l=1,...,L$. For any $\lambda \in \mathbb R^d$, the value $\Lambda (\lambda)$ is then computed by
\begin{equation}
\label{lmgf-for-discrete}
\Lambda(\lambda)=\log \left(\sum_{l=1}^L p_l e^{\lambda^\top a_l} \right).
\end{equation}
It can be seen that the function $\Lambda$ in~\eqref{lmgf-for-discrete} is finite on the whole space, i.e., $\mathcal D_{\Lambda}=\mathbb R^d$. Also, for the special case when $a_l=e_l$, the conjugate of~$\Lambda$ can be shown to be the relative entropy with respect to $p$, given by~\cite[p. 41]{DemboZeitouni}:
\begin{equation}
\label{conjugate-for-discrete}
I(x)=\sum_{l=1}^d x_l \log \frac{x_l}{p_l},
\end{equation}
for any $x\in \Delta^{d-1}$, and equals $+\infty$, otherwise.
\end{example}
\begin{example}[Gaussian observations]
\label{ex-lmgf-Gauss}
It can be shown by simple algebraic manipulations that when $Z_{i,t}$ is i.i.d., Gaussian, with mean value~$m$ and covariance matrix~$S$, the log-moment generating function~$\Lambda$ and its conjugate~$I$ are both quadratic and given, respectively, by~\cite{DemboZeitouni}: \[\Lambda(\lambda)=  m^\top \lambda +  \frac{1}{2} \lambda^\top S \lambda,\,\,I(x)=\frac{1}{2}(x-m)^\top S^{-1}(x-m).\]
\end{example}
To simplify our analysis, we make the following assumption.
\begin{assumption}
\label{ass-finite-at-all-points}
$\mathcal D_{\Lambda_i}= {\mathbb R}^d$, i.e., $\Lambda_i(\lambda)<+\infty$ for all $\lambda \in \mathbb R^d$, for each $i$.
\end{assumption}
Assumption~\ref{ass-finite-at-all-points} holds for arbitrary Gaussian and discrete random vectors, and also for many other commonly used distributions; we refer the reader to~\cite{Non-Gaussian-DD} for examples of random vectors beyond Examples~\ref{ex-lmgf-discrete} and~\ref{ex-lmgf-Gauss} that have a finite log-moment generating function.

\mypar{Properties of log-moment generating functions and their conjugates}
For future reference, we list the properties that an arbitrary log-moment generating function $\widehat \Lambda$ and its conjugate $\widehat I$ satisfy; proofs can be found in~\cite[p.8]{Hollander} and~\cite[p.27, 35]{DemboZeitouni}.%
\begin{lemma}[Properties of a log-moment generating function and its conjugate] Consider the log-moment generating function $\widehat \Lambda$ and its conjugate $\widehat I$, associated with an arbitrary $d$-dimensional random vector $\widehat Z$. Let $\theta =\mathbb E[\widehat Z]$. Then:
\begin{enumerate}
\item
\label{lemma-properties-of-lmgf}
function~$\widehat \Lambda$ satisfies:
\begin{enumerate}
\item \label{Lambda-at-0} $\widehat \Lambda(0)=0$ and $\nabla \widehat \Lambda(0)=\theta$, when $0\in \mathcal D_{\widehat \Lambda}^{\mathrm{o}}$;
\item \label{Lambda-is-cvx} $\widehat \Lambda(\cdot)$ is lower semi-continuous and convex;
\item \label{Lambda-is-smooth} $\widehat \Lambda(\cdot)$ is $C^{\infty}$ on $\mathcal D_{\widehat \Lambda}^{\mathrm{o}}$;
\end{enumerate}
\item
\label{lemma-properties-of-the-conjugate}
and function~$\widehat  I$ satisfies:
\begin{enumerate}
\item $\widehat I$ is nonnegative and $\widehat I\left(\theta\right)=0$;
\item \label{part-I-qualifies} $\widehat I$ is lower semi-continuous and convex;
\item if $0\in \mathcal D_{\widehat \Lambda}^{\mathrm{o}}$, then $\widehat I$ has compact level sets.
\item $\widehat I$ is differentiable on $\mathcal D_{\widehat I}^{\mathrm{o}}$.
\end{enumerate}
\end{enumerate}
\end{lemma}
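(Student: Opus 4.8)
The plan is to treat the statement as two largely independent clusters: the properties of the cumulant (log-moment) generating function $\widehat\Lambda$, which I would derive by direct probabilistic arguments on the defining expectation in~\eqref{def-Lmgf}, and the properties of its conjugate $\widehat I$, which I would extract from the variational definition~\eqref{def-Conjugate} together with convex-duality facts. None of the individual claims is deep in isolation — they are the standard toolbox of Cram\'er-type theory — so the work is in assembling the right elementary argument for each item and flagging the one genuinely delicate point.

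For part~1, $\widehat\Lambda(0)=\log\mathbb{E}[e^{0}]=0$ is immediate. For $\nabla\widehat\Lambda(0)=\theta$, I would differentiate $\lambda\mapsto\log\mathbb{E}[e^{\lambda^\top\widehat Z}]$ under the expectation: when $0\in\mathcal D_{\widehat\Lambda}^{\mathrm{o}}$ the map $\lambda\mapsto\mathbb{E}[e^{\lambda^\top\widehat Z}]$ is finite on a neighborhood of the origin, which furnishes a dominating function and legitimizes interchanging $\nabla$ and $\mathbb{E}$, giving $\nabla\widehat\Lambda(0)=\mathbb{E}[\widehat Z]/\mathbb{E}[e^{0}]=\theta$. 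Convexity is H\"older's inequality: writing $e^{(\alpha\lambda_1+(1-\alpha)\lambda_2)^\top\widehat Z}=(e^{\lambda_1^\top\widehat Z})^{\alpha}(e^{\lambda_2^\top\widehat Z})^{1-\alpha}$ and applying H\"older with exponents $1/\alpha,1/(1-\alpha)$ yields $\widehat\Lambda(\alpha\lambda_1+(1-\alpha)\lambda_2)\le\alpha\widehat\Lambda(\lambda_1)+(1-\alpha)\widehat\Lambda(\lambda_2)$; lower semicontinuity follows from Fatou applied to the exponential integrand. The $C^{\infty}$ claim on $\mathcal D_{\widehat\Lambda}^{\mathrm{o}}$ is the strongest of the three: I would justify repeated differentiation under the integral sign (equivalently, local analyticity of the Laplace transform) on the interior, where every mixed exponential moment is locally finite and the differentiated integrands are dominated uniformly on compact subsets.

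For part~2, nonnegativity is obtained by evaluating the supremum in~\eqref{def-Conjugate} at $\lambda=0$: $\widehat I(x)\ge x^\top 0-\widehat\Lambda(0)=0$. The identity $\widehat I(\theta)=0$ then follows from Jensen's inequality, $\widehat\Lambda(\lambda)=\log\mathbb{E}[e^{\lambda^\top\widehat Z}]\ge\mathbb{E}[\lambda^\top\widehat Z]=\lambda^\top\theta$, so that $\theta^\top\lambda-\widehat\Lambda(\lambda)\le0$ for every $\lambda$ and hence $\widehat I(\theta)\le0$. Convexity and lower semicontinuity come for free, since $\widehat I$ is a pointwise supremum of the affine functions $x\mapsto x^\top\lambda-\widehat\Lambda(\lambda)$. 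For the compact level sets under $0\in\mathcal D_{\widehat\Lambda}^{\mathrm{o}}$, I would prove coercivity: choosing $\delta>0$ with $\widehat\Lambda$ bounded by some $M$ on the ball $\{\|\lambda\|\le\delta\}$ and testing the supremum at $\lambda=\delta x/\|x\|$ gives
\[
\widehat I(x)\ \ge\ \delta\|x\|-\widehat\Lambda\!\left(\delta x/\|x\|\right)\ \ge\ \delta\|x\|-M ,
\]
so $\widehat I(x)\to+\infty$ as $\|x\|\to\infty$; combined with the already-established lower semicontinuity (closed sublevel sets), each sublevel set is closed and bounded, hence compact.

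The one step I expect to be the real obstacle is the differentiability of $\widehat I$ on $\mathcal D_{\widehat I}^{\mathrm{o}}$ (item~2(d)), because it cannot be read off the variational definition directly and instead hinges on the convex-analytic duality between essential smoothness and essential strict convexity. My plan is to show that $\widehat\Lambda$ is essentially strictly convex on the interior of its domain — in the nondegenerate case this follows from the fact that the Hessian $\nabla^2\widehat\Lambda(\lambda)$ equals the covariance matrix of $\widehat Z$ under the exponentially tilted measure $d\widehat\mu_\lambda\propto e^{\lambda^\top z}\,d\widehat\mu$, which is positive definite whenever $\widehat Z$ is not supported on a proper affine subspace — and then invoke the Rockafellar-type equivalence (in the Dembo--Zeitouni formulation~\cite{DemboZeitouni}) that essential strict convexity of $\widehat\Lambda$ is dual to essential smoothness, hence differentiability on the interior of the domain, of its conjugate $\widehat I$. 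The subtlety, and the reason I would lean on the cited formulation rather than reproving it, is handling degenerate distributions, where $\widehat\Lambda$ is affine in some directions and $\widehat I$ is finite only on a lower-dimensional affine set; there the differentiability statement must be interpreted relative to the appropriate affine hull, and checking that the duality survives the restriction is the part that requires genuine care.
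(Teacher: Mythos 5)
The paper does not supply its own proof of this lemma: it states the properties and refers the reader to~\cite{Hollander} (p.~8) and~\cite{DemboZeitouni} (pp.~27, 35), so there is no in-paper argument to compare against. Your blind reconstruction is essentially the standard textbook proof that those references contain, and every step you outline is sound: H\"older for convexity, Fatou for lower semicontinuity, differentiation under the expectation (justified by local finiteness of the Laplace transform) for items~1(a) and~1(c), the pointwise-supremum-of-affine-functions observation plus Jensen for items~2(a)--(b), and the coercivity bound $\widehat I(x)\geq \delta\|x\|-M$ for item~2(c) --- where the only detail worth making explicit is that the bound $M$ on $\widehat\Lambda$ over $\{\|\lambda\|\leq\delta\}$ exists because a convex function finite on an open neighborhood of the origin is continuous, hence locally bounded, there. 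You also correctly single out item~2(d) as the one nontrivial step: the route through the Rockafellar duality between essential strict convexity of $\widehat\Lambda$ (which in the nondegenerate case follows from positive definiteness of the tilted covariance $\nabla^2\widehat\Lambda(\lambda)$) and essential smoothness of $\widehat I$ is the right one, and your worry about degenerate distributions resolves itself cleanly: if $\widehat Z$ is supported on a proper affine subspace, then $\mathcal D_{\widehat I}$ lies in that subspace and has empty interior in $\mathbb R^d$, so the differentiability claim is vacuous and no relative-interior refinement is actually needed for the lemma as stated. In short, the proposal is correct; it simply fills in the proof the paper outsources to the literature.
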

%

We end this subsection by stating a simple but important property of the log-moment generating function that follows from its convexity and zero value at the origin. We note that the right-hand side of inequality~\eqref{eq-simple-lemma} was previously proven in~\cite{Non-Gaussian-DD} (for the case $d=1$).
\begin{lemma}
\label{simple-lemma}
Let $\widehat \Lambda$ be an arbitrary log-moment generating function. For any $\alpha \in \Delta^{N-1}$ and $\lambda \in \mathbb R^d$,
\begin{equation}
\label{eq-simple-lemma}
N \widehat \Lambda \left(\frac{1}{N} \lambda\right)\leq \sum_{i=1}^N \widehat \Lambda(\alpha_i \lambda) \leq \widehat \Lambda(\lambda).
\end{equation}
\end{lemma}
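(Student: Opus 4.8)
The plan is to derive both inequalities purely from the convexity of $\widehat\Lambda$ (property~\ref{Lambda-is-cvx}) together with the normalization $\widehat\Lambda(0)=0$, the latter being immediate from the definition~\eqref{def-Lmgf} since $\widehat\Lambda(0)=\log\mathbb E[e^{0}]=0$. No moment computation or probabilistic argument is needed: each bound is a one-line application of convexity (Jensen's inequality) with a suitably chosen set of weights, and the whole difficulty is in selecting the right convex representation in each case.

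For the right-hand inequality I would write each argument as a convex combination of $\lambda$ and the origin, $\alpha_i\lambda=\alpha_i\lambda+(1-\alpha_i)\cdot 0$, which is legitimate because $\alpha_i\in[0,1]$. Convexity then gives $\widehat\Lambda(\alpha_i\lambda)\le \alpha_i\widehat\Lambda(\lambda)+(1-\alpha_i)\widehat\Lambda(0)=\alpha_i\widehat\Lambda(\lambda)$. Summing over $i$ and using $\sum_{i=1}^N\alpha_i=1$ yields $\sum_{i=1}^N\widehat\Lambda(\alpha_i\lambda)\le\widehat\Lambda(\lambda)$, as required.

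For the left-hand inequality I would instead apply convexity to the $N$ points $\alpha_1\lambda,\dots,\alpha_N\lambda$ with the \emph{uniform} weights $1/N$. Since $\sum_{i=1}^N\alpha_i=1$, the corresponding weighted average of the arguments collapses to $\frac1N\sum_{i=1}^N\alpha_i\lambda=\frac1N\lambda$, so convexity gives $\widehat\Lambda\!\left(\frac1N\lambda\right)\le\frac1N\sum_{i=1}^N\widehat\Lambda(\alpha_i\lambda)$. Multiplying through by $N$ produces the claimed lower bound.

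There is essentially no serious obstacle here; the only point requiring attention is recognizing the two distinct convex representations — the origin-based combination for the upper bound and the uniform average for the lower bound — and noting that both inequalities remain valid in the extended reals even if $\widehat\Lambda(\lambda)=+\infty$, so that under Assumption~\ref{ass-finite-at-all-points} every quantity involved is in fact finite. I would close by observing that the extremal choices $\alpha=e_i$ and $\alpha=\frac1N 1_N$ attain equality in the two bounds, which foreshadows the interpretation of the isolated node and the fusion center as the two extreme configurations.
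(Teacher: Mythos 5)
Your proof is correct. The right-hand inequality is argued exactly as in the paper: write $\alpha_i\lambda$ as the convex combination $\alpha_i\lambda+(1-\alpha_i)\cdot 0$, use $\widehat\Lambda(0)=0$, and sum. For the left-hand inequality, however, you take a genuinely different and shorter route. The paper proves it by posing the constrained minimization of $g_\lambda(\beta)=\sum_{i=1}^N\widehat\Lambda(\beta_i\lambda)$ over the simplex and verifying the KKT conditions at $\beta=\frac{1}{N}1_N$ with multiplier $\nu=-\lambda^\top\nabla\widehat\Lambda(\lambda/N)$; your one-line Jensen application to the points $\alpha_1\lambda,\dots,\alpha_N\lambda$ with uniform weights, using $\sum_i\alpha_i=1$ to collapse the average of the arguments to $\lambda/N$, reaches the same conclusion. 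Your argument is preferable in two respects: it avoids the Lagrangian machinery entirely, and it does not use differentiability of $\widehat\Lambda$ (which the paper's KKT computation implicitly requires via $\nabla\widehat\Lambda$), so it is valid for any convex function with $\widehat\Lambda(0)=0$, even in the extended reals. The paper's optimization viewpoint does carry the extra information that $\frac{1}{N}1_N$ is the exact minimizer of $\alpha\mapsto\sum_i\widehat\Lambda(\alpha_i\lambda)$ over the simplex, but that is not needed for the stated lemma. Your closing remark about the extremal choices $\alpha=e_i$ and $\alpha=\frac{1}{N}1_N$ is a nice observation consistent with Corollary~\ref{corollary-between-one-and-all-deterministic}.
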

\begin{proof}
We first prove the right-hand side inequality in~\eqref{eq-simple-lemma}. (The proof is analogous to the proof of the same inequality for the special case $d=1$~\cite{Non-Gaussian-DD}; for completeness, we provide the proof here.) Fix $\varsigma \in [0,1]$. Then, by convexity of $\widehat \Lambda$ and the fact that $\widehat \Lambda(0)=0$, we have
\begin{align*}
\widehat \Lambda(\varsigma \lambda)= \widehat \Lambda(\varsigma\, \lambda + (1-\varsigma)\,0)\,\leq\, \varsigma\, \widehat \Lambda(\lambda) + (1-\varsigma)\,\widehat \Lambda(0) = \varsigma \widehat \Lambda(\lambda).
\end{align*}
Now, fix an arbitrary $\alpha \in \Delta^{N-1}$. Applying the preceding inequality for $\varsigma = \alpha_i$, for $i=1,...,N$, yields the claim by summing out the resulting left and right hand sides.

To prove the left hand side inequality in~\eqref{eq-simple-lemma}, consider the function $g_{\lambda}: \mathbb R^N \mapsto \mathbb R$,
$g_{\lambda}(\beta)=\sum_{i=1}^N \widehat \Lambda(\beta_i\lambda)$, for $\beta \in \mathbb R^N$. We prove the claim if we show that the minimum of
$g_{\lambda}$ over the unit simplex $\Delta^{N-1}$ is attained at $1/N\, 1_N=(1/N,\ldots,1/N)\in \Delta^{N-1}$.
Since $g_{\lambda}$ is convex (being the sum of convex functions), it suffices to show that there exists a Lagrange multiplier $\nu \in \mathbb R$ such that the pair $(1/N\, 1_N,\nu)$ satisfies the Karush-Kuhn-Tucker~(KKT) conditions~\cite{BoydsBook}. To this end, define the Lagrangian $L(\beta,\nu)= g_{\lambda}(\beta)+ \nu (1_N^\top\, \beta -1)$, for some $\nu \in \mathbb R$, $\beta \in \mathbb R^N$. We have
\begin{equation*}
\partial_{\beta_i} L(\beta, \nu)= \lambda^\top \nabla \widehat \Lambda(\beta_i \lambda) + \nu.
\end{equation*}
Taking $\beta_i=1/N$ and $\nu= - \lambda^\top \nabla  \widehat \Lambda(1/N \lambda)$, proves the claim.
\end{proof}
\vspace{-5mm}
\subsection{Two extreme cases: isolation and fusion}
\label{subsec-isolation-and-fusion}

To set benchmarks for the performance of distributed inference~\eqref{alg-D-INF}, we consider two extreme cases of the agents' cooperation: 1) complete agent's isolation, when an agent operates alone, making inferences based on its own observations only; and 2) network-wide fusion, when each agent has access to all of the observations. Mathematically, the state of agent $i$ corresponding to these two cases are as follows: $X_{i,t}^{\mathrm{isol}}= 1/t\sum_{s=1}^t Z_{i,s}$, for $i\in V$, for the case of isolated agents obtained when in~\eqref{alg-D-INF-compact} $W_t\equiv I_d$, and $X_{t}^{\mathrm{cen}}= 1/(Nt)\sum_{s=1}^t\sum_{i=1}^N Z_{i,s}$, for the case of fusion center obtained when $W_t\equiv J_d$. Theorem~\ref{theorem-isolation-and-fusion} computes the corresponding large deviation rates, and it also asserts that, when the observations are i.i.d., the fusion-based rate scales linearly (with constant one) with the number of participating agents.

\begin{theorem}
\label{theorem-isolation-and-fusion}
Suppose that $Z_{i,t}$ are i.i.d. for all $i$ and $t$. Then,
\begin{enumerate}
\item
\label{part-isolation}
for each $i$, the sequence $X_{i,t}^{\mathrm{isol}}$ satisfies the LDP with rate function $I_i^{\mathrm{isol}}\equiv I$;
\item
\label{part-fusion}
the sequence $X_{t}^{\mathrm{cen}}$ satisfies the LDP with rate function $I^{\mathrm{cen}}\equiv NI$.
\end{enumerate}
\end{theorem}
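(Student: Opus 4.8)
The plan is to recognize both sequences as sample means of independent and identically distributed random vectors, so that Cram\'er's theorem~\cite{Cramers-article},\cite{DemboZeitouni} applies directly to each. Assumption~\ref{ass-finite-at-all-points} will do the regularity work throughout: since $\mathcal D_{\Lambda}=\mathbb R^d$, we have $0\in \mathcal D_{\Lambda}^{\mathrm o}$, which is exactly the hypothesis under which Cram\'er's theorem delivers the full LDP, with matching upper and lower bounds and a good rate function equal to the Fenchel--Legendre conjugate~\eqref{def-Conjugate} of the relevant log-moment generating function.

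For Part~\ref{part-isolation}, I would simply note that $X_{i,t}^{\mathrm{isol}}=\frac{1}{t}\sum_{s=1}^t Z_{i,s}$ is the sample mean of the i.i.d. sequence $Z_{i,1},\ldots,Z_{i,t}$, whose common log-moment generating function is $\Lambda_i=\Lambda$. Cram\'er's theorem then immediately yields the LDP with rate function equal to the conjugate of $\Lambda$, namely $I$. Because the observations are identically distributed across agents as well, $\Lambda_i\equiv \Lambda$ for every $i$, so the same rate function $I_i^{\mathrm{isol}}\equiv I$ is obtained at every node.

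For Part~\ref{part-fusion}, the first step is to re-express the centralized state as a sample mean over time. Defining the per-time cross-agent average $\bar Z_s:=\frac{1}{N}\sum_{i=1}^N Z_{i,s}$, one has $X_{t}^{\mathrm{cen}}=\frac{1}{t}\sum_{s=1}^t \bar Z_s$, and the vectors $\bar Z_1,\ldots,\bar Z_t$ are i.i.d., being measurable functions of disjoint, independent blocks of the array $\{Z_{i,s}\}$ that share a common distribution. Cram\'er's theorem thus applies to $X_t^{\mathrm{cen}}$, with rate function the conjugate of the log-moment generating function $\Lambda_{\bar Z}$ of $\bar Z_s$. The key computation is to identify $\Lambda_{\bar Z}$; using independence across agents,
\[
\Lambda_{\bar Z}(\lambda)=\log \mathbb E\left[e^{\lambda^\top \bar Z_s}\right]=\sum_{i=1}^N \log \mathbb E\left[e^{\frac{1}{N}\lambda^\top Z_{i,s}}\right]=N\,\Lambda\!\left(\tfrac{1}{N}\lambda\right).
\]

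It then remains to conjugate $\lambda\mapsto N\Lambda(\lambda/N)$. Substituting $\mu=\lambda/N$ in~\eqref{def-Conjugate} gives
\[
I^{\mathrm{cen}}(x)=\sup_{\lambda\in \mathbb R^d}\left\{x^\top \lambda-N\Lambda\!\left(\tfrac{1}{N}\lambda\right)\right\}=N\sup_{\mu\in \mathbb R^d}\left\{x^\top \mu-\Lambda(\mu)\right\}=N\,I(x),
\]
establishing $I^{\mathrm{cen}}\equiv NI$. I do not expect a genuine obstacle here: the argument is a direct application of Cram\'er's theorem, and the only points requiring care are verifying that $\bar Z_s$ inherits the finite-everywhere property (immediate from $\Lambda_{\bar Z}=N\Lambda(\cdot/N)$ and Assumption~\ref{ass-finite-at-all-points}, so that $0$ lies in the interior of its domain) and the scaling identity for the conjugate above, which carries the essential content of the fusion result.
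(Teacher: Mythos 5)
Your proposal is correct and follows essentially the same route as the paper: both parts are handled by a direct application of Cram\'er's theorem, with the fusion state rewritten as a time-average of the i.i.d. cross-agent means $\bar Z_s$, whose log-moment generating function is identified as $N\Lambda(\lambda/N)$ and then conjugated via the substitution $\mu=\lambda/N$ to give $NI$. No gaps.
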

Clearly, by the strong law of large numbers, 
 with both isolated nodes and fusion center, the corresponding states $X_{i,t}^{\mathrm{isol}}$, $i=1,...,N$, and $X_{t}^{\mathrm{cen}}$ converge a.s. to $m:=\mathbb E[Z_{i,t}]$.
\begin{proof}
Since $Z_{i,s}$ are i.i.d., part~\ref{part-isolation} follows by a direct application of Cram\'er's theorem~\cite{Cramers-article}, \cite[p.36]{DemboZeitouni}.
Turning to part~\ref{part-fusion}, note that $X_{t}^{\mathrm{cen}}$ can be written as an average of i.i.d. samples $1/N \sum_{i=1}^N Z_{i,s}$, $X_{t}^{\mathrm{cen}}= 1/t\sum_{s=1}^t 1/N \sum_{i=1}^N Z_{i,s}$. Thus, again by an application of Cram\'er's theorem~\cite{Cramers-article}, we see that to prove part~\ref{part-fusion} it suffices to show that the conjugate of the log-moment generating function of $1/N \sum_{i=1}^N Z_{i,s}$ is $NI$. Computing the log-moment generating function of $1/N \sum_{i=1}^N Z_{i,s}$ at $\lambda\in \mathbb R^d$, we obtain:
\begin{align*}
\log \mathbb E\left[ e^{\frac{1}{N} \sum_{i=1}^N \lambda^\top Z_{i,s}}\right]&= \sum_{i=1}^N \log \mathbb E\left[e^{\frac{1}{N} \lambda^\top Z_{i,s}}\right]= N \Lambda\left(\frac{\lambda}{N}\right),
\end{align*}
where in the first equality we used the fact that the $Z_{i,s}$ are independent, for fixed $s$, and in the second equality we used that they are identically distributed, with log-moment generating function $\Lambda$. Finally, simple algebraic manipulations reveal that the conjugate of $ N \Lambda(\lambda/N)$ equals $NI$: for any $x\in \mathbb R^d$
\begin{align*}
\sup_{\lambda \in \mathbb R^d} x^\top \lambda -  N \Lambda\left(\frac{\lambda}{N}\right)=
N \left(\sup_{\lambda \in \mathbb R^d} x^\top \left(\frac{\lambda}{N}\right) -  N \Lambda\left(\frac{\lambda}{N}\right)\right) = NI(x).
\end{align*}
\end{proof}

Theorem~\ref{theorem-isolation-and-fusion} asserts that the rate function of any isolated agent $i$ is $I_i^{\mathrm{isol}}\equiv I$, where $I$ is the conjugate of the log-moment generating function of its observation, whereas the rate function of the network-wide fusion is $N$ times higher, $I_i^{\mathrm{cen}}\equiv NI$. Intuitively, for the general case of algorithm~\eqref{alg-D-INF}, we expect that the rate function of a fixed agent $i$ should be between these two functions, $I$ and $NI$. It turns out that this is indeed the case -- Corollary~\ref{corollary-between-one-and-all-deterministic} proves this for deterministic matrices, and Theorem~\ref{theorem-universal-limits} later in Section~\ref{sec-Generic} confirms that this is true even for arbitrary (asymmetric) random matrices.
\section{Rate functions $I_i$ for deterministic weight matrices}
\label{sec-rate-fcns-deterministic}
This section considers deterministic weight matrices. The first result that we present, Theorem~\ref{theorem-deterministic}, computes the rate functions $I_i$ for the case when the weight matrices at all times are equal to a stochastic matrix $A$ such that $|\lambda_2(A)|<1$. (This means that the underlying network has only one initial class\footnote{An initial class of a directed graph $G$ is any communication class of $G$ that has no incoming edges~\cite{Tahbaz-Salehi08}. We also note that initial classes of $G$ correspond to essential classes of the transpose of $G$ (the graph that results from reversing the directions of edges in $G$~\cite{SenetaBook}).}, e.g.,~\cite{Kirkland09,SenetaBook}.) We then focus on the special case when all observations are Gaussian (with possibly different parameters across agents), and we calculate the rate functions in closed form. Further, we formulate the problem of optimal network design and show that it can be efficiently solved by an SDP when the observations are Gaussian.

\begin{theorem}
\label{theorem-deterministic}
Let $W_t\equiv A$ for each $t$ and let Assumptions~\ref{ass-network-and-observation-model} and~\ref{ass-finite-at-all-points} hold. Suppose that $|\lambda_2(A)|<1$  and let $a$ denote the left eigenvector of $A$ corresponding to the eigenvalue $1$. Then,
for each $i$, $X_{i,t}$, $t=1,2,...$ satisfies the LDP with the rate function $I_i\equiv \widetilde I$,
where $\widetilde I$ is the conjugate of \[\widetilde\Lambda(\lambda):=\sum_{j=1}^N \Lambda_j(a_j\lambda),\;\;\;\;\;\lambda \in \mathbb R^d.\] Moreover, for each $i$, $X_{i,t}$ converges a.s. to $\widetilde m:=\sum_{j=1}^N a_j m_j$.
\end{theorem}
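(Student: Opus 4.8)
The plan is to reduce the study of $X_{i,t}$ to that of an empirical mean of i.i.d.\ random vectors, for which Cram\'er's theorem applies directly, and then to argue that the remaining discrepancy is negligible on the exponential (large deviations) scale. First I would exploit the spectral structure of $A$. Since $W_t\equiv A$, the transition matrix factorizes as $\Phi(t,s)=A^{t-s+1}$, and the hypothesis $|\lambda_2(A)|<1$ together with row-stochasticity ($A1_N=1_N$) forces $1$ to be a simple, strictly dominant eigenvalue; hence $A^k\to 1_N a^\top$ geometrically, where $a$ is the left eigenvector normalized by $a^\top 1_N=1$ (being the Perron vector, $a\ge 0$ and $\sum_j a_j=1$, so that $\widetilde m$ is a genuine convex combination). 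Concretely there are constants $C<\infty$ and $\rho:=|\lambda_2(A)|<1$ with $|[A^k]_{ij}-a_j|\le C\rho^k$. Using this I would split
\[
X_{i,t}=\overline Y_t + R_{i,t},\qquad \overline Y_t:=\frac1t\sum_{s=1}^t Y_s,\quad Y_s:=\sum_{j=1}^N a_j Z_{j,s},\quad R_{i,t}:=\frac1t\sum_{s=1}^t\sum_{j=1}^N\bigl([A^{t-s+1}]_{ij}-a_j\bigr)Z_{j,s}.
\]

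The leading term $\overline Y_t$ is an empirical mean of the vectors $Y_s$, which by Assumption~\ref{ass-network-and-observation-model} are i.i.d.\ across $s$; using independence across nodes, their common log-moment generating function is exactly $\log\mathbb E[e^{\lambda^\top Y_s}]=\sum_{j}\Lambda_j(a_j\lambda)=\widetilde\Lambda(\lambda)$. By Cram\'er's theorem, $\overline Y_t$ satisfies the LDP with rate function $\widetilde I$, the conjugate of $\widetilde\Lambda$, and by the strong law of large numbers $\overline Y_t\to\mathbb E[Y_s]=\sum_j a_j m_j=\widetilde m$ almost surely.

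It then remains to transfer the LDP from $\overline Y_t$ to $X_{i,t}$, and this is the main obstacle: I must show $X_{i,t}$ and $\overline Y_t$ are \emph{exponentially equivalent}, i.e.\ $\frac1t\log\mathbb P(\|R_{i,t}\|>\delta)\to-\infty$ for every $\delta>0$; the LDP for $X_{i,t}$ with the same rate $\widetilde I$ (and, since $i$ is arbitrary, $I_i\equiv\widetilde I$ at every node) then follows from the exponential-equivalence principle \cite{DemboZeitouni}. I would establish the super-exponential decay via a Chernoff bound on $tR_{i,t}=\sum_{s=1}^t\sum_j([A^{t-s+1}]_{ij}-a_j)Z_{j,s}$: for a unit direction $v$ and $\mu>0$, independence across $s$ and $j$ gives
\[
\mathbb P\bigl(v^\top(tR_{i,t})>t\delta\bigr)\le \exp\Bigl(-\mu t\delta+\sum_{s=1}^t\sum_{j=1}^N\Lambda_j\bigl(\mu\,([A^{t-s+1}]_{ij}-a_j)\,v\bigr)\Bigr).
\]
The crucial point is that the correction term in the exponent is bounded \emph{uniformly in $t$}: because each $\Lambda_j$ is finite everywhere and $C^\infty$ with $\Lambda_j(0)=0$ (Assumption~\ref{ass-finite-at-all-points} and the properties lemma), it is locally Lipschitz near the origin, and the coefficients satisfy $\sum_{s=1}^t|[A^{t-s+1}]_{ij}-a_j|\le\sum_{k\ge1}C\rho^k=\tfrac{C\rho}{1-\rho}<\infty$. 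Hence for fixed $\mu$ the whole double sum is at most a constant $K(\mu)$ independent of $t$ (the finitely many small-$k$ terms with possibly large argument are each finite and bounded in number, and the remaining small-argument terms are controlled by the Lipschitz bound times the summable geometric tail). Therefore $\frac1t\log\mathbb P(v^\top(tR_{i,t})>t\delta)\le-\mu\delta+K(\mu)/t\to-\mu\delta$, and letting $\mu\to+\infty$ yields $-\infty$; a union bound over the $2d$ signed coordinate directions upgrades this to $\|R_{i,t}\|$. The same super-exponential bound, summed over $t$ and combined with Borel--Cantelli, gives $R_{i,t}\to0$ almost surely, which together with $\overline Y_t\to\widetilde m$ completes the claimed a.s.\ convergence $X_{i,t}\to\widetilde m$.
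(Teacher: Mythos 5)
Your proof is correct, but it takes a genuinely different route from the paper. The paper applies the G\"{a}rtner--Ellis theorem directly to $X_{i,t}$: it computes the conditional scaled log-moment generating function $\Lambda_t(\lambda)=\sum_{j}\frac{1}{t}\sum_{r=1}^t\Lambda_j([A^r]_{ij}\lambda)$ and observes that, since $[A^r]_{ij}\to a_j$, the C\'esaro averages converge to $\widetilde\Lambda(\lambda)=\sum_j\Lambda_j(a_j\lambda)$, which (being finite and differentiable everywhere) is essentially smooth, so G\"{a}rtner--Ellis yields the LDP at once; the a.s.\ convergence is then extracted from the LDP upper bound via a conjugate-duality argument showing $\widetilde I$ vanishes only at $\widetilde m$, plus Borel--Cantelli. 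You instead split $X_{i,t}=\overline Y_t+R_{i,t}$, apply Cram\'er's theorem to the i.i.d.\ average $\overline Y_t$, and prove exponential equivalence by a Chernoff bound on the remainder. Both arguments are sound. The trade-offs: your decomposition makes the a.s.\ convergence essentially free (SLLN plus Borel--Cantelli on $R_{i,t}$) and sidesteps any discussion of essential smoothness or exposed points, but it leans crucially on the \emph{summability} $\sum_k|[A^k]_{ij}-a_j|<\infty$, i.e.\ on the geometric rate of $A^k\to 1_Na^\top$, whereas the paper's C\'esaro argument needs only plain convergence $[A^r]_{ij}\to a_j$ and would survive a slower mixing rate. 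Two cosmetic points worth tightening: if $A$ is not diagonalizable you should take $\rho\in(|\lambda_2(A)|,1)$ rather than $\rho=|\lambda_2(A)|$ to absorb polynomial Jordan-block factors into the geometric bound; and in controlling $\sum_{s,j}\Lambda_j(\mu c_{s,j}v)$ it is cleaner to note that all arguments lie in the fixed compact ball $\overline B_0(\mu C\rho)$, on which each $\Lambda_j$ is Lipschitz, so the whole double sum is bounded by a constant times $\sum_k\rho^k$ --- no case split between large and small arguments is needed.
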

\begin{proof}
To prove the first part of the theorem, we apply the G\"{a}rtner-Ellis theorem~\cite{DemboZeitouni}. Fix $i\in V$ and let $\Lambda_t(\lambda):= \frac{1}{t}\log \mathbb E\left[e^{t\lambda X_{i,t}} \right]$, for $\lambda \in \mathbb R^d$.  Using that $Z_{i,t}$ are independent and that $\Phi(t,s)=A^{t-s+1}$ are constant, we obtain
\begin{align}
\label{eq-computing-Lambda-t}
\Lambda_t(\lambda)& = \frac{1}{t}\log \mathbb E\left[e^{\lambda  \sum_{s=1}^t\sum_{j=1}^N  [\Phi(t,s)]_{ij} Z_{j,s}}   \right] \nonumber\\
&= \frac{1}{t} \sum_{s=1}^t \sum_{j=1}^N \log  \mathbb E\left[ e^{\lambda  [\Phi(t,s)]_{ij} Z_{j,s} }\right]\nonumber \\
&= \frac{1}{t} \sum_{s=1}^t \sum_{j=1}^N \Lambda_j\left([A^{t-s+1}]_{ij} \lambda\right)\nonumber \\
&= \sum_{j=1}^N \frac{1}{t} \sum_{r=1}^t \Lambda_j\left([A^{r}]_{ij} \lambda\right).
\end{align}
From $|\lambda_2(A)|<1$ we have that $A^r\rightarrow 1a^\top$ as $r\rightarrow +\infty$~\cite{MatrixAnalysis}, and, hence, for any $i$, $[A^r]_{ij}\rightarrow a_j$.
Consider now a fixed~$j$. Then, by continuity of $\Lambda_j$, $\Lambda_j\left([A^{r}]_{ij} \lambda\right)\rightarrow \Lambda_j(a_j\lambda)$, and hence the C\'esaro averages must converge to the same number:
\begin{equation*}
\lim_{t\rightarrow +\infty} \frac{1}{t} \sum_{r=1}^t \Lambda_j\left([A^{r}]_{ij} \lambda\right)= \Lambda_j(a_j\lambda).
\end{equation*}
Going back to~\eqref{eq-computing-Lambda-t} and taking the limit yields $\lim_{t\rightarrow +\infty}\Lambda_t(\lambda) = \sum_{j=1}^N \Lambda(a_j \lambda).$
Thus, conditions for applying the G\"{a}rtner-Ellis theorem are fulfilled, and thus we have that, for each $i$, $X_{i,t}$ satisfies the large deviations principle with the rate function equal to the conjugate of $\sum_{j=1}^N \Lambda_j(a_j \lambda)$.

It remains to prove that $X_{i,t}$ at each $i$ converges to $\widetilde m=\sum_{j=1}^N a_j m_j$. First, note that $\widetilde \Lambda$ is in fact the log-moment generating function of $\sum_{j=1}^N a_j Z_{j,t}$, for any $t$. This easily follows from the independence of the $Z_{j,t}$'s, for $t$ fixed:
\begin{align*}
\log\mathbb E\left[e^{\lambda^\top \sum_{j=1}^N a_j Z_{j,t} }\right]= \sum_{j=1}^N \log \mathbb E\left[ e^{a_j \lambda^\top Z_{j,t}} \right] =\sum_{j=1}^N \Lambda_j(a_j\lambda).
\end{align*}
Thus, being a log-moment generating function, $\widetilde \Lambda$ satisfies the properties given in Lemma~\ref{lemma-properties-of-lmgf}. In particular, from the lower semicontinuity and convexity of $\widetilde \Lambda$ it follows that $\widetilde \Lambda$ and $\widetilde I$ are the conjugates of each other. Invoking a classical result for conjugate functions, see, e.g., eq.~(1.4.6) on p.~222 in~\cite{Urruty}, we have:
\begin{equation}
\mbox{Argmin} \left\{\widetilde I(x):\,x\in \mathbb R^d\right\}\,=\,\partial \widetilde \Lambda (0),
\end{equation}
where, we recall, $\partial \widetilde \Lambda (0)$ denotes the subdifferential\footnote{The subdifferential of a convex function $f:\mathbb R^d\mapsto\mathbb R$ at a point $x\in \mathbb R^d$ is the set of all points $s\in \mathbb R^d$ such that, for all $y\in \mathbb R^d$, $f(y)\geq f(x)+s^\top (y-x)$~\cite{Urruty}.} of $\widetilde \Lambda$ at $\lambda=0$. We will show that $\partial \widetilde \Lambda (0)$ is a singleton and that it equals $\{\widetilde m\}$. To do this, note that, by our assumption, $\mathcal D_i=\mathbb R^d$ for each $i$. Thus, $\mathcal D_{\widetilde \Lambda}=\mathbb R^d$. In particular, $0\in \mathcal D_{\widetilde \Lambda}^\mathrm{o}$ and the claim follows by combining parts~\ref{Lambda-at-0} and~\ref{Lambda-is-smooth} of Lemma~\ref{lemma-properties-of-lmgf} and noting that $\mathbb E[\sum_{j=1}^N a_j Z_{j,t}]=\sum_{j=1}^N a_jm_j=\widetilde m$. We conclude that $\widetilde I(x)=0$ if and only $x=\widetilde m$.

We next use the previous conclusion together with convexity of $\widetilde I$ to show that, for any $\epsilon>0$,
\begin{equation}
\label{claim-2}
\inf_{x: \,\|x-\widetilde m\|\geq \epsilon} \widetilde I(x)>0.
\end{equation}
First, since $\widetilde I$ is convex and it achieves its minimum at $\widetilde m$, it must be that $\widetilde I$ is nondecreasing along any half-line that starts at $\widetilde m$. Hence, $\inf_{t\in[\epsilon,+\infty)} \widetilde I(\widetilde m +td)= \widetilde I(\widetilde m +\epsilon d)$, for any $d$. This in particular implies that $\inf_{x\in \mathbb R^d:\|x-\widetilde m\|\geq \epsilon} \widetilde I(x) = \inf_{x\in \mathbb R^d:\|x-\widetilde m\|= \epsilon}  \widetilde I(x)$. To prove the claim in~\eqref{claim-2}, we need to show that the preceding infimum is strictly
greater than zero. Since $\widetilde I$ is lower semi-continuous and the set under the infimum $\partial B_{\widetilde m}(\epsilon)$ is compact, it follows by Weierstrass theorem that $\widetilde I$ attains a minimum on $\partial B_{\widetilde m}(\epsilon)$; denote this minimum by $\widehat x_{\epsilon}$. Recalling now that $\widetilde I(x)>0$ for any $x\neq \widetilde m$, we conclude that it must be that $\widetilde I(\widehat x_{\epsilon})>0$. This concludes the proof of the claim in~\eqref{claim-2}.

Having~\eqref{claim-2}, it is now easy to complete the proof of the second part of Theorem~\ref{theorem-deterministic}. Fix $i\in V$. From the upper bound of LDP, proved in the first part, and~\eqref{claim-2}, we have that for any $\epsilon>0$:
\begin{equation}
\limsup_{t\rightarrow +\infty}\,\frac{1}{t}\,\log\,\mathbb P\left(\left\|X_{i,t}-\widetilde m\right\|\geq \epsilon\right)
\leq - C_{\epsilon}<0,
\end{equation}
where we denoted $C_{\epsilon}= \widetilde I(\widehat x_{\epsilon})$. The previous inequality implies that for any $\delta>0$ we can find a constant $K_{\delta}$ such that, for all $t$, $\mathbb P\left(\left\|X_{i,t}-\widetilde m\right\|\geq \epsilon\right)\leq K_{\delta}e^{-(C_{\epsilon}-\delta)t}.$ Choosing for each $\epsilon$, $\delta=C_{\epsilon}/2$, we obtain exponential convergence of $X_{i,t}$ to $\widetilde m$. By the first Borel-Cantelli lemma~\cite{Karr}, this in turn implies almost sure convergence of $X_{i,t}$.
\end{proof}
Let $G$ denote the induced graph of $A$, i.e., $G=(V,E)$ where $E=\left\{ (i,j): A_{ji}>0\right\}$, e.g.,~\cite{Lobel11}.
\begin{corollary}
\label{corollary-between-one-and-all-deterministic}
When $Z_{i,t}$ are i.i.d. (identical agents), it holds
\begin{equation}
\label{eq-bounds-deterministic}
I\leq \widetilde I\leq NI,
\end{equation}
where $I$ is the conjugate of an agent's log-moment generating function $\Lambda\equiv\Lambda_j$ and the inequalities in~\eqref{eq-bounds-deterministic} hold in the pointwise sense. Moreover, the lower bound in~\eqref{eq-bounds-deterministic} is attained whenever there exists a ``leader'' agent $i$ that satisfies $A_{ii}=1$ and for any $j$ there is a (directed) path from $i$ to $j$ in the induced graph of $A$. The upper bound is attained when $A$ is doubly stochastic with positive diagonals and the induced graph of $A$ is strongly connected.
\end{corollary}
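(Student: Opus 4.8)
The plan is to derive the pointwise bounds directly from Lemma~\ref{simple-lemma} by transferring inequalities between log-moment generating functions to their conjugates. First I would record that, since $A$ is stochastic and $|\lambda_2(A)|<1$, Perron--Frobenius theory makes the eigenvalue $1$ simple, so the left eigenvector $a$ is unique up to scaling and may be normalized to a probability vector, $a\in\Delta^{N-1}$. With the observations i.i.d.\ we have $\Lambda_j\equiv\Lambda$, so $\widetilde\Lambda(\lambda)=\sum_{j=1}^N\Lambda(a_j\lambda)$. Applying Lemma~\ref{simple-lemma} with $\widehat\Lambda=\Lambda$ and $\alpha=a$ yields, for every $\lambda\in\mathbb R^d$,
\[
N\Lambda\!\left(\tfrac{1}{N}\lambda\right)\ \leq\ \widetilde\Lambda(\lambda)\ \leq\ \Lambda(\lambda).
\]
I would then invoke the order-reversing property of the Fenchel--Legendre transform: if $f\leq g$ pointwise then $f^\ast\geq g^\ast$ pointwise, which is immediate from the definition~\eqref{def-Conjugate}. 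The conjugate of $\Lambda$ is $I$, the conjugate of $\widetilde\Lambda$ is $\widetilde I$ by definition, and the conjugate of $N\Lambda(\cdot/N)$ is $NI$ --- this is exactly the computation carried out in the proof of Theorem~\ref{theorem-isolation-and-fusion}. Conjugating the displayed chain therefore reverses it into $NI\geq\widetilde I\geq I$, which is~\eqref{eq-bounds-deterministic}.

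For the lower bound I would show that the leader hypothesis forces $a=e_i$. Since $A$ is stochastic and $A_{ii}=1$, row $i$ of $A$ equals $e_i^\top$, whence $(e_i^\top A)_j=A_{ij}=\delta_{ij}$, i.e.\ $e_i^\top A=e_i^\top$; thus $e_i$ is a left eigenvector of $A$ for the eigenvalue $1$. By uniqueness of the normalized Perron vector under $|\lambda_2(A)|<1$ I conclude $a=e_i$. (The reachability hypothesis --- a directed path from $i$ to every $j$ in the induced graph, equivalently every state reaching the absorbing state $i$ in the Markov chain with transition matrix $A$ --- is precisely what makes $i$ the unique recurrent class, ruling out competing eigenvectors.) Substituting $a=e_i$ and using $\Lambda(0)=0$ from Lemma~\ref{lemma-properties-of-lmgf} gives $\widetilde\Lambda(\lambda)=\Lambda(\lambda)+\sum_{j\neq i}\Lambda(0)=\Lambda(\lambda)$, so the two functions coincide and hence so do their conjugates, $\widetilde I\equiv I$.

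For the upper bound the analogous step is to show that the stated hypotheses force $a=\tfrac1N 1_N$. Double stochasticity means $1_N^\top A=1_N^\top$, so $\tfrac1N 1_N$ is a left eigenvector for the eigenvalue $1$; positive diagonals together with strong connectivity make $A$ primitive, giving $|\lambda_2(A)|<1$ and hence uniqueness, so $a=\tfrac1N 1_N$. Then $\widetilde\Lambda(\lambda)=\sum_{j=1}^N\Lambda(\tfrac1N\lambda)=N\Lambda(\tfrac1N\lambda)$, whose conjugate is $NI$, giving $\widetilde I\equiv NI$. I expect the only delicate point to be the bookkeeping that identifies the Perron vector in each extremal case --- in particular, translating the induced-graph reachability condition into uniqueness of the stationary distribution; the inequality itself is a one-line consequence of Lemma~\ref{simple-lemma} and the monotonicity of conjugation.
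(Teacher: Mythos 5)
Your proposal is correct and follows essentially the same route as the paper: the pointwise bounds come from Lemma~\ref{simple-lemma} applied with $\alpha=a$ followed by the order-reversal of Fenchel conjugation (the paper writes this out as taking suprema of $\lambda^\top x$ minus each term), and the extremal cases are handled by identifying $a=e_i$ and $a=\tfrac{1}{N}1_N$ respectively. The only cosmetic difference is that you phrase the verification of $|\lambda_2(A)|<1$ and the uniqueness of $a$ in Markov-chain language (unique recurrent/absorbing class, primitivity), whereas the paper argues via initial classes of the induced graph and aperiodicity; the content is the same.
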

\begin{proof}
When $Z_{i,t}$ are i.i.d.,
\vspace{-2mm}
\begin{equation}
\label{eq-tilde-Lambda-for-iid}
\widetilde \Lambda(\lambda)= \sum_{j=1}^N \Lambda(a_j\lambda).
\end{equation}
\vspace{-0.5mm}
By Lemma~\ref{simple-lemma} applied to $\alpha=a$ (note that $a$ is a stochastic vector), from eq.~\eqref{eq-tilde-Lambda-for-iid} we obtain
\begin{equation*}
\lambda^\top x - \Lambda(\lambda) \leq \lambda^\top x - \widetilde \Lambda (\lambda)\leq \lambda^\top x - N\Lambda\left(1/N\lambda\right).
\end{equation*}
Taking the supremum, the right-hand side inequality in~\eqref{eq-bounds-deterministic} follows by the following simple manipulations $\sup_{\lambda \in \mathbb R^d}\lambda^\top x - N\Lambda\left(1/N\lambda\right)= N (\sup_{{\lambda^\prime} \in \mathbb R^d} {\lambda^\prime}^\top x - \Lambda(\lambda^\prime))= N I(x).$
 The left-hand side inequality in~\eqref{eq-bounds-deterministic} is proven similarly. By Lemma~\ref{simple-lemma}, the log-moment generating in~\eqref{eq-tilde-Lambda-for-iid} is upper bounded by $\Lambda$, and, by similar calculations as in the above, we get $\widetilde I(x)= \sup_{\lambda\in \mathbb R^d} \lambda^\top  x- \widetilde \Lambda(\lambda)\geq \sup_{\lambda\in \mathbb R^d} \lambda^\top x - \Lambda(\lambda)=I(x).$

To prove the second part of Corollary~\ref{corollary-between-one-and-all-deterministic}, suppose that, for some $i$, $A_{ii}=1$ and that in the induced graph of $A$ there is a path from $i$ to any node $j$. By Theorem~\ref{theorem-deterministic} the claim is proven if we show that $\lambda_2(A)<1$ and that $e_i$ is the left eigenvector of $A$ corresponding to the eigenvalue $1$. Let $G$ denote the induced graph of $A$. To prove that $\lambda_2(A)<1$, it suffices to show that $G$ has exactly one initial class and that this class is aperiodic~\cite{Kirkland09}. Since $A$ is stochastic and $A_{ii}=1$, we have $A_{ij}=0$ for all $j\in V$, $j\neq i$. Thus, $\{i\}$ is an initial class of $G$. We next show that this is in fact the only initial class in $G$. Fix a node $j\neq i$ and let $C(j)$ denote the class of $G$ that $j$ belongs to. Note that $C(j)$ cannot contain $i$ (otherwise it would be possible to reach $i$ from $j$, which can't be true because $A_{il}=0$, for all $l\neq i$, and thus there are no edges pointing to $i$). Since (by our assumption) $j$ can be reached by a directed path from $i$, and, on the other hand, $i\notin C(j)$, there must be an edge pointing to $C(j)$. Hence, $C(j)$ is not an initial class of $G$. Repeating this for every $j$, we prove that there are no other initial class of $G$ beside $\{i\}$. Finally, it easy to see that $\{i\}$ is also aperiodic ($A_{ii}>0$), hence proving that $\lambda_2(A)<1$.

It only remains to verify that $e_i$ is the left eigenvector: since $A_{ii}=1$ and $A$ is stochastic, the $i$-th row of $A$, $e_i^\top A$, equals $e_i$. This completes the proof of the claim.

Suppose now that $A$ is doubly stochastic with positive diagonals and a strongly connected induced graph. Similarly as with the lower bound, by Theorem~\ref{theorem-deterministic}, it is sufficient to prove that $1/N 1_N$ is the left eigenvector of $A$ corresponding to the eigenvalue $1$ and that $\lambda_2(A)<1$. Since $A$ is doubly stochastic, it must be that $a^\top A=a^\top$ for $a=1/N 1_N$. Finally, since $A$ has positive diagonals and a strongly connected induced graph, we have that $A$ is irreducible and aperiodic, and hence $\lambda_2(A)<1$~\cite{Kirkland09} (see also Corollary 8.4.8. in~\cite{MatrixAnalysis}). This completes the proof of Corollary~\ref{corollary-between-one-and-all-deterministic}.
\end{proof}

\vspace{-0.5mm}
\mypar{Rate $\widetilde I$ for Gaussian observations} Of special interest is the case when observations $Z_{i,t}$ are all Gaussian. For this case, Lemma~\ref{lemma-Gaussian} gives a closed form expression for the rate function~$\widetilde I$.
\begin{lemma}
\label{lemma-Gaussian}
Suppose that $Z_{j,t}\sim \mathcal N\left(m_j,S_j\right)$, for $j\in V$, where $S_j$, for each $j$, is a positive definite matrix. Function $\widetilde I$ from Theorem~\ref{theorem-deterministic} is then given by
\begin{equation}
\label{eq-analytical-for-Gaussian}
\widetilde I(x)= \frac{1}{2}(x-\widetilde m)\widetilde S^{-1}(x-\widetilde m),
\end{equation}
where $\widetilde m= \sum_{j=1}^N a_j m_j$ and $\widetilde S=\sum_{j=1}^N a_j^2 S_j.$
 In particular, when $m_j\equiv m$ and $S_j\equiv S$, $\widetilde I(x)= 1 /(\sum_{j=1}^N a_j^2) I(x)$, where $I(x)$ is the nodes' individual rate function given in Example~\ref{ex-lmgf-Gauss}.
\end{lemma}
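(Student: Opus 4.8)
The plan is to exploit the fact that, by Theorem~\ref{theorem-deterministic}, $\widetilde I$ is the Fenchel-Legendre conjugate of $\widetilde\Lambda(\lambda)=\sum_{j=1}^N\Lambda_j(a_j\lambda)$, and that in the Gaussian case each $\Lambda_j$ is quadratic; consequently $\widetilde\Lambda$ will again be quadratic and its conjugate can be read off directly from Example~\ref{ex-lmgf-Gauss}. First I would substitute the closed form $\Lambda_j(\lambda)=m_j^\top\lambda+\tfrac12\lambda^\top S_j\lambda$ into the definition of $\widetilde\Lambda$. Replacing $\lambda$ by $a_j\lambda$ scales the linear term by $a_j$ and the quadratic term by $a_j^2$, so after summing over $j$ I expect to obtain
\begin{equation*}
\widetilde\Lambda(\lambda)=\Big(\sum_{j=1}^N a_j m_j\Big)^\top\lambda+\frac12\,\lambda^\top\Big(\sum_{j=1}^N a_j^2 S_j\Big)\lambda=\widetilde m^\top\lambda+\frac12\,\lambda^\top\widetilde S\,\lambda,
\end{equation*}
which is precisely the log-moment generating function of a $\mathcal N(\widetilde m,\widetilde S)$ random vector (consistent with the observation, made in the proof of Theorem~\ref{theorem-deterministic}, that $\widetilde\Lambda$ is the log-moment generating function of $\sum_{j=1}^N a_j Z_{j,t}$).

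Next I would invoke Example~\ref{ex-lmgf-Gauss} a second time, now in the conjugation direction: the conjugate of $\widetilde m^\top\lambda+\tfrac12\lambda^\top\widetilde S\lambda$ is $\tfrac12(x-\widetilde m)^\top\widetilde S^{-1}(x-\widetilde m)$, which is exactly~\eqref{eq-analytical-for-Gaussian}. The only point that requires care, and the one I regard as the main (if minor) obstacle, is that this formula presupposes $\widetilde S$ to be invertible, i.e., positive definite. I would establish this directly from the fact that $a$ is a stochastic vector: since $a\in\Delta^{N-1}$ we have $\sum_{j}a_j=1$ with all $a_j\ge 0$, so at least one index $k$ satisfies $a_k>0$. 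Then for every $v\neq 0$,
\begin{equation*}
v^\top\widetilde S\,v=\sum_{j=1}^N a_j^2\,v^\top S_j v\ge a_k^2\,v^\top S_k v>0,
\end{equation*}
because $S_k$ is positive definite and the remaining summands are nonnegative. Hence $\widetilde S\succ 0$, its inverse exists, and the conjugate formula applies verbatim.

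Finally, for the specialization $m_j\equiv m$, $S_j\equiv S$, I would substitute into the definitions of $\widetilde m$ and $\widetilde S$: using $\sum_{j}a_j=1$ gives $\widetilde m=m$, while factoring the scalar out of the covariance gives $\widetilde S=\big(\sum_{j}a_j^2\big)S$, so that $\widetilde S^{-1}=\big(\sum_{j}a_j^2\big)^{-1}S^{-1}$. Plugging this into~\eqref{eq-analytical-for-Gaussian} and recognizing $\tfrac12(x-m)^\top S^{-1}(x-m)=I(x)$ from Example~\ref{ex-lmgf-Gauss} yields $\widetilde I(x)=\big(\sum_{j}a_j^2\big)^{-1}I(x)$, as claimed. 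Overall the argument is essentially a short algebraic computation bracketed by two applications of Example~\ref{ex-lmgf-Gauss}; no large-deviations machinery beyond Theorem~\ref{theorem-deterministic} is required.
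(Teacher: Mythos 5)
Your proposal is correct and follows essentially the same route as the paper: substitute the quadratic form of each $\Lambda_j$ into $\widetilde\Lambda$, identify $\widetilde\Lambda(\lambda)=\widetilde m^\top\lambda+\tfrac12\lambda^\top\widetilde S\lambda$, and conjugate (the paper carries out the conjugation explicitly via the first-order optimality condition, whereas you read it off from Example~\ref{ex-lmgf-Gauss}; the two are equivalent). Your explicit check that $\widetilde S\succ 0$ because some $a_k>0$ is a small but welcome addition that the paper leaves implicit.
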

\begin{proof}
Fix $x\in \mathbb R^d$ and recall that the log-moment generating function of a Gaussian vector of mean $m$ and covariance $S$ is $\lambda \mapsto \lambda^\top m + 1/2 \lambda^\top S\lambda$. Then $\widetilde \Lambda(\lambda) = \sum_{j=1}^N a_j \lambda^\top m_j + a_j^2 \frac{1}{2} \lambda^\top S_j\lambda,$
and thus
\begin{equation}
\label{eq-proof-widetilde-I-Gaussian}
\widetilde I(x)= \sup_{\lambda \in \mathbb R^d} \lambda^\top x - \sum_{j=1}^N a_j \left(\lambda^\top m_j + a_j^2 \frac{1}{2} \lambda^\top S_j\lambda\right).
\end{equation}
Since the function under the supremum is (strictly) concave, we obtain the optimizer $\lambda^\star$ from the first order optimality condition \[x-\sum_{j=1}^N a_j m_j -\sum_{j=1}^N a_j^2 S_j \lambda=0.\]
 It follows that $\lambda^\star= \left(\sum_{j=1}^N a_j^2 S_j\right)^{-1} \left(x-\sum_{j=1}^N a_j m_j\right)$, which, when inserted in~\eqref{eq-proof-widetilde-I-Gaussian}, yields the identity~\eqref{eq-analytical-for-Gaussian}.
\end{proof}

\begin{remark} It is possible to determine $\widetilde I$ analytically even when matrices $S_j$, $j=1,...,N$, and vector $a$ are such that $\widetilde S$ is not invertible. It can be shown that the expression for $\widetilde I$ for this case is:
\begin{equation*}
\widetilde I(x)=\left\{
\begin{array}{ll}
(x-\widetilde m)^\top \widetilde S^\dagger(x-\widetilde m),&x \in \mathcal R (\widetilde S)\\
+\infty,& \mathrm{otherwise}
\end{array}\right..
\end{equation*}
\end{remark}


\vspace{-1mm}
\section{Network design}
\label{sec-network-design}
From Theorem~\ref{theorem-deterministic} and Corollary~\ref{corollary-between-one-and-all-deterministic}, it is clear that the performance of algorithm~\eqref{alg-D-INF} critically depends on the choice of the weight matrix $A$, and in particular, on its left eigenvector $a$. We therefore pose the problem of optimizing $a$, for a fixed desired accuracy region $C$:

\begin{equation}
\begin{array}[+]{ll}
\mbox{maximize} & \inf_{x\in \mathbb R^d\setminus C} \widetilde I(x) \\
\mbox{subject to} & a \in \Delta^{N-1}
\end{array},
\label{design-problem}
\end{equation}
where $\widetilde I$ is the rate function from Theorem~\ref{theorem-deterministic}. We denote by $a^\star_C$ and $I^{\star}_C$, respectively, an optimal solution and the optimal value of problem~\eqref{design-problem}.

We exploit the analytical expression~\eqref{eq-analytical-for-Gaussian} for the rate function from Lemma~\ref{lemma-Gaussian}, to show that, for the Gaussian observations, problem~\eqref{design-problem} can be solved efficiently. We assume that all the nodes are observing the same set of physical quantities $\theta=(\theta_1,...,\theta_d)^\top$, embedded in the local sensor noises. Hence, the observations $Z_{i,t}$ have the same expected value $\theta =:m \equiv m_i$ across all nodes. We show in Lemma~\ref{lemma-SDP} that when $C$ is a ball,~\eqref{design-problem} can be formulated as an SDP.
\begin{lemma}
\label{lemma-SDP}
Consider the setup of Lemma~\ref{lemma-Gaussian} when $m_i\equiv m$. When the confidence set $C$ is an Euclidean ball of some arbitrary radius $\zeta>0$ centered at $m$, $B_m(\zeta)$, the optimal solution of~\eqref{design-problem} is obtained by solving:
\begin{equation}
\begin{array}[+]{lc}
\mbox{minimize} &\gamma \\
\mbox{subject to} & \left[\begin{array}{ll}
\gamma I_d & \mathcal I\, \widetilde {\mathcal S}\\
\widetilde {\mathcal S}\, \mathcal I^\top & I_{Nd}
\end{array}\right]\succeq 0\\
& a \in \Delta^{N-1}
\end{array},
\label{design-via-SDP}
\end{equation}
where $\widetilde {\mathcal S} \in \mathbb R^{Nd\times Nd}$ is a block diagonal matrix given by $\widetilde {\mathcal S}=\mathrm{diag}\left\{a_1 S_1^{1/2},\ldots,a_N S_N^{1/2}\right\}$, and $\mathcal I=[I_d\,\ldots I_d]\in \mathbb R^{d\times Nd}$, where $I_d$ repeats $N$ times. Furthermore, $I^\star_C=\zeta^2/(2 \gamma^\star)$, where $\gamma^\star$ is the optimum of~\eqref{design-via-SDP}.
\end{lemma}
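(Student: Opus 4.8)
The plan is to evaluate the inner worst-case rate $\inf_{x\in\mathbb R^d\setminus C}\widetilde I(x)$ in closed form as a function of $a$, recognize the resulting design problem as a minimum-eigenvalue minimization over the simplex, and then cast this as the SDP~\eqref{design-via-SDP} by a Schur-complement argument.

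First I would use that, since $m_i\equiv m$, Lemma~\ref{lemma-Gaussian} gives $\widetilde m=m$ and $\widetilde I(x)=\tfrac12(x-m)^\top\widetilde S^{-1}(x-m)$ with $\widetilde S=\sum_{j=1}^N a_j^2 S_j\succ0$. With $C=B_m(\zeta)$, substituting $y=x-m$ turns the inner problem into $\inf_{\|y\|\geq\zeta}\tfrac12\,y^\top\widetilde S^{-1}y$. Because $\widetilde S^{-1}$ is positive definite, the quadratic form is strictly increasing along every ray from the origin, so its infimum over the exterior $\{\|y\|\geq\zeta\}$ is attained on the sphere $\{\|y\|=\zeta\}$; a Rayleigh-quotient computation on that sphere then gives the value $\zeta^2\lambda_{\min}(\widetilde S^{-1})=\zeta^2/\lambda_{\max}(\widetilde S)$. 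Hence $\inf_{x\notin C}\widetilde I(x)=\zeta^2/(2\lambda_{\max}(\widetilde S))$, and maximizing this over $a\in\Delta^{N-1}$ is equivalent to minimizing $\lambda_{\max}(\widetilde S)$.

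Next I introduce the epigraph variable $\gamma$, so that minimizing $\lambda_{\max}(\widetilde S)$ becomes: minimize $\gamma$ subject to $\widetilde S\preceq\gamma I_d$ and $a\in\Delta^{N-1}$. To identify this with~\eqref{design-via-SDP}, it remains to show the LMI there is equivalent to $\sum_{j=1}^N a_j^2 S_j\preceq\gamma I_d$. Since $I_{Nd}\succ0$, the Schur complement of the lower-right block shows the block matrix is positive semidefinite iff $\gamma I_d-(\mathcal I\widetilde{\mathcal S})(\mathcal I\widetilde{\mathcal S})^\top\succeq0$; a direct block multiplication, using $\widetilde{\mathcal S}^2=\mathrm{diag}\{a_1^2 S_1,\ldots,a_N^2 S_N\}$ and $\mathcal I=[I_d\;\cdots\;I_d]$, yields $\mathcal I\widetilde{\mathcal S}^2\mathcal I^\top=\sum_{j=1}^N a_j^2 S_j=\widetilde S$. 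Thus the Schur condition is exactly $\widetilde S\preceq\gamma I_d$, so~\eqref{design-via-SDP} returns $\gamma^\star=\lambda_{\max}(\widetilde S^\star)$ at the optimal weights, and back-substitution into the closed-form inner value gives $I^\star_C=\zeta^2/(2\gamma^\star)$.

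The Rayleigh-quotient evaluation and the Schur-complement block algebra are routine; the step I expect to require the most care is the first reduction, namely justifying that the infimum of the rate function over the \emph{exterior} of the ball is attained on its boundary and equals $\zeta^2/\lambda_{\max}(\widetilde S)$, since the entire reformulation hinges on converting this worst-case rate into a clean minimum-eigenvalue objective. It is also worth recording that $a\mapsto\lambda_{\max}(\sum_j a_j^2 S_j)$ is convex (the map $a\mapsto\sum_j a_j^2 S_j$ is matrix-convex because each $a_j\mapsto a_j^2$ is convex and $S_j\succeq0$, and $\lambda_{\max}$ is convex and nondecreasing in the semidefinite order), so~\eqref{design-via-SDP} is a genuine convex program whose optimum is attained over the compact simplex.
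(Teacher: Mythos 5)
Your proposal is correct and follows essentially the same route as the paper's proof: reduce the worst-case rate over the exterior of the ball to the boundary sphere, evaluate it as $\zeta^2/(2\lambda_{\max}(\widetilde S))$ via the Rayleigh quotient, introduce an epigraph variable for $\lambda_{\max}$, and convert $\widetilde S\preceq\gamma I_d$ into the LMI of~\eqref{design-via-SDP} by a Schur complement on the (positive definite) lower-right block. Your added observations --- the explicit block computation $\mathcal I\widetilde{\mathcal S}^2\mathcal I^\top=\sum_j a_j^2 S_j$ and the convexity of $a\mapsto\lambda_{\max}(\sum_j a_j^2 S_j)$ --- are correct but not needed beyond what the paper records.
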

\begin{remark} Although problem~\eqref{design-problem} involves the expected value of the observations $m$ (which we don't know), it is clear from the equivalent reformulation~\eqref{design-via-SDP} that, under the stated assumption, the knowledge of $m$ is not needed for discovering the optimal $a$ in~\eqref{design-problem}. We also remark that, for the same assumptions, the solution of~\eqref{design-problem} does not depend on the particular accuracy $\zeta$: once~\eqref{design-via-SDP} is solved, the same vector $a^\star_C$ applies for all $C=B_m(\zeta)$, $\zeta>0$.
\end{remark}
\begin{remark}
When the observations are one-dimensional ($d=1$), it can be shown that the SDP in~\eqref{design-via-SDP} reduces to a quadratic program (QP).
\end{remark}
\begin{proof}
We start by finding a closed form expression for the objective function $\inf_{x\in \mathbb R^d\setminus B_m(\zeta)} \widetilde I(x)$, for a given vector $a$. Similarly as in the proof of Theorem~\ref{theorem-deterministic}, it can be shown that for any $\zeta>0$,
\begin{align*}
\inf_{x\in \mathbb R^d\setminus B_m(\zeta)} \widetilde I(x)  = \inf_{x\in \mathbb R^d: \|x-m\|\geq\zeta} \widetilde I(x) = \min_{x\in \mathbb R^d: \|x-m\|=\zeta} \widetilde I(x).
\end{align*}
It is easy to see that the latter problem can be reformulated as:
\begin{equation}
\min_{v \in \mathbb R^d: \|v\|=1} \frac{\zeta^2}{2} v^\top \widetilde S^{-1} v = 1/\lambda_{\max}(\widetilde S).
\end{equation}
Maximizing $1/\lambda_{\max}(\widetilde S)$ corresponds to minimizing $\lambda_{\max}(\widetilde S)$ and hence we obtain that~\eqref{design-problem} is equivalent to:
\begin{equation}
\begin{array}[+]{ll}
\mbox{minimize} & \lambda_{\max}\left( \sum_{j=1}^Na_j^2 S_j\right) \\
\mbox{subject to} &   a \in \Delta^{N-1}
\end{array}
\label{reformulation-1}
\end{equation}
where the optimal value of~\eqref{design-problem} $I^\star_C$ is obtained as $\zeta^2/(2 \lambda^\star)$, where $\lambda^\star$ is the optimal value of~\eqref{reformulation-1}.
We next show that~\eqref{reformulation-1} can be recast in the SDP form~\eqref{design-via-SDP}. Introducing the epigraph variable $\gamma \in \mathbb R$~\cite{BoydsBook} yields the constraint $\sum_{j=1}^N a_j^2 S_j\preceq \gamma I_d$, which can be equivalently represented as $\gamma I_d - \mathcal I \widetilde {\mathcal S} (I_{Nd})^{-1}\widetilde {\mathcal S} \mathcal I^\top\succeq 0$. Since the identity matrix $I_{Nd}$ is positive definite, equivalence of~\eqref{reformulation-1} and~\eqref{design-via-SDP} follows from the Schur complement theorem \cite{BoydsBook}.
\end{proof}

\section{Universal bounds on the rate functions for general, random weight matrices}
\label{sec-Generic}
We have seen in the previous section (Corollary~\ref{corollary-between-one-and-all-deterministic}) that, when the weight matrices $W_t$ are deterministic and constant, the states exhibit a very interesting and fundamental property: their large deviation probabilities $\mathbb P\left(X_{i,t}\in D\right)$ have the rates that are always lower than the corresponding rate of the fusion center, and always higher than the corresponding rate of a node working in isolation. Theorem~\ref{theorem-universal-limits} that we present next asserts that this property in fact holds, not only for deterministic, but for arbitrary sequences of random weight matrices.
\begin{theorem}
\label{theorem-universal-limits}
Consider the distributed inference algorithm~\eqref{alg-D-INF} under Assumptions~\ref{ass-network-and-observation-model} and~\ref{ass-finite-at-all-points}, when $Z_{i,t}$ are i.i.d. (identical agents). For any measurable set~$G\subseteq \mathbb R^d$, for each $i$:
\begin{align}
\label{eq-universal-bounds-upper}
- \inf_{x\in G^{\mathrm{o}}} N I(x) &\leq  \liminf_{t\rightarrow +\infty}\,\frac{1}{t}\,\log \mathbb P\left(X_{i,t}\in G\right)\\
\label{eq-universal-bounds-lower}
&\!\!\!\!\!\!\!\!\!\!\!\!\!\!\!\!\!\!\!\! \leq \limsup_{t\rightarrow +\infty}\,\frac{1}{t}\,\log \mathbb P\left(X_{i,t}\in G\right)  \leq - \inf_{x\in \overline G} I(x).
\end{align}
\end{theorem}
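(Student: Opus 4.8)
The plan is to prove the two inequalities in~\eqref{eq-universal-bounds-upper}--\eqref{eq-universal-bounds-lower} separately, reducing each to a one-sided estimate on the scaled cumulant generating function $\Lambda_t(\lambda):=\frac1t\log\mathbb{E}[e^{t\lambda^\top X_{i,t}}]$ via the two-sided bound of Lemma~\ref{simple-lemma}. The structural fact that drives everything is that, since each $W_t$ is stochastic, every product $\Phi(t,s)=W_t\cdots W_s$ is stochastic, so for all $i,s,t$ the row $([\Phi(t,s)]_{i1},\dots,[\Phi(t,s)]_{iN})$ lies in the simplex $\Delta^{N-1}$. Conditioning on the weight matrices and using that the $Z_{j,s}$ are i.i.d.\ with log-moment generating function $\Lambda$ (Assumption~\ref{ass-network-and-observation-model}), the representation~\eqref{alg-D-INF-compact} factorizes into
\begin{equation*}
\frac1t\log\mathbb{E}\!\left[e^{t\lambda^\top X_{i,t}}\,\big|\,\{W_t\}\right]=\frac1t\sum_{s=1}^t\sum_{j=1}^N\Lambda\!\left([\Phi(t,s)]_{ij}\lambda\right)=:\Lambda_t^{\phi}(\lambda).
\end{equation*}
Applying Lemma~\ref{simple-lemma} slice by slice in $s$ to these simplex vectors sandwiches the conditional object as $N\Lambda(\lambda/N)\le\Lambda_t^{\phi}(\lambda)\le\Lambda(\lambda)$, uniformly over $t$ and over every realization of the weights.

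For the upper bound~\eqref{eq-universal-bounds-lower}, I would take expectation over the (independent) weights in the right inequality above, obtaining the uniform estimate $\Lambda_t(\lambda)\le\Lambda(\lambda)$ for all $\lambda$ and all $t$. Such a uniform bound is precisely what the Chernoff step of the large deviations upper bound requires: it yields $\limsup_t\frac1t\log\mathbb{P}(X_{i,t}\in F)\le-\inf_{x\in F}I(x)$ for every closed $F$, since the conjugate of any pointwise upper bound of $\Lambda_t$ dominates $I=\Lambda^{\ast}$. One establishes this first for compact $F$ by covering with Chernoff balls, then extends to all closed $F$ by exponential tightness, which holds because Assumption~\ref{ass-finite-at-all-points} makes $I$ have compact sublevel sets and the mgf bound forces exponential tail decay. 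Taking $F=\overline G$ gives~\eqref{eq-universal-bounds-lower}.

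The lower bound~\eqref{eq-universal-bounds-upper} is the delicate half. It suffices to show, for each $x$ with $NI(x)<+\infty$ and each small ball $B_x(\rho)\subseteq G^{\mathrm{o}}$, that $\liminf_t\frac1t\log\mathbb{P}(X_{i,t}\in B_x(\rho))\ge-NI(x)$, and then to optimize over $x\in G^{\mathrm{o}}$. Working conditionally again, the left inequality $\Lambda_t^{\phi}\ge N\Lambda(\cdot/N)$ together with order-reversal of the Fenchel--Legendre transform shows the conditional rate $I_t^{\phi}:=(\Lambda_t^{\phi})^{\ast}$ obeys $I_t^{\phi}(x)\le NI(x)$ for every realization. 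I would then run a change-of-measure (tilting) argument adapted to the weights: pick $\lambda_t^{\phi}$ solving $\nabla\Lambda_t^{\phi}(\lambda)=x$ and tilt each $Z_{j,s}$ by $[\Phi(t,s)]_{ij}\lambda_t^{\phi}$, so that under the tilted law $X_{i,t}$ has mean $x$ and the exponential cost is exactly $I_t^{\phi}(x)\le NI(x)$. Concentration of $X_{i,t}$ at $x$ under the tilt follows from a Chebyshev estimate, the conditional tilted variance being $O(1/t)$ because each stochastic row gives $\sum_{j=1}^N[\Phi(t,s)]_{ij}^2\le1$, hence $\sum_{s=1}^t\sum_{j=1}^N[\Phi(t,s)]_{ij}^2\le t$.

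The hard part will be making this tilting argument uniform in the weights. Since Assumption~\ref{ass-network-and-observation-model} allows arbitrary, possibly time-correlated, random matrices, the tilt $\lambda_t^{\phi}$ and every error term in the Cram\'er lower bound depend on the realization $\phi$, and I must control them uniformly so that the per-realization estimate survives after taking expectation over the weights and passing to the $\liminf$. In particular, the coefficients $[\Phi(t,s)]_{ij}$ form a triangular array, depending on both $s$ and $t$, that may be highly non-uniform and correlated, so the classical Cram\'er/G\"artner--Ellis lower bound, which presumes a single well-behaved sequence, does not apply off the shelf. I therefore expect to need a refined lower bound valid for triangular arrays of weighted independent summands under only the simplex constraint and the bound $I_t^{\phi}(x)\le NI(x)$ --- the ``more subtle'' ingredient advertised in the introduction --- keeping the tilt in a compact set where $\nabla^2\Lambda$ is bounded (Assumption~\ref{ass-finite-at-all-points}) and the concentration uniform, which is what ultimately delivers~\eqref{eq-universal-bounds-upper}.
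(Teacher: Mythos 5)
Your upper-bound argument is exactly the paper's: conditioning on the weights, the stochasticity of each row of $\Phi(t,s)$ plus the right-hand inequality of Lemma~\ref{simple-lemma} give $\mathbb E\left[e^{t\lambda^\top X_{i,t}}\right]\le e^{t\Lambda(\lambda)}$ uniformly in $t$ and in the weight realization, followed by a Chernoff bound, a finite-cover argument for compact sets, and exponential tightness to pass to general closed sets. No issues there.

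For the lower bound your skeleton is also the paper's (per-realization tilting with a $t$-dependent tilt, cost controlled by $I_t^{\phi}(x)\le NI(x)$ via the left inequality of Lemma~\ref{simple-lemma}), but the step ``pick $\lambda_t^{\phi}$ solving $\nabla\Lambda_t^{\phi}(\lambda)=x$'' is a genuine gap, and it is precisely the obstruction the paper's proof is organized around. The conditional function $\Lambda_t^{\phi}$ is convex and finite everywhere, but it need not be $1$-coercive; consequently, for some $x$ with $NI(x)<+\infty$ the equation $\nabla\Lambda_t^{\phi}(\lambda)=x$ has no solution (the supremum defining $I_t^{\phi}(x)$ is not attained --- the non-exposed-point issue), and even when solutions exist there is no a priori bound on $\|\lambda_t^{\phi}\|$ uniform over $t$ and over weight realizations. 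Your own concentration step needs exactly such a uniform bound, both to keep the tilted arguments of $\nabla^2\Lambda$ in a fixed compact set and to make the $-\delta\|\lambda_t^{\phi}\|$ error term vanish as $\delta\to 0$, so the plan is circular as stated. The paper's resolution is to regularize: add an independent Gaussian $V/\sqrt{Mt}$ to $X_{i,t}$, so that $\Lambda_{t,M}=\Lambda_t+\|\cdot\|^2/(2M)$ becomes strictly convex and $1$-coercive with the explicit uniform bound $\|\eta_t\|\le M\|x-\theta\|$ (Lemma~\ref{lemma-Lambda-t-M-is-nice}), run the tilting argument for the smoothed states, and then remove the noise by choosing $M$ large enough that the Gaussian tail rate $M\delta^2/2$ exceeds $NI(x)$. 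Without this (or an equivalent device) your argument does not close. Two smaller remarks: your Chebyshev route to concentration under the tilted law is a legitimate and arguably simpler alternative to the paper's uniform-positivity lemma for the tilted rate functions, but it presupposes the same uniform tilt bound; and the passage from the per-realization bound to the unconditional liminf requires Fatou's lemma combined with Jensen's inequality applied to $\frac1t\log\mathbb P\left(X_{i,t}\in\cdot\mid W_1,\dots,W_t\right)$, which you gesture at but should carry out.
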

Theorem~\ref{theorem-universal-limits} asserts that, no matter how we design the agents' interactions (represented by the weight matrices), in terms of large deviations performance, algorithm~\eqref{alg-D-INF} can never be worse than when a node is working in isolation, but it also can never beat the fusion center. This result is important as it provides fundamental bounds for large deviations performance of \emph{any} algorithm of the form~\eqref{alg-D-INF} that satisfies Assumptions~\ref{ass-network-and-observation-model} and~\ref{ass-finite-at-all-points} and processes i.i.d. observations. 
In the next two subsections we state our proofs of Theorem~\ref{theorem-universal-limits}.

\subsection{Proof of the upper bound}
\label{sec-Upper-Proof}

Fix an arbitrary $i\in V$. To prove~\eqref{eq-universal-bounds-upper} for node $i$, it suffices to show that, for any closed set $F$,
\begin{equation}
\label{suffices-upper-bound}
\limsup_{t\rightarrow+\infty}\,\frac{1}{t}\log \mathbb P\left( X_{i,t}\in F\right)\leq -\inf_{x\in F} I(x).
\end{equation}
To see why this is true, note that, for an arbitrary measurable set $D$, there holds $\mathbb P\left(X_{i,t}\in D\right)\leq \mathbb P\left(X_{i,t} \in \overline D\right)$. Applying~\eqref{suffices-upper-bound} to the closed set $F=\overline D$ yields~\eqref{eq-universal-bounds-upper}.

The proof of~\eqref{suffices-upper-bound} consists of the following three steps.

\underline{\emph{Step 1:}} We use the exponential Markov inequality, together with conditioning on the matrices $W_1,...,W_t$, to show that, for any measurable set $D\subseteq \mathbb R^d$,
\begin{equation}
\label{eq-it-suf-ball}
\frac{1}{t}\log \mathbb P\left(X_{i,t} \in D\right) \leq - \inf_{x\in D} \lambda^\top x - \Lambda(\lambda).
\end{equation}

\underline{\emph{Step 2:}} In the second step, we show that~\eqref{eq-it-suf-ball} is a sufficient condition for~\eqref{suffices-upper-bound} to hold for all \emph{compact} sets $F$. Lemma~\ref{lemma-step-1} formalizes this statement.
\begin{lemma}
\label{lemma-step-1}
Suppose that~\eqref{eq-it-suf-ball} holds for any measurable set $D\subseteq \mathbb R^d$. Then the inequality~\eqref{suffices-upper-bound} holds for all compact sets~$F$.
\end{lemma}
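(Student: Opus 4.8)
The plan is to run the classical finite-cover argument that upgrades the pointwise Chernoff-type bound \eqref{eq-it-suf-ball} into a uniform exponential estimate over a compact set. The hypothesis supplies, for every measurable $D$ and every $\lambda\in\mathbb R^d$, the inequality $\frac1t\log\mathbb P(X_{i,t}\in D)\leq -\inf_{x\in D}\bigl(\lambda^\top x-\Lambda(\lambda)\bigr)$, valid for \emph{all} $t$; the target \eqref{suffices-upper-bound} replaces the affine lower envelope $\lambda^\top x-\Lambda(\lambda)$ by its supremum over $\lambda$, namely $I(x)=\sup_{\lambda}\bigl(\lambda^\top x-\Lambda(\lambda)\bigr)$. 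The idea is to cover $F$ by finitely many small balls on each of which a single, well-chosen $\lambda$ already almost attains $I$, and then to combine the per-ball estimates by a union bound.

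Concretely, I would set $I_F:=\inf_{x\in F}I(x)$ and fix $\delta>0$. To accommodate the possibility $I_F=+\infty$, I would work with the truncated level $I_F^{\delta}:=\min\{I_F-\delta,\,1/\delta\}$, which is finite and satisfies $I_F^{\delta}<I_F\leq I(x)$ for every $x\in F$. For each $x\in F$, since $I(x)=\sup_{\lambda}\bigl(\lambda^\top x-\Lambda(\lambda)\bigr)>I_F^{\delta}$, I would pick $\lambda_x$ with $\lambda_x^\top x-\Lambda(\lambda_x)>I_F^{\delta}$, and then use continuity of the affine map $y\mapsto\lambda_x^\top y$ to choose an open ball $B_x(\rho_x)$ on which $\inf_{y\in B_x(\rho_x)}\bigl(\lambda_x^\top y-\Lambda(\lambda_x)\bigr)>I_F^{\delta}$ still holds. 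Applying \eqref{eq-it-suf-ball} with $D=B_x(\rho_x)$ and $\lambda=\lambda_x$ then gives, for every $t$, the bound $\mathbb P\bigl(X_{i,t}\in B_x(\rho_x)\bigr)\leq e^{-t I_F^{\delta}}$.

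Next I would invoke compactness of $F$ to extract a finite subcover $F\subseteq\bigcup_{k=1}^{n}B_{x_k}(\rho_{x_k})$ and apply the union bound, $\mathbb P(X_{i,t}\in F)\leq\sum_{k=1}^{n}\mathbb P\bigl(X_{i,t}\in B_{x_k}(\rho_{x_k})\bigr)\leq n\,e^{-t I_F^{\delta}}$. Taking $\frac1t\log$ and then $\limsup_{t\to+\infty}$ annihilates the $\frac{\log n}{t}$ term and leaves $\limsup_{t}\frac1t\log\mathbb P(X_{i,t}\in F)\leq -I_F^{\delta}$. Finally, letting $\delta\downarrow 0$ so that $I_F^{\delta}\uparrow I_F$ yields exactly \eqref{suffices-upper-bound}; the degenerate cases $I_F=0$ and $F=\emptyset$ are disposed of separately by the crude estimate $\mathbb P(\cdot)\leq 1$.

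There is no deep obstacle here: this is a standard soft argument. The only points requiring care are the bookkeeping that keeps a \emph{single} exponent $I_F^{\delta}$ across every ball of the cover—so that the union bound costs only the harmless multiplicative factor $n$, which disappears after dividing by $t$—and the truncation device that makes the argument uniform when $I_F=+\infty$. The essential structural input is that \eqref{eq-it-suf-ball} is a bound holding for every $t$ and every $\lambda$, not merely asymptotically; this is precisely what permits the local estimates to be assembled \emph{before} passing to the limit.
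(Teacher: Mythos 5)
Your argument is correct and is essentially the paper's own proof: the same finite-cover-plus-union-bound scheme, with the same truncation device to handle $\inf_F I=+\infty$ (you truncate the infimum $I_F$ directly where the paper truncates $I$ pointwise via $I^{\delta}(x)=\min\{I(x)-\delta,1/\delta\}$, and you shrink each ball to preserve a strict inequality where the paper imposes $\rho_y\|\lambda_y\|\le\delta$ — both are equivalent bookkeeping). No gaps.
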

The proof of Lemma~\ref{lemma-step-1} uses the standard ``finite cover'' argument: for a compact set $F$, a finite number of balls forming a cover of $F$ is constructed, and then~\eqref{eq-it-suf-ball} is applied to each of the balls. The details of this derivation are given in Appendix~\ref{app:finite-cover-compact-sets}.

\underline{\emph{Step 3:}} So far, \emph{Steps 1} and \emph{2} together imply that~\eqref{suffices-upper-bound} holds for all compact sets. To extend~\eqref{suffices-upper-bound} to all \emph{closed} sets $F$, by a well known result from large deviations theory, Lemma~1.2.18 from~\cite{DemboZeitouni}, it suffices to show that the sequence of measures $\mu_{i,t}: \mathcal B(\mathbb R^d)\mapsto [0,1]$, $\mu_{i,t}(D):= \mathbb P\left(X_{i,t}\in D\right)$ is exponentially tight. We prove this by considering the family of compact sets $H_{\rho}:=[-\rho,\rho]^d$, with $\rho$ increasing to infinity. The result is given in Lemma~\ref{lemma-exp-tightness}, and the proof can be found in Appendix~\ref{app:exponential-tightness}.
%
\begin{lemma}
\label{lemma-exp-tightness}
For every $i\in V$,
\begin{equation}
\lim_{\rho\rightarrow +\infty} \,\limsup_{t\rightarrow  +\infty} \,\mu_{i,t}\left( H_\rho^c\right) \leq -\infty.
\end{equation}
Hence, the sequence $\left\{\mu_{i,t}\right\}_{t=1,2,...}$ is exponentially tight.
\end{lemma}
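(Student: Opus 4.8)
The plan is to establish exponential tightness by controlling the probability that $X_{i,t}$ escapes the cube $H_\rho=[-\rho,\rho]^d$ coordinate by coordinate, and then to let $\rho\to+\infty$. The cornerstone of the argument is a moment generating bound that is \emph{uniform} over all realizations of the weight matrices: for every $\lambda\in\mathbb R^d$ and every $t$,
\begin{equation}
\label{eq-uniform-mgf-plan}
\mathbb E\!\left[e^{t\lambda^\top X_{i,t}}\right]\leq e^{t\Lambda(\lambda)}.
\end{equation}
This is exactly the lever already used in \emph{Step 1} to obtain~\eqref{eq-it-suf-ball}, and I would isolate it as the engine of the proof.

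To prove~\eqref{eq-uniform-mgf-plan} I would condition on the weight matrices $\mathcal W=(W_1,\ldots,W_t)$. Because the observations are independent across nodes and times and independent of $\mathcal W$ (Assumption~\ref{ass-network-and-observation-model}), and because $\Phi(t,s)=W_t\cdots W_s$ is a product of stochastic matrices and hence itself row-stochastic, the conditional transform factorizes as
\begin{equation*}
\mathbb E\!\left[e^{t\lambda^\top X_{i,t}}\,\middle|\,\mathcal W\right]
=\exp\!\left(\sum_{s=1}^t\sum_{j=1}^N\Lambda\!\left([\Phi(t,s)]_{ij}\,\lambda\right)\right).
\end{equation*}
For each fixed $s$, the $i$-th row $\big([\Phi(t,s)]_{i1},\ldots,[\Phi(t,s)]_{iN}\big)$ is a stochastic vector, so the right-hand inequality of Lemma~\ref{simple-lemma} gives $\sum_{j=1}^N\Lambda([\Phi(t,s)]_{ij}\lambda)\leq\Lambda(\lambda)$. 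Summing over $s$ bounds the conditional transform by $e^{t\Lambda(\lambda)}$ for \emph{every} realization of $\mathcal W$, and taking expectation over $\mathcal W$ then yields~\eqref{eq-uniform-mgf-plan}.

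With~\eqref{eq-uniform-mgf-plan} in hand the tightness follows by a routine Chernoff argument. Writing $H_\rho^c=\{x:\max_k|x_k|>\rho\}$ and applying a union bound over the $2d$ half-space events $\{\pm[X_{i,t}]_k>\rho\}$, the exponential Markov inequality with $\lambda=\pm e_k$ gives
\begin{equation*}
\frac1t\log\mathbb P\!\left(\pm[X_{i,t}]_k>\rho\right)\leq -\rho+\Lambda(\pm e_k).
\end{equation*}
Since $\frac1t\log$ of a finite sum is asymptotically the maximum of its summands, taking $\limsup_{t}$ and collecting terms yields $\limsup_{t\to+\infty}\frac1t\log\mu_{i,t}(H_\rho^c)\leq -\rho+\max_{k,\pm}\Lambda(\pm e_k)$. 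By Assumption~\ref{ass-finite-at-all-points} each $\Lambda(\pm e_k)$ is finite, so letting $\rho\to+\infty$ drives the right-hand side to $-\infty$, which is precisely the claimed exponential tightness.

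The step that carries the real weight is the uniform bound~\eqref{eq-uniform-mgf-plan}: the obstacle posed by random, possibly time-correlated weight matrices is that one cannot compute the transform of $X_{i,t}$ explicitly or average meaningfully over $\mathcal W$. The escape is the observation that, whatever the realization of $\mathcal W$, the matrix $\Phi(t,s)$ stays row-stochastic, so Lemma~\ref{simple-lemma} furnishes the same bound $e^{t\Lambda(\lambda)}$ pointwise in $\omega$ and the randomness of the topology disappears from the estimate; everything after that is standard. The only care required beyond this is the bookkeeping for the union bound and the $\frac1t\log$-of-a-sum limit, both of which are routine.
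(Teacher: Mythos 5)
Your proof is correct and follows essentially the same route as the paper's: the uniform conditional bound $\mathbb E[e^{t\lambda^\top X_{i,t}}\mid W_1,\ldots,W_t]\leq e^{t\Lambda(\lambda)}$ via row-stochasticity of $\Phi(t,s)$ and Lemma~\ref{simple-lemma}, followed by a coordinatewise union bound and exponential Markov. The only (harmless) difference is that you fix $\lambda=\pm e_k$ to get the rate $-\rho+\Lambda(\pm e_k)$ directly, whereas the paper optimizes over $\nu\geq 0$ to express the rate as $-I^l(\pm\rho)$ and then shows $I^l$ is coercive --- an argument that ultimately also reduces to evaluating $\Lambda^l$ at a fixed point, so your shortcut loses nothing.
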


We now provide the details of \emph{Step 1}.

\emph{Step 1.} The proof of~\eqref{eq-it-suf-ball} is based on two key arguments: exponential Markov inequality~\cite{Karr} and the right hand side inequality of Lemma~\ref{simple-lemma}. For any measurable set $D\subseteq \mathbb R^d$ and any $\lambda \in \mathbb R^d$, by the exponential Markov inequality, we have
\begin{equation}
\label{eq-Markov-bound}
1_{\left\{X_{i,t} \in D\right\}}\leq e^{t \lambda^\top X_{i,t} - t \inf_{x\in D} \lambda^\top x},
\end{equation}
which, after computing the expectation, yields
\begin{equation}
\label{eq-Markov-bound-exp}
\mathbb P\left(X_{i,t} \in D\right) \leq e^{- t \inf_{x\in D} \lambda^\top x} \mathbb E\left[e^{t \lambda^\top X_{i,t}}\right] .
\end{equation}
We now focus on the right hand side of~\eqref{eq-Markov-bound-exp}. Conditioning on $W_1,\ldots,W_t$, the summands in~\eqref{alg-D-INF-compact} become independent, and using the fact that the $Z_{i,t}$'s are i.i.d. with the same log-moment generating function $\Lambda$, we obtain
\begin{align}
\label{eq-independence}
\mathbb E\left[\left.e^{t \lambda^\top X_{i,t}}\right| W_1,...,W_t\right]
= e^{\sum_{s=1}^t \sum_{j=1}^N \Lambda \left([\Phi(t,s)]_{ij} \lambda\right)}.
\end{align}
Applying now the right-hand side inequality Lemma~\ref{simple-lemma} to $\sum_{j=1}^N \Lambda \left([\Phi(t,s)]_{ij} \lambda\right)$ for each fixed $s$ (note that, for a fixed $s$, $\left[[\Phi(t,s)]_{i1},...,[\Phi(t,s)]_{iN}\right]\in \Delta^{N-1}$ ), it follows that the conditional expectation above is upper bounded by $e^{t\Lambda(\lambda)}$, i.e.,
\begin{equation}
\label{eq-independence-2}
\mathbb E\left[\left.e^{t \lambda^\top X_{i,t}}\right| W_1,...,W_t\right] \leq e^{t \Lambda(\lambda)}.
\end{equation}
for any $\lambda\in \mathbb R^d$. Since in~\eqref{eq-independence-2} $W_1,...,W_t$ were arbitrary, taking the expectation, we get $\mathbb E\left[e^{t \lambda^\top X_{i,t}}\right] \leq e^{t \Lambda(\lambda)}.$
Combining this with~\eqref{eq-Markov-bound-exp}, we finally obtain
\begin{equation}
\label{eq-Markov-bound-exp-2}
\frac{1}{t}\log \mathbb P\left(X_{i,t} \in D\right) \leq - \inf_{x\in D} \lambda^\top x + \Lambda(\lambda).
\end{equation}

\subsection{Proof of the lower bound}
\label{sec-Lower-Proof}
We prove~\eqref{eq-universal-bounds-lower} following the general lines of the proof of the G\"{a}rtner-Ellis theorem lower bound, see~\cite{DemboZeitouni}. However, as we will see later in this proof, we encounter several difficulties along the way
 that force us to depart from the standard G\"{a}rtner-Ellis method and use finer arguments.
 The main reason for this is that, in contrast with the setup of the G\"{a}rtner-Ellis theorem, the sequence of the (scaled) log-moment generating functions of $X_{i,t}$ (see ahead~\eqref{def-Lambda-t}) need not have a limit. Nevertheless, with the help of Lemma~\ref{simple-lemma}, we will be able to ``sandwich'' each member of this sequence between $\Lambda(\cdot)$ and $N \Lambda\left(1/N \cdot\right)$.
  This is the key ingredient that allows us to derive~\eqref{eq-universal-bounds-lower}. The proof is organized in the following four steps.

\mypar{\underline{\emph{Step 1}}} In this step, we derive a sufficient condition, given in Lemma~\ref{lemma-x-in-D^o-conditioning}, for~\eqref{eq-universal-bounds-lower} to hold. Namely, to prove~\eqref{eq-universal-bounds-lower} for a given set $D$, it suffices to confine $X_{i,t}$ to a smaller region $B_x(\delta)$ within $D$, and show that, conditioned on any realization of the matrices $W_1,...,W_t$, the rate of this event is at most $NI(x)$. Lemma~\ref{lemma-x-in-D^o-conditioning} is proven by applying Fatou's lemma~\cite{Karr} to the sequence of random variables $R_t:=\frac{1}{t}\log \mathbb P\left(X_{i,t}\in D|W_1,...,W_t\right)$, and then combining the obtained result with the simple fact that, for every $x\in D^{\mathrm{o}}$ and all $\delta$ sufficiently small, $B_x(\delta)\subseteq D$. The proof is given in Appendix~\ref{app:proof-of-lemma-x-in-D^o-conditioning}.
\begin{lemma}
\label{lemma-x-in-D^o-conditioning}
If for every $x\in \mathbb R^d$ and $\omega \in \Omega$,
\begin{equation}
\label{eq-it-suffices-to-consider-delta-balls}
\!\lim_{\delta\rightarrow 0}\liminf_{t\rightarrow +\infty}\frac{1}{t}\log \mathbb P\left( X_{i,t}\in B_x(\delta)| W_1,...,W_t\right)\! \geq\!\! - N I(x),\!\!\!\!\!\!\!\!\!
\end{equation}
then~\eqref{eq-universal-bounds-lower} holds for all measurable sets $D$.
\end{lemma}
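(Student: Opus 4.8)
The plan is to establish the equivalent form $\liminf_{t\to\infty}\frac{1}{t}\log\mathbb P\left(X_{i,t}\in D\right)\geq -\inf_{x\in D^{\mathrm o}}NI(x)$, which is exactly the lower bound~\eqref{eq-universal-bounds-lower} asserted for every measurable $D$. Write $\ell:=-\inf_{x\in D^{\mathrm o}}NI(x)$. If $\ell=-\infty$ (in particular if $D^{\mathrm o}=\emptyset$) the inequality is vacuous, so I will assume $\ell$ is finite. The two ideas are: first, to turn the per-point ball estimate~\eqref{eq-it-suffices-to-consider-delta-balls} into a single pointwise-in-$\omega$ lower bound on the conditional exponential rate over all of $D$; and second, to transfer that conditional bound to the unconditional probability $\mathbb P\left(X_{i,t}\in D\right)=\mathbb E\left[\,\mathbb P\left(X_{i,t}\in D\mid W_1,\dots,W_t\right)\right]$ through an exponential rescaling combined with Fatou's lemma.

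For the first idea, set $R_t:=\frac{1}{t}\log\mathbb P\left(X_{i,t}\in D\mid W_1,\dots,W_t\right)$ and, for $x\in D^{\mathrm o}$, $P_t^{(\delta)}:=\mathbb P\left(X_{i,t}\in B_x(\delta)\mid W_1,\dots,W_t\right)$. Fix $x\in D^{\mathrm o}$ and choose $\delta$ small enough that $B_x(\delta)\subseteq D$; then $\mathbb P\left(X_{i,t}\in D\mid W_1,\dots,W_t\right)\geq P_t^{(\delta)}$, so $R_t\geq\frac{1}{t}\log P_t^{(\delta)}$ pointwise in $\omega$. Since $\delta\mapsto\liminf_{t}\frac{1}{t}\log P_t^{(\delta)}(\omega)$ is nondecreasing (a larger ball only raises the conditional probability), its limit as $\delta\to0$ equals its infimum over $\delta>0$, so hypothesis~\eqref{eq-it-suffices-to-consider-delta-balls} gives, for every fixed admissible $\delta$ and every $\omega$, $\liminf_{t}\frac{1}{t}\log P_t^{(\delta)}(\omega)\geq -NI(x)$. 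Hence $\liminf_{t}R_t(\omega)\geq -NI(x)$ for each $x\in D^{\mathrm o}$, and taking the supremum over $x\in D^{\mathrm o}$ yields the clean pointwise bound $\liminf_{t}R_t(\omega)\geq\ell$ for every $\omega\in\Omega$.

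For the second idea, note that applying Fatou directly to $P_t^{(\delta)}$ (or to $\mathbb P(X_{i,t}\in D\mid\cdot)=e^{tR_t}$) is useless, since these tend to $0$ pointwise when the rate is positive; and the route through $R_t$ itself is blocked because $R_t\leq0$ is not bounded below, so the forward Fatou inequality does not apply to it. The remedy is to rescale: fix $\epsilon>0$ and put $Q_t:=e^{t(R_t-\ell+\epsilon)}\geq0$. Then $\frac{1}{t}\log Q_t=R_t-\ell+\epsilon$, so $\liminf_{t}\frac{1}{t}\log Q_t\geq\epsilon>0$, which forces $Q_t\to+\infty$ and hence $\liminf_{t}Q_t=+\infty$ for every $\omega$. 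Fatou's lemma applied to the nonnegative $Q_t$ then gives $\liminf_{t}\mathbb E[Q_t]\geq\mathbb E\left[\liminf_{t}Q_t\right]=+\infty$. Since $\mathbb E[Q_t]=e^{-t(\ell-\epsilon)}\,\mathbb E\left[e^{tR_t}\right]=e^{-t(\ell-\epsilon)}\,\mathbb P\left(X_{i,t}\in D\right)$, this means $\mathbb P\left(X_{i,t}\in D\right)\geq e^{t(\ell-\epsilon)}$ for all large $t$, whence $\liminf_{t}\frac{1}{t}\log\mathbb P\left(X_{i,t}\in D\right)\geq\ell-\epsilon$. Letting $\epsilon\downarrow0$ delivers~\eqref{eq-universal-bounds-lower}.

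The step I expect to be the genuine obstacle is the third one: passing from a purely pointwise-in-$\omega$ control of the conditional rate to a bound on the unconditional probability, given that the convergence of the conditional probabilities is \emph{not} uniform over $\Omega$. The decisive device is the exponential rescaling that converts the estimate $\liminf_{t}R_t\geq\ell$ into the statement $\liminf_{t}Q_t=+\infty$, so that the single inequality supplied by Fatou becomes nontrivial and yields an honest exponential lower rate; everything else (the inclusion $B_x(\delta)\subseteq D$, the monotonicity in $\delta$, and the optimization over $x\in D^{\mathrm o}$) is routine bookkeeping. One should also record the trivial case $\ell=-\infty$ separately, since the rescaling presumes $\ell$ finite.
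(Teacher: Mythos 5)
Your proof is correct, and its second half takes a genuinely different route from the paper's. The first half --- restricting to a ball $B_x(\delta)\subseteq D$ for $x\in D^{\mathrm o}$, using monotonicity in $\delta$ to turn the limit in~\eqref{eq-it-suffices-to-consider-delta-balls} into a bound valid for each fixed small $\delta$, and taking the supremum over $x\in D^{\mathrm o}$ to obtain the pointwise bound $\liminf_t R_t(\omega)\geq -\inf_{x\in D^{\mathrm o}}NI(x)$ --- coincides with the argument in Appendix~\ref{app:proof-of-lemma-x-in-D^o-conditioning}. The divergence is in how that pointwise conditional bound is transferred to $\mathbb P\left(X_{i,t}\in D\right)$. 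The paper applies Fatou's lemma directly to the sequence $R_t=\frac{1}{t}\log \mathbb P\left(X_{i,t}\in D\mid W_1,\dots,W_t\right)$ to get $\liminf_t\mathbb E[R_t]\geq\mathbb E[R^\star]$, and then invokes Jensen's inequality (concavity of the logarithm) to bound $\mathbb E[R_t]$ above by $\frac{1}{t}\log\mathbb P\left(X_{i,t}\in D\right)$. You instead pass to the nonnegative variables $Q_t=e^{t(R_t-\ell+\epsilon)}$, note that the pointwise rate bound forces $Q_t\to+\infty$ for every $\omega$, and apply Fatou to $Q_t$. Both routes reach~\eqref{eq-universal-bounds-lower}, but yours is arguably the more careful of the two: the forward Fatou inequality $\liminf_t\mathbb E[R_t]\geq\mathbb E[\liminf_t R_t]$ used in the paper requires an integrable minorant for the $R_t$, which are nonpositive and may equal $-\infty$ on sets of positive probability, and that hypothesis is not checked there; your $Q_t$ are nonnegative, so Fatou applies without qualification, and the Jensen step is absorbed into the identity $\mathbb E[Q_t]=e^{-t(\ell-\epsilon)}\mathbb P\left(X_{i,t}\in D\right)$. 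Your explicit handling of the degenerate case $\ell=-\infty$ is also a point the paper leaves implicit.
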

\mypar{\underline{\emph{Step 2}}} To prove~\eqref{eq-it-suffices-to-consider-delta-balls}, we introduce the scaled log-moment generating function of $X_{i,t}$, under the conditioning on $W_1,...,W_t$,
\begin{equation}
\label{def-Lambda-t}
\Lambda_{t}(\lambda):= \frac{1}{t}\log \mathbb E\left[ \left. e^{t\lambda^\top X_{i,t}}\right|W_1,\ldots,W_t\right].
\end{equation}
It can be shown (similarly as in \emph{Step 1} of the proof of the upper bound) that, for any $\lambda\in \mathbb R^d$,
\begin{equation}
\label{eq-Lambda-t-equals}
\Lambda_{t}(\lambda)= \frac{1}{t} \sum_{s=1}^t\sum_{j=1}^N \Lambda\left( [\Phi(t,s)]_{ij} \lambda\right),
\end{equation}
where $\Phi(t,s)=W_t\cdots W_s$. Note that $\Lambda_t$ is convex and differentiable. However, $\Lambda_t$ is not necessarily $1$-coercive~\cite{Urruty}, which is needed to show~\eqref{eq-it-suffices-to-consider-delta-balls} for all points\footnote{More precisely, the problem arises when $x$ is not an exposed point of the conjugate $I_t$ of $\Lambda_t$, as will be clear from later parts of the proof (see also Exercise~{2.3.20} in~\cite{DemboZeitouni}).} $x\in \mathbb R^d$. To overcome this, we introduce a small Gaussian noise to the states $X_{i,t} $ and define, for each $t$, $Y_{i,t}=X_{i,t}+V/{\sqrt {M t}}$, where $V$ has the standard multivariate Gaussian distribution $\mathcal N(0_d,I_d)$, and, we assume, is independent of $Z_{j,t}$ and $W_t$, for all $j$ and $t$ (hence, $V$ is independent of $X_{i,t}$, for all $t$). The parameter $M>0$ controls the magnitude of the noise, and the factor $1/{\sqrt t}$ adjusts the noise variance to the same level of the variance of $X_{i,t}$.

For each fixed $M$, let $\Lambda_{t,M}$ denote the log-moment generating function associated with the corresponding $Y_{i,t}$, under the conditioning on $W_1,...,W_t$. It can be shown, using the independence of $V$ and $X_{i,t}$, that
\begin{align}
\label{eq-Lambda-t-m-equals}
\Lambda_{t,M} (\lambda) = \Lambda_t(\lambda) +  \frac{\|\lambda\|^2}{2M}, \;\;\lambda \in {\mathbb R}^d.
\end{align}
Hence, the noise adds a (strictly) quadratic function to $\Lambda_t$, thus making $\Lambda_{t,M}$ $1$-coercive, as proved in the following lemma.  Lemma~\ref{lemma-Lambda-t-M-is-nice} gives the properties of $\Lambda_{t,M}$ that we use in the sequel; the proof is given in Appendix~\ref{app:proof-of-lemma-Lambda-t-M}.
\begin{lemma}
\label{lemma-Lambda-t-M-is-nice}
\begin{enumerate}
\item \label{lemma-Lambda-t-M-part-coercive}
Function $\Lambda_{t,M}$ is convex, differentiable, and $1$-coercive. Thus, for any $x\in \mathbb R^d$, there exists $\eta_t=\eta_t(x)$ such that $\nabla \Lambda_{t,M}(\eta_t)=x$.
\item \label{lemma-Lambda-t-M-part-unifbdd}
Let $\theta=\mathbb E\left[Z_{i,t}\right]$. For any $x$, the corresponding sequence $\eta_t$, $t=1,2,...$, is uniformly bounded, i.e.,
\begin{equation}
\left\|\eta_t\right\|\leq M\left\| x-\theta\right\|, \mbox{\;for all }t.
\end{equation}
\end{enumerate}
\end{lemma}
Using the results of Lemma~\ref{lemma-Lambda-t-M-is-nice}, we prove in $\emph{Step 3}$ the counterpart of~\eqref{eq-it-suffices-to-consider-delta-balls} for the sequence $Y_{i,t}$ -- \eqref{eq-delta-ball-for-regularized}, and in $\emph{Step 4}$ we complete the proof of~\eqref{eq-universal-bounds-lower} by showing that~\eqref{eq-it-suffices-to-consider-delta-balls} (a sufficient condition for~\eqref{eq-universal-bounds-lower}) is implied by~\eqref{eq-delta-ball-for-regularized}.

\mypar{\underline{\emph{Step 3}}} We show that, for any fixed $x$, $M>0$, and $\omega\in \Omega$,
\begin{equation}
\label{eq-delta-ball-for-regularized}
\lim_{\delta\rightarrow 0}\liminf_{t\rightarrow +\infty}\frac{1}{t}\log \nu_{t,M}\left( B_{x}(\delta)\right) \geq - N I(x).
\end{equation}
where $\nu_{t,M}$ is the conditional probability measure induced by $Y_{i,t}$, $\nu_{t,M}(D)=\mathbb P\left( Y_{i,t}\in D|W_1,...,W_t\right)$, $D\in \mathcal B(\mathbb R^d)$.

To this end, fix an arbitrary $x, \delta,M,$ and $\omega$. We prove~\eqref{eq-delta-ball-for-regularized} by the change of measure argument. For any $t\geq 1$, we use the point $\eta_t$ from Lemma~\ref{lemma-Lambda-t-M-is-nice} to change the measure on $\mathbb R^d$ from $\nu_{t,M}$ to $\widetilde \nu_{t,M}$ by:
\begin{equation}
\label{eq-change-of-measure}
\frac{d\widetilde \nu_{t,M}}{d \nu_{t,M}}(z)= e^{\,t\, \eta_t^\top z \,- \,t\, \Lambda_{t,M}(\eta_t)},\:\:z \in {\mathbb R}^d.
\end{equation}
Note that, in contrast with the standard method of G\"{a}rtner-Ellis Theorem where the change of measure is fixed (once $x$ is given),
here we have a different change of measure\footnote{The reason for this alteration of the standard method
is the fact that our sequence of functions $\Lambda_{t,M}$ does not have a limit.}
\footnote{It can be shown that all distributions $\widetilde \nu_{t,M}$, $t\geq 1$, have the same expected value $x$;
we do not pursue this result here, as it is not crucial for our goals.}  for each $t$.
%
Expressing the probability $\nu_{t,M}\left(B_x(\delta)\right)$ through $\widetilde \nu_{t,M}$, for each $t$, we get:
\begin{align}
\label{eq-step-1}
&\frac{1}{t}\log \nu_{t,M}\left(B_x(\delta)\right)=\nonumber\\
& \phantom{=}= \Lambda_{t,M}(\eta_t) - \eta_t^\top x
+ \frac{1}{t} \log \int_{z\in B_x(\delta)} e^{t \eta_t^\top (x-z)} d\widetilde \nu_{t,M}(z)\nonumber \\
&\phantom{=}\geq  \Lambda_{t,M}(\eta_t) - \eta_t^\top x - \delta \left\| \eta_t\right\|
+ \frac{1}{t} \log \widetilde \nu_{t,M}\left( B_x(\delta)\right).
\end{align}
We analyze separately each of the terms in~\eqref{eq-step-1}. First, since $\eta_t$ is uniformly bounded, by Lemma~\ref{lemma-Lambda-t-M-is-nice}, we immediately obtain that the third term vanishes:
\begin{equation}
\label{eq-liminf-eta-t}
\lim_{\delta\rightarrow +0} \liminf_{t\rightarrow +\infty} - \delta \left\| \eta_t\right\|  \geq - \lim_{\delta \rightarrow 0}\delta M\|x-\theta\|=0.
\end{equation}
We consider next the sum of the first two terms. Let $I_{t,M}$ denote the conjugate of $\Lambda_{t,M}$. By Lemma~\ref{lemma-Lambda-t-M-is-nice}, we have that $\eta_t$ is the maximizer of $\lambda \mapsto  \lambda^\top x - \Lambda_{t,M}(\lambda)$. Thus, the sum of the first two terms in~\eqref{eq-step-1} equals $- I_{t,M}(x)=\Lambda_{t,M}(\eta_t)- \eta_t^\top x $. Further, starting from the fact that $\Lambda_{t,M}\geq \Lambda_t$ and then invoking Lemma~\ref{simple-lemma} (lower bound), we obtain:
\begin{align}
\label{eq-I-t-M-less-than-NI}
I_{t,M}(x) \leq \sup_{\lambda \in \mathbb R^d} \lambda^\top x- \Lambda_t(\lambda) &\leq \sup_{\lambda \in \mathbb R^d} \lambda^\top x- N\Lambda(\lambda/N)\nonumber\\
& =NI(x),
\end{align}
which holds for all $t\geq 1$ and all $M>0$. Comparing with~\eqref{eq-delta-ball-for-regularized}, we see that it only remains to show that the lim inf as $t\rightarrow +\infty$ of the last term in~\eqref{eq-step-1} vanishes with $\delta$.

It is easy to show that the log-moment generating function associated with $\widetilde \nu_{t,M}$ is $\widetilde \Lambda_{t,M}:=\Lambda_{t,M}(\lambda+\eta_t) - \Lambda_{t,M}(\eta_t)$. Let $\widetilde I_{t,M}$ denote the conjugate of $\widetilde \Lambda_{t,M}$. Similarly as in the proof of the upper bound in Section~\ref{sec-Upper-Proof}, it can be shown that
\begin{equation}
\label{eq-if-we-manage-we-prove}
\frac{1}{t}\log \widetilde \nu_{t,M}\left(B^{\mathrm c}_x(\delta)\right)\leq -\inf_{w\in B^{\mathrm c}_x(\delta)} \widetilde I_{t,M}(w).
\end{equation}
The next lemma asserts that the right-hand side of~\eqref{eq-if-we-manage-we-prove} is strictly negative\footnote{In the proof of the lower bound of the G\"{a}rtner-Ellis theorem, the sequence $\widetilde \Lambda_{t}$ (our $\widetilde \Lambda_{t,M}$) has a limit $\widetilde \Lambda$, and, because of this, it is sufficient to show that $\inf_{w\in B^{\mathrm c}_x(\delta)} \widetilde I(w)$ is strictly negative, where $\widetilde I$ is the conjugate of $\widetilde \Lambda$. Here, since we do not have the limit of the $\widetilde \Lambda_{t,M}$s, we need to prove that the latter holds for each function of the sequence $\widetilde I_{t,M}$, $t\geq 1$, and moreover, that the strict negativity does not ``fade out'' with $t$.}, and uniformly bounded away from zero. The proof is given in Appendix~\ref{app:proof-xi}.
\begin{lemma}
\label{lemma-widetilde-I-t-M}
For any $t$, there exists a minimizer $w_t=w_t(x,\delta)$ of the optimization problem $\inf_{w\in B^{\mathrm c}_x(\delta)} \widetilde I_{t,M}(w)$. Furthermore, there exists $\xi= \xi(x,\delta)>0$ such that
\begin{equation}
\label{eq-widetilde-I-t-M}
\widetilde I_{t,M} \left(w_t\right) \geq \xi, \mbox{\;for all }t.
\end{equation}
\end{lemma}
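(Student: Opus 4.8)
The plan is to prove the two assertions separately: first the existence of a minimizer $w_t$ for each fixed $t$, and then the uniform lower bound $\xi>0$, which is the substantive part. Throughout I would exploit that, conditioned on $W_1,\dots,W_t$, the function $\widetilde\Lambda_{t,M}(\lambda)=\Lambda_{t,M}(\lambda+\eta_t)-\Lambda_{t,M}(\eta_t)$ is a genuine finite, differentiable, strongly convex log-moment generating function whose mean is $x$, since $\nabla\widetilde\Lambda_{t,M}(0)=\nabla\Lambda_{t,M}(\eta_t)=x$ by Lemma~\ref{lemma-Lambda-t-M-is-nice}. Consequently $\widetilde I_{t,M}\ge 0$, $\widetilde I_{t,M}(x)=0$, and for each fixed $t$ the point $x$ is the \emph{unique} zero.

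For existence, I would first reduce the infimum over $B^{\mathrm c}_x(\delta)=\{w:\|w-x\|\ge\delta\}$ to a minimization over the sphere $\partial B_x(\delta)$. Since $\widetilde I_{t,M}$ is convex and attains its minimum at $x$, it is nondecreasing along every ray emanating from $x$ (the argument used for $\widetilde I$ around~\eqref{claim-2} in the proof of Theorem~\ref{theorem-deterministic}), whence $\inf_{\|w-x\|\ge\delta}\widetilde I_{t,M}(w)=\min_{\|w-x\|=\delta}\widetilde I_{t,M}(w)$. Because $\widetilde\Lambda_{t,M}$ is finite on all of $\mathbb R^d$, so that $0\in\mathcal D_{\widetilde\Lambda_{t,M}}^{\mathrm o}$, its conjugate $\widetilde I_{t,M}$ is lower semicontinuous with compact sublevel sets; Weierstrass' theorem on the compact sphere then yields a minimizer $w_t$, which is simultaneously a minimizer over all of $B^{\mathrm c}_x(\delta)$.

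The uniform bound is the heart of the matter, and the obstacle is precisely the one flagged before the statement: the sequence $\widetilde\Lambda_{t,M}$ need not converge, so one cannot merely pass to a limiting rate function and read off a single positive value. My plan is a contradiction-plus-compactness argument driven by two stabilizing facts. First, Lemma~\ref{simple-lemma} applied to each slice $([\Phi(t,s)]_{i1},\dots,[\Phi(t,s)]_{iN})\in\Delta^{N-1}$ sandwiches the conditional cumulant generating function uniformly in $t$ and $\omega$: $N\Lambda(\lambda/N)+\|\lambda\|^2/(2M)\le\Lambda_{t,M}(\lambda)\le\Lambda(\lambda)+\|\lambda\|^2/(2M)$. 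Second, by Lemma~\ref{lemma-Lambda-t-M-is-nice} the tilts satisfy $\|\eta_t\|\le M\|x-\theta\|$, so they lie in a fixed compact ball. Suppose, for contradiction, that no $\xi>0$ works; then there is a subsequence $t_k\to\infty$ and points $w_{t_k}$ with $\|w_{t_k}-x\|=\delta$ and $\widetilde I_{t_k,M}(w_{t_k})\to 0$. Passing to a further subsequence, $w_{t_k}\to w^\ast$ with $\|w^\ast-x\|=\delta$ and $\eta_{t_k}\to\eta^\ast$; and since the convex functions $\Lambda_{t_k}$ are uniformly bounded on compacts by the sandwich, a Helly-type selection provides a locally uniform limit $\Lambda_\infty$, convex and finite with $\Lambda_\infty(0)=0$. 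Writing $\Lambda_{\infty,M}=\Lambda_\infty+\|\cdot\|^2/(2M)$ (which is $1/M$-strongly convex) and $\widetilde\Lambda_{\infty,M}(\lambda)=\Lambda_{\infty,M}(\lambda+\eta^\ast)-\Lambda_{\infty,M}(\eta^\ast)$, the convergence of subgradients of convex functions gives $x\in\partial\Lambda_{\infty,M}(\eta^\ast)$, so $\widetilde\Lambda_{\infty,M}$ is again a log-moment-generating-type function with mean $x$.

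Finally I would conclude by lower semicontinuity of the Legendre transform: since $\widetilde\Lambda_{t_k,M}\to\widetilde\Lambda_{\infty,M}$ pointwise and $w_{t_k}\to w^\ast$, for every fixed $\lambda$ one has $\liminf_k\widetilde I_{t_k,M}(w_{t_k})\ge\liminf_k[\lambda^\top w_{t_k}-\widetilde\Lambda_{t_k,M}(\lambda)]=\lambda^\top w^\ast-\widetilde\Lambda_{\infty,M}(\lambda)$, and taking the supremum over $\lambda$ yields $\liminf_k\widetilde I_{t_k,M}(w_{t_k})\ge\widetilde I_{\infty,M}(w^\ast)$. The Gaussian regularization is what keeps $\widetilde I_{\infty,M}$ well behaved — finite and coercive with compact sublevel sets — with its zero at $x$, so that $\widetilde I_{\infty,M}(w^\ast)>0$ because $\|w^\ast-x\|=\delta>0$, contradicting $\widetilde I_{t_k,M}(w_{t_k})\to 0$. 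I expect the genuinely delicate point to be exactly this last step: ruling out that $\widetilde I_{\infty,M}$ develops extra zeros at distance $\delta$ from $x$ (equivalently, that $\widetilde\Lambda_{\infty,M}$ fails to be differentiable at the origin because a kink forms in the limit), which is where the strong convexity injected by the noise $V/\sqrt{Mt}$ together with the two-sided envelope of Lemma~\ref{simple-lemma} must be used quantitatively. I would also note that if $\Lambda$ happens to have globally bounded Hessian the whole argument short-circuits: then $\Lambda_{t,M}$ has uniformly bounded Hessian, $I_{t,M}$ is uniformly strongly convex with some modulus $c>0$, and the Bregman identity $\widetilde I_{t,M}(w)=I_{t,M}(w)-I_{t,M}(x)-\eta_t^\top(w-x)$ gives $\widetilde I_{t,M}(w)\ge \tfrac{1}{2}c\,\delta^2$ directly.
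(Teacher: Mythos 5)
Your existence argument and your general strategy are sound, but the proof as written stops exactly at its crux. The paper proceeds quite differently: it shows (via the sandwich of Lemma~\ref{simple-lemma}) that the minimizers $w_t$ and the dual points $\eta_t+\zeta_t$ lie in a fixed compact set, that $\|\zeta_t\|$ is bounded away from zero uniformly in $t$ (otherwise $w_t=\nabla\Lambda_{t,M}(\eta_t+\zeta_t)$ would approach $x=\nabla\Lambda_{t,M}(\eta_t)$, contradicting $w_t\in B^{\mathrm c}_x(\delta)$), and then writes $\widetilde I_{t,M}(w_t)$ as the Bregman-type quantity $g(\eta_t,\zeta_t)=\Lambda_{t,M}(\eta_t)-\Lambda_{t,M}(\zeta_t+\eta_t)+\nabla\Lambda_{t,M}(\zeta_t+\eta_t)^\top\zeta_t$, which is bounded below by a positive constant on the compact parameter set $\Upsilon=\{(\eta,\zeta):\|\eta\|\le r_1,\ \|\eta+\zeta\|\le r_2,\ \|\zeta\|\ge r_3\}$ by strict convexity. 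Your route instead passes to a subsequential limit $\Lambda_\infty$ of the $\Lambda_{t_k}$ and invokes lower semicontinuity of the Legendre transform to reduce everything to the single inequality $\widetilde I_{\infty,M}(w^\ast)>0$ for $\|w^\ast-x\|=\delta$.

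That last inequality is precisely where you have a genuine gap, and the justification you gesture at does not close it. You need $\partial\widetilde\Lambda_{\infty,M}(0)=\partial\Lambda_{\infty,M}(\eta^\ast)$ to be the singleton $\{x\}$, i.e.\ you need the limit $\Lambda_{\infty,M}$ to be \emph{differentiable} at $\eta^\ast$; otherwise the zero set of $\widetilde I_{\infty,M}$ is the whole convex set $\partial\Lambda_{\infty,M}(\eta^\ast)$, which contains $x$ but could also reach the sphere $\|w-x\|=\delta$. Strong convexity, which is what the Gaussian regularization supplies, is of no help here: $\lambda\mapsto\|\lambda\|_1+\|\lambda\|^2/(2M)$ is $1/M$-strongly convex yet has a kink, and a locally uniform limit of smooth strongly convex functions can develop exactly such a kink. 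The two-sided envelope of Lemma~\ref{simple-lemma} only forces differentiability of $\Lambda_\infty$ at the origin (where the upper and lower envelopes touch with a common tangent), not at $\eta^\ast\neq 0$. The gap is closable, but by a different mechanism than the one you name: from~\eqref{eq-Lambda-t-equals}, $\nabla^2\Lambda_t(\lambda)=\frac1t\sum_{s=1}^t\sum_{j=1}^N[\Phi(t,s)]_{ij}^2\,\nabla^2\Lambda\bigl([\Phi(t,s)]_{ij}\lambda\bigr)$, and since the coefficients lie in $[0,1]$ and each row of $\Phi(t,s)$ sums to one, $\|\nabla^2\Lambda_t\|$ is bounded on every compact set by $\sup_{\|\mu\|\le\|\lambda\|}\|\nabla^2\Lambda(\mu)\|$, uniformly in $t$ and $\omega$. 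Hence the gradients $\nabla\Lambda_{t_k}$ are equi-Lipschitz on compacts, the limit $\Lambda_\infty$ is $C^1$, and your contradiction goes through. The same uniform local Hessian bound also rescues your concluding remark: you do not need a \emph{globally} bounded Hessian of $\Lambda$, because $w_t$, $\eta_t$, and $\eta_t+\zeta_t$ all live in fixed compact sets, so $I_{t,M}$ is uniformly strongly convex on the relevant region and the Bregman identity $\widetilde I_{t,M}(w)=I_{t,M}(w)-I_{t,M}(x)-\eta_t^\top(w-x)\ge c\,\delta^2/2$ already yields the lemma directly --- arguably more cleanly than either the subsequence argument or the paper's own proof.
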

Combining~\eqref{eq-if-we-manage-we-prove} and~\eqref{eq-widetilde-I-t-M}, we get
\begin{equation*}
\widetilde \nu_{t,M}\left(B_x(\delta)\right) \geq 1- e^{-\xi t}, \mbox{\;for all }t.
\end{equation*}
which, together with the fact that $\widetilde \nu_{t,M}$ is a probability measure (and hence $\widetilde \nu_{t,M}\left(B_x(\delta)\right)\leq 1$), yields
\begin{equation}
\label{eq-limit-widetilde-nu-t}
\lim_{t\rightarrow +\infty} \frac{1}{t}\log \widetilde \nu_{t,M}\left(B_x(\delta)\right) = 0.
\end{equation}
Since~\eqref{eq-limit-widetilde-nu-t} holds for all $\delta>0$, we conclude that the last term in~\eqref{eq-step-1} vanishes after the appropriate limits have been taken. Summarizing~\eqref{eq-liminf-eta-t},~\eqref{eq-I-t-M-less-than-NI}, and~\eqref{eq-limit-widetilde-nu-t} finally proves~\eqref{eq-delta-ball-for-regularized}.

\mypar{\underline{\emph{Step 4}}} To complete the proof of~\eqref{eq-universal-bounds-lower}, it only remains to show that~\eqref{eq-delta-ball-for-regularized} implies~\eqref{eq-it-suffices-to-consider-delta-balls}. Since $X_{i,t}=Y_{i,t}- V/\sqrt{t M}$, we have
\begin{align}
\label{eq-smart}
& \mathbb P\left( X_{i,t} \in B_x(2\delta)|W_1,...,W_t\right) \nonumber\\
& \geq  \mathbb P\left( Y_{i,t} \in B_x(\delta),V/{\sqrt {t M}} \in B_x(\delta) |W_1,...,W_t\right)\nonumber\\
& \geq  \nu_{t,M}\left(B_x(\delta)\right) - \mathbb P\left( V/ \sqrt{tM} \notin B_x(\delta) \right).
\end{align}
From~\eqref{eq-delta-ball-for-regularized}, the rate for the probability of the first term in~\eqref{eq-smart} is at most $N I(x)$. On the other hand, the probability that the norm of $V$ is greater than $\sqrt {tM} \delta$ decays
exponentially with $t$ at the rate $M \delta^2/2$,
 \begin{equation}
 \label{eq-Gauss-term}
\lim_{t\rightarrow +\infty}\frac{1}{t}\log \mathbb P\left( V/\sqrt {tM} \in B_x(\delta) \right)= - \frac{M \delta^2}{2}.
 \end{equation}
Observe now that, for any fixed $\delta$, for all $M$ large enough so that $NI(x) < \frac{M \delta^2}{2}$,
the exponential decay of the difference in~\eqref{eq-smart} is determined by the rate of the first term, $N I(x)$.
This finally establishes~\eqref{eq-it-suffices-to-consider-delta-balls}, which combined with~\ref{lemma-x-in-D^o-conditioning} proves~\eqref{eq-universal-bounds-lower}.

\section{Simulation results}
\label{sec-Simul}

\begin{figure*}[t]
\vspace{-0.5cm}
\begin{center}$
\begin{array}{lll}
\adjustbox{trim={.085\width} {.2\height} {0.12\width} {.22\height},clip}%
  {\includegraphics[width=0.4\linewidth]{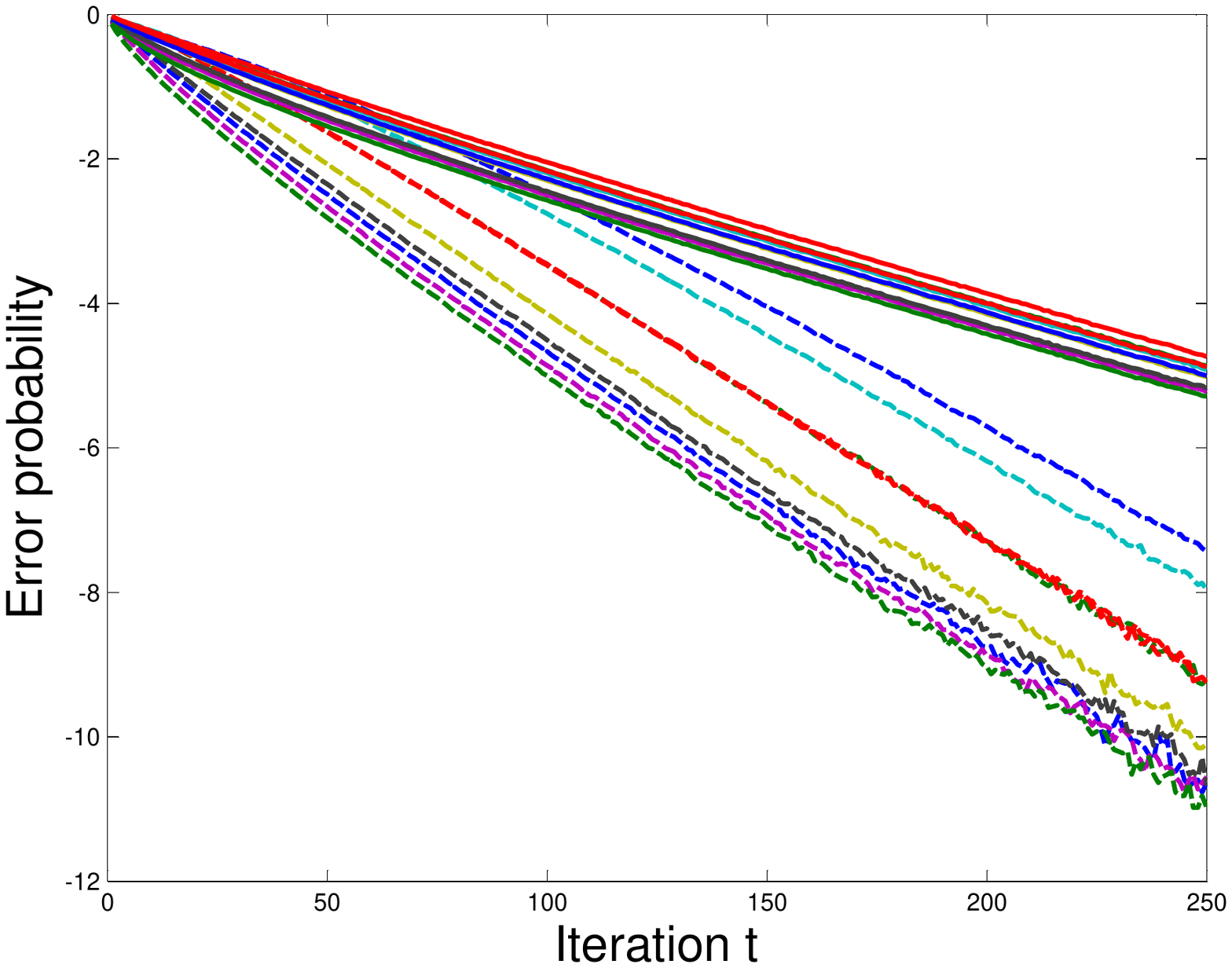}}&
  \adjustbox{trim={.04\width} {0\height} {0.14\width} {0\height},clip}%
{\includegraphics[width=0.39\linewidth]{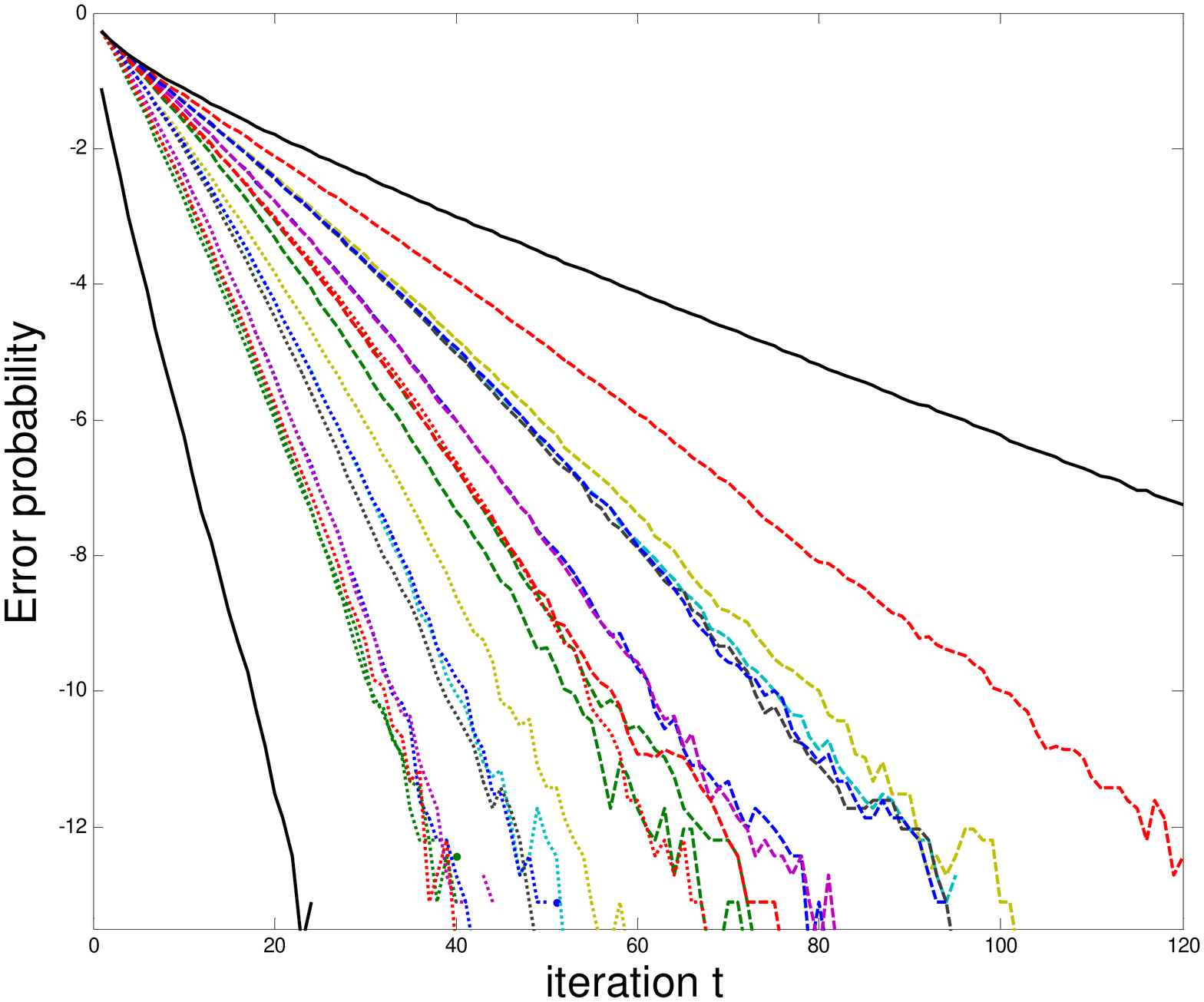}}&
\adjustbox{trim={.055\width} {0\height} {0.03\width} {0\height},clip}%
{\includegraphics[width=0.4\linewidth]{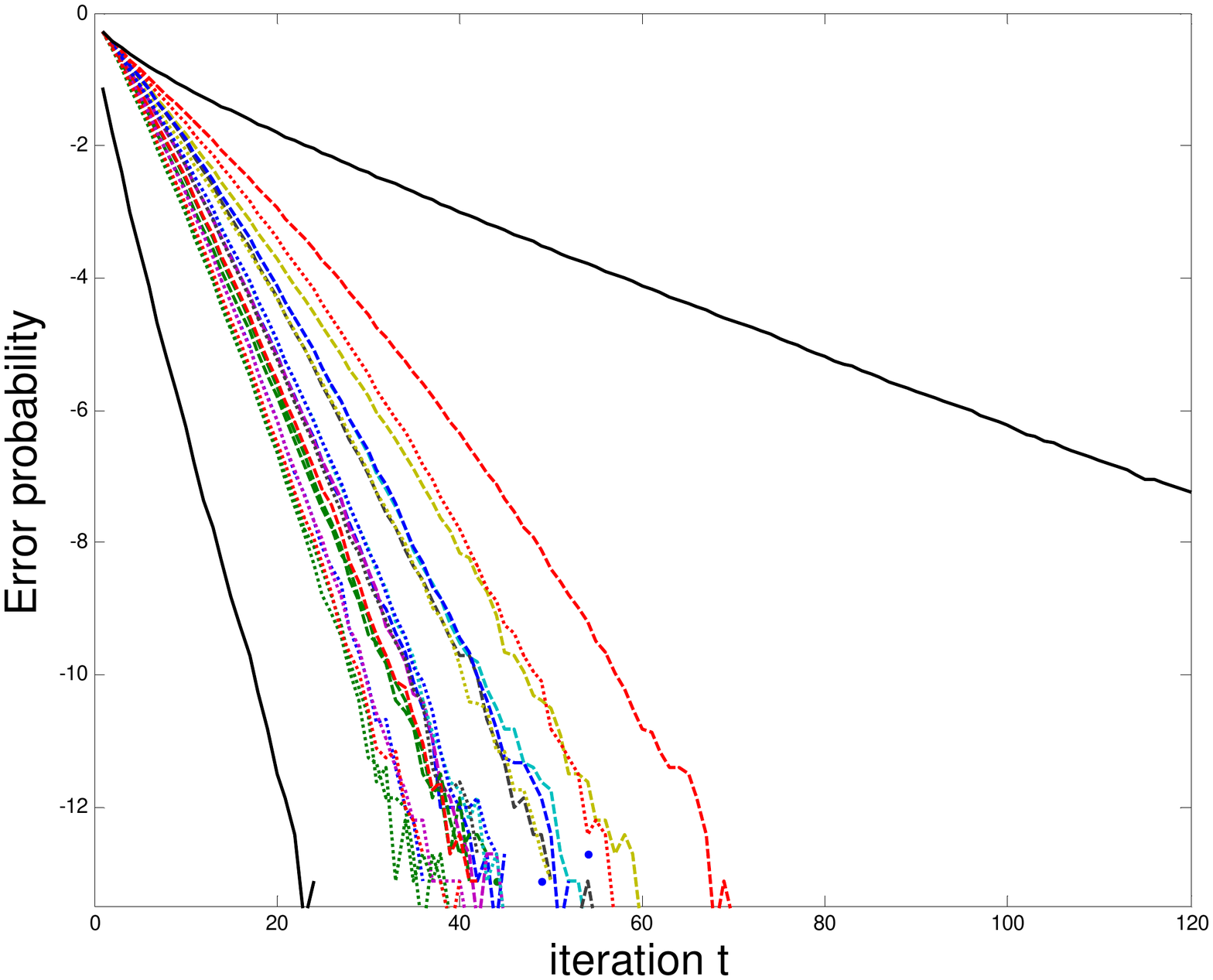}}
\end{array}$
\end{center}
\vspace{-0.3cm}
\caption{Estimated error probabilities $\widehat P_{i,t}$ vs. number of iterations $t$, for each $i$. Left (deterministic model): dotted lines correspond to $W_{\mathrm{opt}}$, and full lines correspond to $W_{\mathrm{unif}}$. Middle and Right (random model): dashed curves correspond to the i.i.d. model, dotted curves to the Markov chain model, and full curves to an isolated node (upper) and the fusion center (lower); $p=0.1$, $q_1=q_2=0.3$ (middle) and $p=0.5$, $q_1=0.7$, $q_2=0.1$ (right).}
  \label{Fig-deterministic-and-random}
\end{figure*}

This section presents our simulation results for the performance of algorithm~\eqref{alg-D-INF} for both deterministic and random weight matrices. In the deterministic case, we optimize the weight matrix $A$ by optimizing first its left eigenvector $a$; then we subsequently optimize $A$ such that it achieves the fastest averaging speed (see ahead~\eqref{design-of-W}), subject to the condition on the obtained left eigenvector. We estimate by Monte Carlo simulations the corresponding estimation error. Simulations show that the optimized system significantly outperforms the system where the eigenvector $a$ is uniform and $A$ is doubly stochastic, hence proving the benefit of network design. We then consider randomly time-varying weight matrices and verify by simulations Theorem~\ref{theorem-universal-limits} for the following cases: 1) $W_t$ are i.i.d. in time, with i.i.d. link failures; and 2) link failures of each link in the network, independently from other links, are governed by a Markov chain.

\mypar{Simulation setup} The number of nodes is $N=10$. Communication graph $\widehat G$
 is formed by placing the nodes uniformly at random in a unit square and forming the (biderectional) links between the nodes that are within distance $r=0.4$ from each other. Observations $Z_{i,t}$ are standard Gaussian, for each $i$, with the same expected value $m_i\equiv m$ chosen uniformly at random from the $[0,1]$ interval. In the deterministic case, the variances $S_i=\sigma_i^2$ are different across nodes, whereas in the random case all the nodes have the same variance $S=\sigma^2$. The quantities $S_i$, $i=1,...,N$, and $S$ are chosen uniformly at random, and, in the deterministic case, independently for each $i$, in the $[0,1]$ interval.

\subsection{Network design for the deterministic case}
\label{subsec-simul-design}
In this section, we consider the deterministic case, when the weight matrix is constant at all times, and when the observations are scalar ($d=1$). Since all the nodes have the same expected value $m$, we have that $\widetilde m=m$ (see Theorem~\ref{theorem-deterministic}), and thus all the states $X_{i,t}$ converge (a.s.) to $m$. We wish to find the weight matrix~$A=A_{\mathrm{opt}}$ that achieves this convergence with the fastest possible rate function $\widetilde I$. The accuracy region that we target is $C=[m- \zeta, m+\zeta]$, where we set $\zeta=0.035$.

We obtain $A_{\mathrm{opt}}$ as follows. We first solve problem~\eqref{design-via-SDP} via CVX~\cite{cvx},\cite{gb08} to obtain the optimal left eigenvector $a_{\mathrm{opt}}$ of $A_{\mathrm{opt}}$. Then, we optimize $A$ by minimizing the spectral norm of $A-1_N a_{\mathrm{opt}}^\top$, while respecting the sparsity pattern dictated by the communication graph $\widehat G$, as in~\cite{Boyd-fastavg04}. Hence, $A_{\mathrm{opt}}$ is obtained as the solution of the following optimization problem:
\begin{equation}
\begin{array}[+]{ll}
\mbox{minimize} &\left\| A-1_N a_{\mathrm{opt}}^\top \right\|\\
\mbox{subject to} & A 1_N=1_N\\
& a_{\mathrm{opt}}^\top A =a_{\mathrm{opt}}^\top\\
& A\in  \mathcal A
\end{array},
\label{design-of-W}
\end{equation}
where $\mathcal A:=\left\{A\in \mathbb R_{+}^{N\times N}: A_{ij}= 0,\mbox{\;if\;}(i,j)\notin \widehat G,\: i,j\in V\right\}$; see also Section~7.3 in~\cite{Boyd-fastavg04}. Note that the rate function is dependent on $A$ only through its left eigenvector $a$, but the significance of $\|A-1_N a^\top\|$  is in the finite time performance (i.e., vertical shift of the curves in Figure~\ref{Fig-deterministic-and-random} (bottom) further ahead). For the purpose of comparison, we also solve problem~\eqref{design-of-W} when $a_{\mathrm{opt}}$ is replaced by $a_{\mathrm{unif}}=1/N 1_N$; we denote the corresponding solution by $A_{\mathrm{unif}}$ ($A_{\mathrm{unif}}$ hence represents the doubly stochastic matrix with the fastest averaging on the same topology $\widehat G$ as $A_{\mathrm{opt}}$).

At each node~$i$ and each time $t$, we estimate the probability of error $\widehat P_{i,t}$, by Monte Carlo simulations: we count the number of times that the state of node $i$ at time $t$, $X_{i,t}$, falls outside of the accuracy region $C$, and then we divide this number by the number of simulation runs $K=1000000$, $\widehat P_{i,t}=\frac{1}{K}\sum_{k=1}^K\,1\{X^k_{i,t}\notin C\}$. We do this both for the case when algorithm~\eqref{alg-D-INF} runs with the weight matrix $A_{\mathrm{opt}}$ and when it runs with the weight matrix $A_{\mathrm{unif}}$.

The leftmost plot in Figure~\ref{Fig-deterministic-and-random} plots the evolution of the error probability over iterations, in the log-scale, for each node $i$; dotted lines correspond to $A_{\mathrm{opt}}$ while full lines correspond to $A_{\mathrm{unif}}$.
We can see from Figure~\ref{Fig-deterministic-and-random} (left) that for both $A_{\mathrm{opt}}$ and $A_{\mathrm{unif}}$ the curves at all nodes have the same slope, equal to the value of the corresponding rate function over the set $C$. For the same weight matrix, the vertical shift in different curves (that correspond to different nodes) is due to the difference in the observations parameters (intuitively, nodes with higher variances $\sigma^2_i$ need more time to filter out the noise -- and thus their error probability curves are shifted upwards), and the placement in the network (nodes with more central location in the network converge faster). We can see that the algorithm with the optimized left eigenvector achieves much higher large deviations rate than the one with the uniform eigenvector, as predicted by our theory. For example, for the target error probability of $e^{-5}\approx 0.007$, the optimized system requires around $140$ iterations on average (across nodes), while the system with the uniform vector $a$ needs around $250$ iterations for the same accuracy. The reason for this behavior is quite intuitive: optimizing the vector $a$ corresponds to choosing different weights for different sensors depending on their local variances (i.e., covariance matrices, when $d>1$).

%

\subsection{Random weight matrices}
\label{subsec-simul-random}
This subsection considers random weight matrices $W_t$ for two cases: i.i.d. link failures and Markov chain link failures. With the i.i.d. model, each directed link $(i,j)\in \widehat E$ can fail with probability $1-p$ at any given time $t$; this occurs independently from other link failures and independently from past times.
With the Markov chain model, each link $(i,j)\in \widehat G$ behaves as a Markov chain, independent from the Markov chains of other links, such that with probability $q_1$ the link stays online, if it was online in the previous time slot, and with probability $q_2$ stays offline. (For example, if at time $t$ a link is online, then at time $t+1$ this link stays online with probability $q_1$ and fails with probability $1-q_1$). 
With both i.i.d. and the Markov chain model, the weight matrix at time $t$ equals $W_t= I_N - \alpha L_t$, where $L_t$ is the Laplacian of the (directed) topology realization at time $t$, $\alpha=1/(d_{\max}+1)$, and $d_{\max}$ is the maximal degree in $\widehat G$.

The middle and the right plot in Figure~\ref{Fig-deterministic-and-random} show the estimated error probabilities versus the number of iterations for both the i.i.d. and the Markov chain model, for two different sets of parameters: $p=0.1$, $q_1=q_2=0.3$ (left) and $p=0.5$, $q_1=0.7$, $q_2=0.1$ (right). Both simulations are obtained for the same value of accuracy $\zeta=0.1$, and one-dimensional Gaussian observations with parameters $m$ and $S=\sigma^2$  chosen uniformly at random from the $[0,1]$ interval. The results for the i.i.d. model are plotted in dashed lines, while the results for the Markov chain model are plotted in dotted lines. For reference, we also plot the estimated error probabilities for perfect fusion and isolation (full lines), see Section~\ref{subsec-isolation-and-fusion}; the lower curve corresponds to fusion.
We can see from the plots that, under both models, the rate at which the error probability at each node decays is between the decay rate of the isolated node and fusion center curves, as predicted by Theorem~\ref{theorem-universal-limits}. We can also see that the agents' decay rates for the Markov chain model are faster than the ones for the i.i.d. model. This is expected since, for both sets of parameters, links in the i.i.d. model are online less frequently than the links in the Markov chain model, once the system reaches a stationary regime. Also, we see that improvements in the system parameters (higher $p$, in the i.i.d. model, and higher $q_1$ and lower $q_2$ in the Markov chain model) significantly affect the large deviations rates: in the right plot, the rates at each node got closer to the optimal, fusion center rate.


\section{Conclusion}
\label{sec-Concl}
We studied large deviations rates for consensus based distributed inference, for deterministic and random asymmetric weight matrices. For the deterministic case, we characterized the corresponding large deviations rate function, and we showed that it depends on the weight matrix only through its left eigenvector that corresponds to its unit eigenvalue. When the observations are Gaussian (not necessarily identically distributed across agents), the rate function has a closed form expression. Motivated by these insights, we formulate the optimal weight matrix design problem and show that, in the Gaussian case, it can be formulated as an SDP and hence efficiently solved. When the weight matrices are random, we prove that the rate functions of any node in the network lie between the rate functions corresponding to a fusion node, that processes all observations, and a node in isolation. The bounds hold for any random model of weight matrices, with the single condition that the weight matrices are independent from the agents' observations.

\begin{appendices}

\section{Proof of~Lemma~\ref{lemma-step-1}}
\label{app:finite-cover-compact-sets}
For every $\delta>0$, define $I^{\delta}: \mathbb R^d \mapsto \mathbb R$, $I^{\delta}(x):= \min\{I(x)-\delta, \frac{1}{\delta}\}$, and note that for any $D\subseteq \mathbb R^d$,
\begin{equation}
\label{delta-I-property}
\lim_{\delta\rightarrow 0} \inf_{x\in D}I^{\delta}(x)= \inf_{x\in D} I(x).
\end{equation}
Fix a compact set $F$. For every $y\in F$, choose $\lambda_y\in \mathbb R^d$ for which $\lambda_y^\top q - \Lambda(\lambda_y)\geq I^{\delta}(y)$~\footnote{Such a point must exist because of the following: If $I(y)$ is finite, then, since $I(y)$ equals the supremum $\sup_{\lambda \in \mathbb R^d} \lambda^\top q - \Lambda(\lambda)$, for every $\delta>0$, there must exist a point $\lambda^\prime=\lambda^\prime(\delta)$ such that ${\lambda^\prime}^\top y - \Lambda(\lambda^\prime)\geq I(y)-\delta$. Since $I(y)-\delta \geq I(y)$, taking $\lambda_y$ to be $\lambda^\prime(\delta)$ verifies the inequality. We can show in a similar way the existence of $\lambda_y$ in the case when $I(y)=+\infty$.}
Also, for each $y$ choose $\rho_y>0$ such that $\rho_y \|\lambda_y\|\leq \delta $.

Now, fix arbitrary $y\in F$. Then, by construction of $\rho_y$ and $\lambda_y$, we have:
\begin{equation*}
- \inf_{x\in B_{y}(\rho_y)} \lambda^\top x \leq  - \lambda_y^\top y + \delta.
\end{equation*}
Applying~\eqref{eq-it-suf-ball} for $D=B_{y}(\rho_y)$ and $\lambda=\lambda_y$ and combining it with the preceding equation yields
\begin{equation}
\label{eq-it-suf-ball-2}
\frac{1}{t}\log \mathbb P\left(X_{i,t}\in B_{y}(\rho_y)\right)\leq \delta - \lambda_y^\top y + \Lambda(\lambda_y).
\end{equation}
Extracting a finite cover $\{B_{y_i}(\rho_{y_i}): i=1,...,K\}$ of $F$ from the family of balls $\{B_{y}(\rho_y): y\in F\}$, and applying~\eqref{eq-it-suf-ball-2} to each of the balls,  we obtain by the union bound
\begin{equation*}
\frac{1}{t}\log \mathbb P\left(X_{i,t}\in F\right) \leq  \frac{1}{t}\log K + \delta - \min_{i=1,\ldots,K} \lambda_{y_i}^\top y_i -\Lambda(\lambda_{y_i}).
\end{equation*}
Recalling that for each $y$, $\lambda_y$ satisfies $\lambda_{y}^\top y -\Lambda(\lambda_{y})=I^\delta(y)$, and letting $t\rightarrow +\infty$,
\begin{equation*}
\limsup_{t\rightarrow +\infty}\frac{1}{t}\log \mathbb P\left(X_{i,t}\in F\right) \leq \delta - \min_{i=1,\ldots,K} I^{\delta}(y_i) \leq \delta - \inf_{y\in F} I^{\delta}(y).
\end{equation*}
Finally, letting $\delta\rightarrow 0$ and using the property~\eqref{delta-I-property} of $I^\delta$, the bound~\eqref{suffices-upper-bound} for compact sets follows.

\section{Proof of Exponential tightness of $\left\{\mu_{i,t}\right\}_{t=1,2,...}$}
\label{app:exponential-tightness}
This section proves Lemma~\ref{lemma-exp-tightness}.
Fix $i$ and, for each $t$ and $l$, $l=1,...,d$, define $\mu_{i,t}^l$ to be the probability measure on $\mathbb R$ induced by the $l$-th coordinate of vector $X_{i,t}$,
\[\mu_{i,t}^l\left( (-\infty,\rho]\right):=\mathbb P\left( X_{i,t}^l \leq \rho\right),\]
for $\rho\in \mathbb R$. For each $l$ let $\Lambda^l$ denote the log-moment generating function of $Z_{i,t}^l$, $\Lambda^l(\nu):=\log \mathbb E\left[e^{\nu Z_{i,t}^l} \right]$, $\nu \in \mathbb R$; note  that $\Lambda^l(\nu)= \Lambda (\nu e_l)$. Also, let $I^l$ denote the conjugate of $\Lambda^l$,
\begin{equation}
I^l(\rho)=\sup_{\nu\in \mathbb R} \rho \nu -\Lambda^l(\nu).
\end{equation}
Now, fix $\rho>0$. By the union bound, we have
\begin{equation}
\label{eq-complement-union-bound}
\mu_{i,t}\left( H_\rho^c\right) \leq \sum_{l=1}^d \mu_{i,t}^l((-\infty,-\rho])+ \sum_{l=1}^d \mu_{i,t}^l([\rho,+\infty]).
\end{equation}
We focus on the term on the right-hand side sum that corresponds to a fixed $l$. For any fixed $\nu \geq 0$, we have
\begin{equation*}
\mathbb P\left(X_{i,t}^l \geq \rho\right) \leq \mathbb E\left[e^{  t \nu X_{i,t}^l - t \rho \nu}\right].
\end{equation*}
Similarly as in eqs.~\eqref{eq-independence}, conditioning on $W_1,...W_t$, we obtain:
\begin{align*}
\mathbb E\left[\left.e^{  t \nu X_{i,t}^l}\right| W_1,\ldots,W_t\right]  &= \mathbb E\left[\left.e^{  \sum_{s=1}^t \sum_{j=1}^N \nu e_l^\top [\Phi(t,s)]_{ij}Z_{j,s} }\right| W_1,\ldots,W_t\right]\\
&=e^{\sum_{s=1}^t \sum_{j=1}^N \Lambda^l( [\Phi(t,s)]_{ij} \nu )}\\
&=e^{\sum_{s=1}^t \sum_{j=1}^N \Lambda( [\Phi(t,s)]_{ij} \nu e_l)},
\end{align*}
where the second equality follows by the fact that, given $W_1,\ldots,W_t$, terms $ \nu e_l^\top [\Phi(t,s)]_{ij}Z_{j,s}$ in the double sum above are independent. Applying now Lemma~\ref{simple-lemma} for $\lambda=\nu e_l$, and using the fact that $\Lambda^l(\nu)=\Lambda(\nu e_l)$ yields
\begin{equation*}
\mathbb E\left[\left.e^{  t \nu X_{i,t}^l}\right| W_1,\ldots,W_t\right] \leq e^{t \Lambda^l(\nu)}.
\end{equation*}
Combining the preceding three equations together with the monotonicity of the expectation, we obtain
\begin{equation}
\label{need-this}
\frac{1}{t}\log \mathbb P\left(X_{i,t}^l \geq \rho\right) \leq \Lambda^l(\nu) -  \rho \nu.
\end{equation}
We show that if $\rho > e_l^\top \theta= \theta_l$, the infimum of the right hand side of~\eqref{need-this} over all $\nu\geq 0$ equals $- I^l(\rho)$. To prove this, it suffices to show that if $\rho\geq \theta_l$, the supremum is not achieved for the negative values of $\nu$. Function $\Lambda^l$ is convex and differentiable for all $\nu$, and in particular at $\nu=0$ (as a log-moment generating function, see Lemma~\ref{lemma-properties-of-lmgf}). Thus, for any $\nu$, $\Lambda^l(\nu)\geq \Lambda^l(0)+ (\Lambda^l)^\prime(0)\nu= \theta_l \nu$. Thus, for $\nu <0$, we have $\rho \nu -\Lambda^l(\nu)\leq \nu (\rho-\theta_l)< 0$. Since we know that $I^l$ must be non-negative (see Lemma~\ref{lemma-properties-of-lmgf}), the claim above follows. Thus, for all $\rho\geq \theta_l$, we have:
\begin{equation}
\label{need-this-2}
\frac{1}{t}\log  \mu_{i,t}^l([\rho,+\infty]) \leq - I^l(\rho).
\end{equation}
By a similar procedure, one can also obtain that $\frac{1}{t}\log \mu_{i,t}^l((-\infty,-\rho])\leq - I_l(-\rho).$

Now, recall that by Assumption~\ref{ass-finite-at-all-points}, $\mathcal D_{\Lambda}=\mathbb R^d$; hence, $\mathcal D_{\Lambda^l}=\mathbb R$. Then, for any $\rho$
\begin{equation*}
I_l(\rho)=\sup_{\nu\in \mathbb R} \nu x - \Lambda_l(\nu) \geq  \nu |\rho| - \inf_{\nu: |\nu| \leq \nu_0} \Lambda_l(\nu),
\end{equation*}
where $\nu_0$ is an arbitrary positive number. Noting that the second term on the right hand side is finite, we see that $I_l$ grows unbounded as $|\rho|\rightarrow +\infty$. Since $l$ was arbitrary, we have that each of the exponents in~\eqref{eq-complement-union-bound} grows unbounded as $\rho$ increases to $+\infty$. This completes the proof of Lemma~\ref{lemma-exp-tightness}.

\section{Proof of Lemma~\ref{lemma-x-in-D^o-conditioning}}
\label{app:proof-of-lemma-x-in-D^o-conditioning}
Fix a measurable set $D$. We first show that if~\eqref{eq-it-suffices-to-consider-delta-balls} holds for any $x\in D^{\mathrm{o}}$ and any $\omega\in \Omega$, then for any $x\in D^{\mathrm{o}}$
\begin{equation}
\label{eq-sufficient-0}
\lim_{\delta\rightarrow 0}\liminf_{t\rightarrow +\infty}\frac{1}{t}\log \mathbb P\left( X_{i,t}\in B_x(\delta)\right) \geq - N I(x).
\end{equation}
To this end, fix $x\in D^{\mathrm{o}}$ and fix $\omega\in \Omega$.

Applying Fatou's lemma~\cite{Karr} to the sequence of random variables $R_t:=\frac{1}{t}\log \mathbb P\left(X_{i,t}\in D | W_1,\ldots,W_t\right)$, $t=1,2,\ldots$, we get
\begin{equation}
\label{eq-ineq-obtained-from-Fatous}
\liminf_{t\rightarrow +\infty}
\mathbb E \left[ \frac{1}{t}\log \mathbb P\left(X_{i,t}\in D | W_1,\ldots,W_t\right) \right]
\geq
\mathbb E \left[R^\star \right].
\end{equation}
where $R^\star(\omega):= \liminf_{t\rightarrow+\infty} R_t (\omega)$, $\omega\in \Omega$. Consider the left-hand side of~\eqref{eq-ineq-obtained-from-Fatous}. By linearity of the expectation and concavity of the logarithmic function, we have
\begin{align*}
 \mathbb E \left[ \frac{1}{t}\log \mathbb P\left(X_{i,t}\in D | W_1,\ldots,W_t\right) \right]
 \leq  \frac{1}{t}\log \mathbb E \left[ \mathbb P\left(X_{i,t}\in D | W_1,\ldots,W_t\right) \right]= \frac{1}{t}\log \mathbb P\left(X_{i,t}\in D\right).
\end{align*}
Taking the lim inf as $t\rightarrow +\infty$ on both sides of the preceding inequality and combining the result with~\eqref{eq-ineq-obtained-from-Fatous}, yields:
\begin{equation}
\label{eq-combining-concavity-with-Fatou}
\liminf_{t\rightarrow +\infty} \frac{1}{t}\log \mathbb P\left(X_{i,t}\in D\right)
\geq \mathbb E \left[R^\star \right].
\end{equation}
We now focus on the random variable $R_t$. Note that we assumed that $D^{\mathrm{o}}$ is nonempty (if the interior of $D$ is empty, the lower bound~\eqref{eq-universal-bounds-lower} holds trivially). Since $D^{\mathrm{o}}$ is open, for any $x \in D^{\mathrm{o}}$, we can find a small neighborhood $B_{x}(\delta_0)$ that is fully contained in $D^{\mathrm{o}}$ (where $\delta_0=\delta_0(x)$). Hence, for all $\delta \leq \delta_0$, we have $B_x(\delta) \subseteq D^{\mathrm{o}}\subseteq D$, and thus, for any fixed $\omega \in \Omega$
\begin{equation}
\label{eq-open-ball-inside-D-and-R_t}
R_t \geq \frac{1}{t}\log\mathbb P\left( X_{i,t}\in B_x(\delta)| W_1,...,W_t\right)
\end{equation}
(we used here that the logarithmic function is non-decreasing). Since~\eqref{eq-open-ball-inside-D-and-R_t} holds for all $t$ and all $\delta$ sufficiently small, taking the corresponding limits yields
\begin{equation*}
R^\star \geq \lim_{\delta\rightarrow 0}\liminf_{t\rightarrow +\infty} \frac{1}{t}\log\mathbb P\left( X_{i,t}\in B_x(\delta)| W_1,...,W_t\right).
\end{equation*}
Using now the assumption~\eqref{eq-it-suffices-to-consider-delta-balls} of the lemma to bound the right-hand side of the preceding inequality, we obtain $R^\star \geq - N I(x)$, which, we note, holds for every point $x$ in $D^{\mathrm{o}}$.
 Taking the supremum over all $x\in D^{\mathrm{o}}$, we obtain that for every $\omega \in \Omega$,
\begin{equation}
\label{eq-sup-over-D^o}
R^\star \geq - \inf_{x\in D^{\mathrm{o}}} N I(x).
\end{equation}
Taking the expectation in the left-hand side, and combining with~\eqref{eq-combining-concavity-with-Fatou}, we finally obtain the lower bound~\eqref{eq-universal-bounds-lower}:
\begin{equation*}
\label{eq-combining-concavity-with-Fatou}
\liminf_{t\rightarrow +\infty} \frac{1}{t}\log \mathbb P\left(X_{i,t}\in D\right)
\geq -\inf_{x\in D^{\mathrm{o}}} N I(x).
\end{equation*}
Since $D$ was arbitrary, the claim of Lemma~\ref{lemma-x-in-D^o-conditioning} is proven.

\section{Proof of Lemma~\ref{lemma-Lambda-t-M-is-nice}}
\label{app:proof-of-lemma-Lambda-t-M}
Being the sum of $\Lambda_t$ and a (convex) quadratic function, $\Lambda_{t,M}$ inherits convexity and differentiability from $\Lambda_t$ (in fact, $\Lambda_{t,M}$ is strictly convex due to strict convexity $\|\lambda\|^2/(2M)$). To prove $1$-coercivity, by convexity of $\Lambda_t$, we have that $\Lambda_t(\lambda)\geq \lambda^\top \theta$. Hence,
\begin{equation*}
\Lambda_{t,M}(\lambda)\geq \lambda^\top \theta + \frac{\|\lambda\|^2}{2M}.
\end{equation*}
Dividing both sides by $\|\lambda\|$ and using in the right hand side that $\lambda^\top \geq -\|\lambda\|\|\theta\|$, we obtain
\begin{equation*}
\frac{\Lambda_{t,M}(\lambda)}{\|\lambda\|} \geq -\|\theta\| + \frac{\|\lambda\|}{2M}\rightarrow +\infty,
\end{equation*}
when $\|\lambda\|\rightarrow +\infty,$ proving that $\Lambda_{t,M}$ is $1$-coercive. Strict convexity, differentiability, and $1$-coercivity of $\Lambda_{t,M}$ imply that the gradient map $\nabla \Lambda_{t,M}$ is a bijection, see, e.g., Corollary 4.1.3 in~\cite{Urruty}, p.~239. This proves the first part~\ref{lemma-Lambda-t-M-part-coercive}.

We now prove part~\ref{lemma-Lambda-t-M-part-unifbdd}. Fix $x$ and fix $t\geq 1$. Note that $\eta_t$ is the maximizer in $I_{t,M}(x)=\sup_{\lambda\in \mathbb R^d} \lambda^\top x - \Lambda_{t,M}(\lambda)$, and thus it holds that $I_{t,M}(x)=\eta_t^\top x - \Lambda_{t,M}(\eta_t)$. Since $\Lambda_t$ is convex (and differentiable), its gradient map is monotone. Hence,
\begin{equation}
\label{start-from-here}
\left(\nabla \Lambda_t(\eta_t) - \nabla \Lambda_t(0)\right)^\top (\eta_t-0)\geq 0.
\end{equation}
We next show that the value of the gradient of $\Lambda_{t}$ at $0$ equals $\theta$. From~\eqref{eq-Lambda-t-equals}, we have
\begin{equation}
\nabla \Lambda_{t}(\lambda) = \frac{1}{t}\sum_{s=1}^t\sum_{j=1}^N [\Phi(t,s)]_{ij} \Lambda([\Phi(t,s)]_{ij} \lambda).
\end{equation}
The gradient of $\Lambda$ at $\lambda=0$ equals $\theta$, see Lemma~\ref{lemma-properties-of-lmgf}. Using the fact that, for each fixed $s$, $\sum_{j=1}^N[\Phi(t,s)]_{ij}=1$, we obtain that $\nabla \Lambda_t(0)=\theta$. Thus, from \eqref{start-from-here} we have
\begin{equation}
\label{continue}
\left(\nabla \Lambda_t(\eta_t) - \theta\right)^\top \eta_t\geq 0.
\end{equation}
Now, note from~\eqref{eq-Lambda-t-equals} that $\nabla \Lambda_t(\lambda) = \nabla \Lambda_{t,M}(\lambda)-\lambda/M$, for arbitrary $\lambda$. Using now the fact $\nabla \Lambda_{t,M}(\lambda)=x$,~\eqref{continue} implies $(x-1/M\eta_t - \theta)^\top \eta_t\geq 0.$
Thus, $(x-\theta)^\top \eta_t \geq \eta_t^\top \eta_t/2$, proving the claim of the lemma for this fixed $t$ and $x$. Since these were arbitrary, the proof of the lemma is complete.

\section{Proof of Lemma~\ref{lemma-widetilde-I-t-M}}
\label{app:proof-xi}
From the fact that $\mathcal D_{\widetilde \Lambda_{t,M}}=\mathbb R^d$, one can show that $\widetilde I_{t,M}$ has compact level sets (note that $\widetilde I_{t,M}$ is lower semicontinuous).
Thus, the infimum in~\eqref{eq-if-we-manage-we-prove} has a solution.
Denote this solution by $w_t$ and let $\zeta_t$ denote a point for which
$w_t= \nabla \widetilde \Lambda_{t,M}\left(\zeta_t\right) \left(= \nabla\Lambda_{t,M}\left(\zeta_t+\eta_t\right)\right)$
(such a point exists by Lemma~\ref{lemma-Lambda-t-M-is-nice}).
We now show that $\left\|w_t \right\|$ is uniformly bounded for all $t$,
 which, combined with part~\ref{lemma-Lambda-t-M-part-unifbdd} of Lemma~\ref{lemma-Lambda-t-M-is-nice}, in turn implies that $\eta_t+\zeta_t$ is uniformly bounded.
 \begin{lemma}
 \label{lemma-w-t-zeta-t-bounded}
 For any fixed $\delta>0$ and $M>0$, there exists $R=R(x,\delta,M)<+\infty$ such that for all $t$:
 \begin{enumerate}
\item $\left\|w_t\right\|\leq R$, and
\item $\left\|\zeta_t + \eta_t\right\| \leq M (R+ \|\theta\|)$.
\end{enumerate}
 \end{lemma}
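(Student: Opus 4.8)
The plan is to exploit the uniform (in $t$) sandwiching of the functions $\Lambda_{t,M}$ afforded by Lemma~\ref{simple-lemma}, so that the entire family $\{\widetilde I_{t,M}\}_{t\geq 1}$ is trapped between two fixed, $t$-independent convex envelopes. Applying inequality~\eqref{eq-simple-lemma} to each probability vector $\bigl([\Phi(t,s)]_{i1},\ldots,[\Phi(t,s)]_{iN}\bigr)$ and averaging over $s$ in~\eqref{eq-Lambda-t-equals} gives $N\Lambda(\lambda/N)\leq \Lambda_t(\lambda)\leq \Lambda(\lambda)$ for all $\lambda$ and all $t$; adding the quadratic term from~\eqref{eq-Lambda-t-m-equals} yields
\begin{equation*}
\Lambda^-(\lambda):=N\Lambda(\lambda/N)+\tfrac{\|\lambda\|^2}{2M}\,\leq\,\Lambda_{t,M}(\lambda)\,\leq\,\Lambda(\lambda)+\tfrac{\|\lambda\|^2}{2M}=:\Lambda^+(\lambda),
\end{equation*}
both envelopes being finite on all of $\mathbb R^d$ by Assumption~\ref{ass-finite-at-all-points}.

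First I would record the clean identity that follows from the definition $\widetilde\Lambda_{t,M}(\lambda)=\Lambda_{t,M}(\lambda+\eta_t)-\Lambda_{t,M}(\eta_t)$ together with the fact that $\eta_t$ is the maximizer defining $I_{t,M}(x)$ (equivalently $\nabla I_{t,M}(x)=\eta_t$), namely
\begin{equation*}
\widetilde I_{t,M}(w)=I_{t,M}(w)-I_{t,M}(x)-\eta_t^\top(w-x),
\end{equation*}
where $I_{t,M}$ is the conjugate of $\Lambda_{t,M}$. Conjugating the sandwich and reversing inequalities gives $(\Lambda^+)^\ast\leq I_{t,M}\leq(\Lambda^-)^\ast$, again uniformly in $t$, and since $\Lambda^+$ is finite everywhere its conjugate $(\Lambda^+)^\ast$ is $1$-coercive. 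Combining these with the uniform bound $\|\eta_t\|\leq M\|x-\theta\|$ from part~\ref{lemma-Lambda-t-M-part-unifbdd} of Lemma~\ref{lemma-Lambda-t-M-is-nice}, I obtain the $t$-free estimate
\begin{equation*}
\widetilde I_{t,M}(w)\geq (\Lambda^+)^\ast(w)-(\Lambda^-)^\ast(x)-M\|x-\theta\|\,\|w-x\|,
\end{equation*}
whose right-hand side tends to $+\infty$ as $\|w\|\to\infty$ because $(\Lambda^+)^\ast$ is $1$-coercive.

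To bound $w_t$ I would, on the one hand, evaluate $\widetilde I_{t,M}$ at an arbitrary fixed reference point $w^\star\in\partial B_x(\delta)$; using $I_{t,M}\geq 0$, the upper bound $I_{t,M}\leq(\Lambda^-)^\ast$, and $\|\eta_t\|\leq M\|x-\theta\|$, the value $\widetilde I_{t,M}(w^\star)$ is majorized by a $t$-independent constant $C=(\Lambda^-)^\ast(w^\star)+M\|x-\theta\|\,\delta$. Since $w_t$ minimizes $\widetilde I_{t,M}$ over $B^{\mathrm c}_x(\delta)\ni w^\star$, we get $\widetilde I_{t,M}(w_t)\leq C$ for all $t$; the coercive lower bound then forces every such $w_t$ into a fixed bounded set, whose radius furnishes the desired $R=R(x,\delta,M)$ and establishes part~(1). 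For part~(2) I would reuse the monotonicity argument from the proof of part~\ref{lemma-Lambda-t-M-part-unifbdd} of Lemma~\ref{lemma-Lambda-t-M-is-nice}: from monotonicity of $\nabla\Lambda_t$, the identity $\nabla\Lambda_t(0)=\theta$, and $\nabla\Lambda_{t,M}(\zeta_t+\eta_t)=w_t$, one derives $(w_t-\theta)^\top(\zeta_t+\eta_t)\geq \|\zeta_t+\eta_t\|^2/M$, whence Cauchy--Schwarz gives $\|\zeta_t+\eta_t\|\leq M\|w_t-\theta\|\leq M(R+\|\theta\|)$.

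The main obstacle is the reduction in the first two paragraphs: one must verify that all $t$-dependence of $\widetilde I_{t,M}$ enters only through the uniformly bounded tilt $\eta_t$ and through functions pinned between the fixed envelopes $(\Lambda^+)^\ast$ and $(\Lambda^-)^\ast$. Everything hinges on the $1$-coercivity of $(\Lambda^+)^\ast$---equivalently, on $\Lambda$ being finite on all of $\mathbb R^d$ (Assumption~\ref{ass-finite-at-all-points})---since without a growth rate for $\widetilde I_{t,M}$ that is uniform in $t$, the minimizers $w_t$ could drift off to infinity.
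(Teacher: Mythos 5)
Your proof is correct, and its engine is the same as the paper's: both arguments sandwich the conditional conjugates between the two $t$-independent, coercive envelopes obtained from Lemma~\ref{simple-lemma} plus the added quadratic (your $(\Lambda^{+})^{\ast}$ and $(\Lambda^{-})^{\ast}$ are exactly the paper's $\underline f_M$ and $\overline f_M$ in~\eqref{eq-overline-f-M-underline-f-M}), bound the optimal value by a fixed constant via a feasible comparison point, and conclude from compactness of a sublevel set; part~(2) is identical in both (apply part~\ref{lemma-Lambda-t-M-part-unifbdd} of Lemma~\ref{lemma-Lambda-t-M-is-nice} at $w_t$). The one genuine difference is where the tilt $\eta_t$ is handled. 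The paper asserts $I_{t,M}(w_t)=\inf_{z\in B^{\mathrm c}_x(\delta)} I_{t,M}(z)$ and works with sublevel sets of $\underline f_M$ directly; but $w_t$ is defined as a minimizer of $\widetilde I_{t,M}$, which differs from $I_{t,M}$ by the affine term $-I_{t,M}(x)-\eta_t^\top(w-x)$, so the two minimizers need not coincide and that step requires the extra bookkeeping the paper leaves implicit. You make this bookkeeping explicit through the identity $\widetilde I_{t,M}(w)=I_{t,M}(w)-I_{t,M}(x)-\eta_t^\top(w-x)$ and absorb the affine perturbation using the uniform bound $\|\eta_t\|\leq M\|x-\theta\|$, both in the coercive lower bound and in the constant $C$ at the reference point $w^\star\in\partial B_x(\delta)$. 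This buys a slightly longer derivation but a strictly cleaner justification of why the minimizers of the \emph{tilted} rate functions stay in a fixed bounded set; the paper's version is shorter but glosses over exactly this point.
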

\begin{proof}
Fix $M>0, \delta>0$. Define $\overline f_M,\underline f_M: \mathbb R^d\mapsto \mathbb R$ as
\begin{align*}
\overline f_M(z) &=\sup_{\lambda \in \mathbb R^d} \lambda^\top z - N \Lambda\left( 1/N \lambda\right)- \frac{\|\lambda\|^2}{2M},\\
\underline f_M(z)&=\sup_{\lambda \in \mathbb R^d} \lambda^\top z - \Lambda(\lambda)- \frac{\|\lambda\|^2}{2M},
\end{align*}
for $z\in \mathbb R^d$. Note that both $\overline f_M, \underline f_M$ are lower semicontinuous, finite for every $z$, and have compact level sets.
Let $c=\inf_{z\in B^{\mathrm c}_x(\delta)}\overline f_M (z)<+\infty$,
and define $S_{c}=\left\{z\in \mathbb R^d: \underline f_M(z)\leq c\right\}$.

Fix arbitrary $t\geq 1$. One can show, with the help of Lemma~\ref{simple-lemma}, that, for any $z\in \mathbb R^d$,
\begin{equation}
\label{eq-overline-f-M-underline-f-M}
\underline f_M(z) \leq I_{t,M}(z) \leq \overline f_M(z).
\end{equation}
Observe now that
$I_{t,M}(w_t)= \inf_{z\in B^{\mathrm c}_x(\delta)} I_{t,M}(z)\leq \inf_{z\in B^{\mathrm c}_x(\delta)} f_M(z)\leq c$.
On the other hand, taking in~\eqref{eq-overline-f-M-underline-f-M} $z=w_t$, yields
 $\underline f_M(w_t) \leq I_{t,M}(w_t)$, and it thus follows that $w_t$ belongs to $S_c$.

Finally, as $S_c$ is compact, we can find a ball of some radius $R=R(x,M,\delta)>0$ that covers $S_c$, implying $w_t\in B_0(R)$.
Since $t$ was arbitrary, the claim in part~1 follows.

We now prove part~2. Recall that, for any $t$, $w_t$ and $\zeta_t+\eta_t$ satisfy $w_t=\nabla \Lambda_{t_0,M}\left( \zeta_t+\eta_t\right)$.
Applying part~\ref{lemma-Lambda-t-M-part-unifbdd} of Lemma~\ref{lemma-Lambda-t-M-is-nice} for $z=w_t$, we have that $\left\| \zeta_t+\eta_t\right\|\leq M \left\| w_t - \theta\right\|$.
Combining this with part~1 of this lemma,
\begin{equation*}
\left\| \zeta_t+\eta_t\right\|\leq M \left\| w_t - \theta\right\|
\leq M \sup_{w\in B_0(R)}\left\| w - \theta\right\|\leq M(R+\|\theta\|).
\end{equation*}
This completes the proof of part~2 and the proof of Lemma~\ref{lemma-w-t-zeta-t-bounded}.
\end{proof}

Fix $x,\delta$ and $M$ and define $r_1= M \left\|z-\theta \right\|$, $r_2= M (R+\|\theta\|)$,
where $R$ is the constant that verifies Lemma~\ref{lemma-w-t-zeta-t-bounded}.
Fix now $t\geq 1$ and recall that $\eta_t$, $\zeta_t$, and $w_t$ are chosen such that
$x=\nabla \Lambda_{k,M}\left( \eta_t\right)$,  $\widetilde I_{t,M} \left(w_t\right)=\inf_{z \in B^{\mathrm{c}}_x(\delta)} I_{t,M}(z)$,
and $w_t=\nabla \Lambda_{t,M}\left( \eta_t+ \zeta_t\right)$.
By part~\ref{lemma-Lambda-t-M-part-unifbdd} of Lemma~\ref{lemma-Lambda-t-M-is-nice} and part~2 of Lemma~\ref{lemma-w-t-zeta-t-bounded} we have for $\eta_t$ and $\zeta_t$,
$\left\|\eta_t\right\|\leq r_1$,  $\left\|\eta_t+\zeta_t\right\|\leq r_2$. To prove Lemma~\ref{lemma-widetilde-I-t-M}, we first show that there exists some
positive constant $r_3$, independent of $t$, such that $\left\|\zeta_t\right\|\geq r_3$ for all $t$. To this end, consider the gradient
map $\lambda \mapsto \nabla \Lambda_{t,M}(\lambda)$, and note that $\nabla \Lambda_{t,M}$ is continuous, and hence
uniformly continuous on every compact set. Note also that
$\left\|\eta_t\right\|,\left\|\eta_t+\zeta_t\right\|\leq \max\{r_1,r_2\}$; that is, points $\eta_t$ and $\eta_t+\zeta_t$ are uniformly bounded for all $t$.
Suppose now, for the sake of contradiction, that for some sequence of times $t_k$, $k=1,2,...$, $\left\|\zeta_{t_k} \right\|\rightarrow 0$, as $k\rightarrow +\infty$. Then,
 $\left\|\left(\eta_{t_k}+ \zeta_{t_k}\right) - \eta_{t_k}\right\|\rightarrow 0$, and hence, by the uniform continuity of $\nabla \Lambda_{t,M}(\cdot)$ on $\overline B_{0}(\max\{r_1,r_2\})$ we have
 \[\left\|\nabla \Lambda_{t,M}(\eta_{t_k}) - \nabla \Lambda_{t,M}(\eta_{t_k}+\zeta_{t_k})\right\|\rightarrow 0,\:\:\mathrm{as}\,\,t\rightarrow \infty.\]
 %
Recalling that
$x=\nabla \Lambda_{t,M}\left(\eta_{t_k}\right)$, $w_{t_k}= \nabla \Lambda_{t,M}\left(\eta_{t_k}\right)$, yields
\[\left\|w_{t_k}-x\right\|\rightarrow 0.\]

This contradicts with the fact that, for all $t$, $w_t \in B^{\mathrm{c}}_{x} (\delta)$. Thus, we proved the existence of $r_3$ independent of $t$
such that $\left\|\zeta_t\right\| \geq r_3$, for all $t$.

Now, let
\[\Upsilon=\left\{(\eta,\zeta) \in \mathbb R^d\times \mathbb R^d:
\|\eta\|\leq r_1,  \|\eta + \zeta \| \leq r_2, \|\zeta\|\geq r_3\right\},\]
and introduce $g:\mathbb R^d\times\mathbb R^d\mapsto \mathbb R$,
\begin{equation}
g(\zeta,\eta)= \Lambda_{t,M}(\eta) - \Lambda_{t,M}(\zeta+ \eta) + \nabla \Lambda_{t,M}(\zeta+ \eta)^\top \zeta.
\end{equation} %
By strict convexity of $\Lambda_{t,M}(\cdot)$, we see that, for any $\eta$ and $\zeta \neq 0$,
the value $g(\eta,\zeta)$ is strictly positive.
Further, note that since $\Lambda_{t,M}$ and $\nabla \Lambda_{t,M}$ are continuous, function $g$ is also continuous.
Consider now
\begin{equation}
\label{eq-infimum-of-g}
\xi:=\inf_{(\eta,\zeta)\in\Upsilon} g(\eta,\zeta).
\end{equation}
Because $\Upsilon$ is compact, by the Weierstrass theorem, the problem in~\eqref{eq-infimum-of-g} has a solution, that is,
there exists $(\eta_0,\zeta_0)\in \Upsilon$, such that $g(\eta_0,\zeta_0)=\xi$.
Finally, because $g$ is strictly positive at each point in $\Upsilon$ (note that $\zeta\neq 0$ in $\Upsilon$),
we conclude that $\xi= g(\eta_0,\zeta_0)>0$.

Returning to the claim of Lemma~\ref{lemma-widetilde-I-t-M}, by Lemma~\ref{lemma-w-t-zeta-t-bounded},  $\left(\eta_t,\eta_t+\zeta_t\right)$ belongs to $\Upsilon$, and, thus,
\begin{align*}
\widetilde I_{t,M} (w_t)& = \Lambda_{t,M}(\eta_t) - \Lambda_{t,M}(\zeta_t+ \eta_t)  +
\nabla \Lambda_{t,M} (\zeta_t+ \eta_t)^\top \zeta_t \\
&= g\left(\eta_t, \zeta_t\right) \geq \xi.
\end{align*}
This completes the proof of Lemma~\ref{lemma-widetilde-I-t-M}.

\end{appendices}

\bibliographystyle{IEEEtran}
\bibliography{IEEEabrv,Bibliography}

\end{document}